\newcommand{\pkg}[1]{{\normalfont\fontseries{b}\selectfont #1}}
\let\proglang=\textsf
\newcommand{\diag}{\textrm{diag}}
\newtheorem{lemma} {Lemma}[section]
\newtheorem{theorem} {Theorem}[section]
\newtheorem{corollary}{Corollary}[section]
\DeclareMathAlphabet\mathbfcal{OMS}{cmsy}{b}{n} 
\begin{document}
	\title{Regularized Multivariate Functional Principal Component Analysis}
	\author{Hossein Haghbin\footnote{Artificial Intelligence and Data Mining Research Group, ICT Research Institute, Faculty of Intelligent Systems Engineering and Data Science, Persian Gulf University, Boushehr, Iran.}\ , Yue Zhao$^\dag$, and Mehdi Maadooliat\footnote{Department of Mathematical and Statistical Sciences, Marquette University, Milwaukee, WI 53233.}}
	\maketitle
\begin{abstract}
Multivariate Functional Principal Component Analysis (MFPCA) is a valuable tool for exploring relationships and identifying shared patterns of variation in multivariate functional data. However, controlling the roughness of the extracted Principal Components (PCs) can be challenging. This paper introduces a novel approach called regularized MFPCA (ReMFPCA) to address this issue and enhance the smoothness and interpretability of the multivariate functional PCs. ReMFPCA incorporates a roughness penalty within a penalized framework, using a parameter vector to regulate the smoothness of each functional variable. The proposed method generates smoothed multivariate functional PCs, providing a concise and interpretable representation of the data. Extensive simulations and real data examples demonstrate the effectiveness of ReMFPCA and its superiority over alternative methods. The proposed approach opens new avenues for analyzing and uncovering relationships in complex multivariate functional datasets.
\end{abstract}  
	\textbf{Keywords:} Hilbert space, Regularization, Multivariate functional data, functional principal component analysis
	
\section{Introduction}
In recent years, there has been a notable surge in interest towards functional data analysis (FDA) owing to its remarkable capability to effectively handle intricate data structures, including longitudinal data, image data, and time series data. Multivariate functional data, characterized by observations composed of vectors of functions, holds significant relevance across diverse applications such as biomedical imaging, environmental science, and finance.

Functional Principal Component Analysis (FPCA) is a key technique in FDA for dimension reduction by capturing the primary modes of variation in functional variables, known as functional principal components (PCs). The foundational contributions to FPCA for univariate functional data can be attributed to the works of \cite{rice1991estimating} and \cite{pezzulli1993some}. For a comprehensive survey of the topic, see \cite{shang2014survey}.	

To mitigate the issue of undesirable fluctuations in functional PCs extracted from (univariate) functional data, various regularization techniques have been proposed \citep{silverman1996smoothed, ramsay2005, huang2008functional, qi2011some,lakraj2017some}. These methods aim to enhance the interpretability and stability of the PCs, ultimately improving the accuracy of the results. Throughout this paper, we refer to \citeauthor{silverman1996smoothed}'s \citeyear{silverman1996smoothed} approach as regularized FPCA (ReFPCA).
	
Multivariate FPCA (MFPCA) extends the concept of FPCA to the multivariate setting, where each observation comprises a vector of functions. MFPCA enables the exploration of relationships between multiple functional variables and the identification of shared patterns of variation. Numerous approaches have been proposed for MFPCA in the literature. \cite{ramsay2005} provides a concise overview of MFPCA for bivariate functional data, while \cite{jacques2014model} apply MFPCA in the context of clustering. \cite{chiou2014multivariate} introduce a normalized MFPCA method that yields a normalized Karhunen-Loeve representation of the data.  \cite{happ2018multivariate} propose a generalized MFPCA framework applicable to functional data across various domains with possibly different dimensions.

Despite the recognized advantages of ReFPCA for univariate functional data, there is a research gap regarding the definition of an appropriate smoothness level for MFPCA. To address this gap, we propose a novel approach called regularized MFPCA (ReMFPCA), which employs a parameter vector to control the smoothness of each functional variable. By incorporating smoothness constraints as penalty terms within a regularized optimization framework, ReMFPCA generates smoothed multivariate functional PCs, leading to a concise and interpretable representation of the multivariate functional data. We validate the effectiveness of ReMFPCA through simulations and real data examples, comparing its performance with that of MFPCA, marginal FPCA and ReFPCA methods.

The paper is organized as follows. Section \ref{notation} introduces the preliminary notations, while Section \ref{MFPCA} offers a concise review of the relevant literature. The proposed ReMFPCA method and its properties are presented in Section \ref{ReMFPCA}, followed by implementation details in Section \ref{implement}. To demonstrate the effectiveness of the proposed method, Sections \ref{simul} and \ref{application} provide simulation results and a real data example, respectively. Lastly, Section \ref{conclusion} concludes the paper, highlighting key findings, and suggests potential avenues for future research.

\section{Preliminary Notations}\label{notation}
We consider $\mathcal{T}_j,\ j=1,\ldots, p,$ as a compact subset of $\mathbb{R}$, where $p$ denotes the number of variables. Let's define $H_j:=L^2(\mathcal{T}_j)$, consisting of measurable functions $x(\cdot)$ on $\mathcal{T}_j$ with $\int_{\mathcal{T}_j} |x(t)|^2 dt < \infty$. $H_j$ is a Hilbert space equipped with the inner product $\langle x , y \rangle_{H_j}=\int_{\mathcal{T}_j} x(t)y(t) dt$ and the norm $\Vert x \Vert_{H_j}=\sqrt{\langle x , x \rangle_{H_j}}$. The tensor (outer) product $x\otimes y$ corresponds to the operator $x\otimes y: 
{H}_i\rightarrow {H}_j$, defined as ($x\otimes y)h:=\langle x, h \rangle y$, where $h\in {H}_i$.

Functional random variables is represented by univariate random processes $X_j=\{X_j(t): \ t\in\mathcal{T}_j\}$, with mean function $\mu_j(\cdot)$ and finite covariance function $c_j(s,t)=\text{cov}\left(X_j(s),X_j(t)\right)$. The space $L^2_{H_j}$ refers to the space of $H_j$-valued random variables $X_j$ satisfying $E(\Vert X_j\Vert_{H_j}^2)<\infty$. This space is a Hilbert space equipped  with the inner product $\langle X,Y\rangle_{L^2_{H_j}}=E\langle X,Y\rangle_{{H}_j}, \ X, Y\in L^2_{H_j}$. Let $c_{i,j}(s,t)=\text{cov}\left(X_i(s),X_j(t)\right)$ for $X_i\in L^2_{H_i}$ and $X_j\in L^2_{H_j}$, the associated cross-covariance operator is an integral operator with respect to the kernel function $c_{i,j}(\cdot,\cdot)$ i.e.,
		\begin{equation*}
			\mathcal{C}_{i,j}(f):=\int_{\mathcal{T}_j} c_{i,j}(t,\cdot)f(t) dt,\ f \in H_i.
		\end{equation*}  
The cartesian product space $\mathbb{H}=H_1\times H_2\times \cdots \times H_p$ is a Hilbert space equipped with the inner product $\langle {\pmb x,\pmb y} \rangle_\mathbb{H}=\sum_{j=1}^p\langle  x_j,y_j \rangle_{H_j}$. The data of interest in this paper consists of multivariate functional random variables ${\pmb X}=\left( X_1, X_2, \ldots, X_p\right)$, belonging to $L^2_\mathbb{H}$, with a domain $\mathbfcal{T}=\mathcal{T}_1\times \cdots \times\mathcal{T}_p$. The covariance operator of ${\pmb X}$ will de defined by  $\mathbfcal{C}: \mathbb{H}\rightarrow \mathbb{H}$ as 
		\begin{align*}
			\mathbfcal{C}(\pmb f)(\pmb t):=
			\begin{bmatrix}
				\sum_{j=1}^p\int_{\mathcal{T}_j} c_{1,j}(s_j, t_1)f_j(s_j)ds_j\\	
				\vdots\\
				\sum_{j=1}^p\int_{\mathcal{T}_j} c_{p,j}(s_j, t_p)f_j(s_j)ds_j
			\end{bmatrix},
		\end{align*}
where $\pmb{t}=(t_1, \ldots, t_p)\in\mathbfcal{T}$ and $\pmb{f}=(f_1, \ldots, f_p)\in\mathbb{H}.$ This definition implies that $\mathbfcal{C}$ is linear, self-adjoint and positive operator \citep{happ2018multivariate}.

The Sobolev space $W^2_j$ is defined as
		\begin{equation*}
			W^2_j:=\left\{
			x(.): x\ and\ x'\ are\ absolutely\ continuous\ on\ \mathcal{T}_j\ and\ 
			x''\in H_j
			\right\},
		\end{equation*}	
where $x'$ and $x''$ denote the first and second derivatives of $x$, respectively. We introduce the bilinear form $[x, y]_{H_j} := \langle x'', y'' \rangle_{H_j}$  for elements $x, y \in W^2_j$. When $g'''' \in H_j$, it can be demonstrated that $[x, y]_{H_j} = \langle x, y'''' \rangle_{H_j}$\citep{silverman1996smoothed}. Given a smoothing parameter $\alpha_j > 0$, we define the inner product $\langle x, y \rangle_{\alpha_j}$ as
		$$\langle { x, y} \rangle_{\alpha_j}:=\langle { x, y} \rangle_{H_j}+\alpha_j[ { x, y} ]_{H_j}.$$ 
The corresponding norm associated with this inner product is $\Vert x \Vert_{\alpha_j} := \sqrt{\langle x, y \rangle_{\alpha_j}}$. The cartesian product of Sobolev spaces is denoted as $\mathbb{W}^2=W^2_1\times W^2_2\times \cdots \times W^2_p$. Given a smoothing parameter vector $\pmb{\alpha}=(\alpha_1,\alpha_2, \ldots, \alpha_p) \in \mathbb{R}^{p+}$, the inner product $\langle {\pmb x,\pmb y} \rangle_{\pmb \alpha}$ is defined as
		\begin{equation*}
			\langle {\pmb x,\pmb y} \rangle_{\pmb \alpha}:=\sum_{j=1}^p\langle { x_j, y_j} \rangle_{\alpha_j}=	\langle {\pmb x,\pmb y} \rangle_\mathbb{H}+\sum_{j=1}^p\alpha_j[ { x_j, y_j} ]_{H_j},
		\end{equation*}
where ${\pmb x ,\pmb y}\in \mathbb{W}^2.$ The corresponding norm is denoted by $\Vert {\pmb x} \Vert_{ \pmb \alpha}:=\sqrt{\sum_{j=1}^p\Vert {x_j} \Vert_{ \alpha_j}^2}.$	
	
\section{Multivariate functional PCA} \label{MFPCA}
In this paper we consider ${\pmb X}_1, \ldots, {\pmb X}_n \in L^2_\mathbb{H}$ are independent and identically distributed (iid) observations defined on the set $\mathbfcal{T}$. We aim to address the standard FPCA problem, which seeks a linear functional $\ell$ by maximizing the variance of $\ell(\pmb{X})$, subject to $\Vert \ell\Vert_{\mathbb{H}^*} = 1$, where $\mathbb{H}^*$ denotes the dual space of bounded linear functionals on $\mathbb{H}$. Applying the Riesz representation theorem, there is a unique $\pmb{\varphi}\in \mathbb{H}$ associated with any $\ell\in\mathbb{H}^*$, such that $\ell(\pmb{X})=\langle \pmb{X} , \pmb{\varphi} \rangle_{\mathbb{H}}$ and $\Vert \ell\Vert_{\mathbb{H}^*} = \Vert \pmb{\varphi}\Vert_{\mathbb{H}}$. Consequently, the variance of $\ell(\pmb{X})$ can be expressed as $\mathbb{E}\left(\langle \pmb{X-\mu} , \pmb{\varphi} \rangle_{\mathbb{H}}^2\right)= \langle \mathbfcal{C}\pmb{\varphi} , \pmb{\varphi} \rangle_{\mathbb{H}}$.

The MFPCA problem involves finding the first multivariate functional PC, denoted as $\pmb{\psi}_1\in {\mathbb{H}}$, which solves the optimization problem:
		\begin{equation}
			\label{mfpca1}
			\max_{\pmb{\varphi}:\Vert\pmb{\varphi}\Vert_{\mathbb{H}}=1}{\dfrac{\langle 
			\mathbfcal{C}\pmb{\varphi} , \pmb{\varphi} \rangle_{\mathbb{H}}}{\Vert 
			\pmb{\varphi} \Vert_{\mathbb{H}}^2}}.
		\end{equation}
Furthermore for $j>1$, the $j$-th multivariate functional PC, denoted as $\pmb{\psi}_j\in {\mathbb{H}}$, is obtained by solving the following optimization problem:
		\begin{equation}
			\label{mfpca2}
			\max_{\pmb{\varphi}:\Vert\pmb{\varphi}\Vert_{\mathbb{H}}=1, \langle 
			\pmb{\varphi} , \pmb{\varphi}_i \rangle_{\mathbb{H}}=0,\ i=1,\ldots, j-1} 
			{\dfrac{\langle \mathbfcal{C}\pmb{\varphi} , \pmb{\varphi} 
			\rangle_{\mathbb{H}}}{\Vert \pmb{\varphi} \Vert_{\mathbb{H}}^2}}.
		\end{equation}
The solutions to the successive optimization problems \eqref{mfpca1} and \eqref{mfpca2} correspond to the leading solutions of the eigenfunction equation, ${\mathbfcal{C}\pmb{\varphi}} =\lambda  \pmb{\varphi}$. Since the covariance operator $\mathbfcal{C}$ is typically unknown, direct estimation of the population multivariate functional PCs is not feasible. Instead, researchers rely on the sample covariance function ${\widehat{\mathbfcal{C}}}$ to estimate $\mathbfcal{C}$ and use the eigenvalues and eigenfunctions of ${{\widehat{\mathbfcal{C}}}}$ as estimates of those of $\mathbfcal{C}$. Specifically consider a sample $\pmb{x}_1,\ldots,\pmb x_n$ of observations from $\pmb{X}_1,\ldots,\pmb X_n$, where $\pmb{x}_i=(x_{i,1},\ldots, x_{i,p})^\top$,  without loss of generality we  assume that the $\pmb{x}_i$'s are centered (i.e., their sample mean is zero). The sample estimator of the covariance operator $\mathbfcal{C}$ is given by:
		\begin{equation*}
			{\widehat{\mathbfcal{C}}}=\dfrac{1}{n-1}\sum_{i=1}^n 
			\pmb{x}_i\pmb{\otimes}\pmb{x}_i.
		\end{equation*}

It has been demonstrated that the number of sample PCs with positive eigenvalues, denoted as $k$, satisfies $k\leq n$, where $n$ is the sample size. The collection of sample multivariate functional PCs is denoted as $\lbrace \hat{\pmb\psi}_i: i=1,\ldots, k \rbrace$, and the corresponding estimated eigenvalues is denoted as $\lbrace\hat{\lambda}_i: i=1,\ldots, k \rbrace$.

\section{Regularized Formulation}\label{ReMFPCA}
The ReFPCA method, proposed by \cite{silverman1996smoothed} (see  \cite{ramsay2005} Chapter 9 for more details), introduces smoothing by modifying the norm used in traditional FPCA. To develop a similar approach for the multivariate case, we define the regularized multivariate functional PC as the solution to the following optimization problem:
		\begin{equation}
			\label{e0}
			\max_{\pmb{\varphi}:\Vert\pmb{\varphi}\Vert_{\pmb\alpha}=1}\dfrac{\langle 
			\mathbfcal{C}\pmb{\varphi} , \pmb{\varphi} \rangle_{\mathbb{H}}}{
				\sum_{j=1}^p\left( \Vert {\varphi_j } \Vert_{H_j}^2+ \alpha_j 
				[\varphi_j]_{H_j}^2\right)},
		\end{equation}
where $\mathcal{D}^2$ is the second-derivative operator, and $\pmb\alpha$ is a positive vector of smoothing parameters.

Let ${\lambda}_1^{[\pmb\alpha]}$ and ${\pmb\psi}_1^{[\pmb\alpha]}$ represent the maximum value and solution, respectively, obtained from the optimization problem \eqref{e0}. Now, for any $k\in\mathbb{N}$, assuming we have obtained $\lbrace ({\lambda}_i^{[\pmb\alpha]}, {\pmb\psi}_i^{[\pmb\alpha]}): i=1,\ldots, k-1 \rbrace$, the solution ${\pmb\psi}_k^{[\pmb\alpha]}$ is found by maximizing the following optimization problem:
		\begin{equation}
			\label{e0.5}
			\max_{\pmb{\varphi}:\Vert\pmb{\varphi}\Vert_{\pmb\alpha}=1, \langle 
			\pmb{\varphi},{\pmb\varphi}_i^{[\pmb\alpha]} 
			\rangle_{\pmb\alpha}=0}\dfrac{\langle {{\mathbfcal{C}}}\pmb{\varphi} , 
			\pmb{\varphi} \rangle_{\mathbb{H}}}{\Vert \pmb{\varphi} \Vert_{\pmb \alpha}^2},\ 
			\ i=1, \ldots, k-1.
		\end{equation}
We denote the maximum value of \eqref{e0.5} as ${\lambda}_k^{[\pmb\alpha]}$. Replacing ${\mathbfcal{{C}}}$ with ${\widehat{\mathbfcal{C}}}$, for $k\leq n$, the set of sample regularized multivariate functional PCs is defined as $\lbrace \hat{\pmb\psi}_i^{[\pmb\alpha]}: i=1,\ldots, k \rbrace$, and the corresponding estimated eigenvalues (which are positive) are denoted as $\lbrace\hat{\lambda}_i^{[\pmb\alpha]}: i=1,\ldots, k \rbrace$.

\subsection{Existence of Solutions}
We investigate the existence of solutions to the optimization problem \eqref{e0.5} based on the following assumptions:
\begin{itemize}
	\item[A1.] The functions $c_{i,j}$ satisfy the following conditions for all $i, j=1,\ldots, p$: 
	\begin{equation*}
		\int_{\mathcal{T}_i} c_{i,j}(s_i,t_j)ds_i < K_{ij},\quad \forall\ t_j\in\mathcal{T}_j,
	\end{equation*}
	where $K_{ij}$ is a finite constant. Additionally, the function $c_{i,j}$ is uniformly continuous:
	\begin{equation*}
		\forall\ \varepsilon>0,\ \exists\ \delta_{ij}>0: |t_j-t_j^*|< \delta_{ij} \Rightarrow |c_{i,j}(s_i, t_j)-c_{i,j}(s_i,t^*_j )| < \varepsilon,\ \forall\ s_i\in \mathcal{T}_i.
	\end{equation*}
	These assumptions ensure that the covariance operator $\mathbfcal C$ is a compact positive operator on $\mathbb{H}$, allowing the existence of a complete sequence of eigenfunctions $\pmb{\psi}_i$s with eigenvalues $\lambda_1>\lambda_2>\cdots>0$. \citep[See Proposition 2 in][]{happ2018multivariate}.
	\item[A2.] Each eigenfunction $\pmb{\psi}_i$ belongs to the space $\mathbb{W}$, implying finite roughness (i.e., $\Vert \pmb{\psi}_i \Vert_{\pmb \alpha}^2$ is finite).
	\item[A3.] All eigenvalues $\lambda_i$ have a multiplicity of 1, yielding an ordering of $\lambda_1>\lambda_2>\cdots>0$.
\end{itemize}

\begin{theorem}\label{theorem-existence}
Under assumptions A1-A3, the optimization problems \eqref{e0} and \eqref{e0.5} have almost sure solutions $\lbrace ({\lambda}_i^{[\pmb\alpha]}, {\pmb\psi}_i^{[\pmb\alpha]}): i\in\mathbb{N} \rbrace$ for any $\pmb \alpha=(\alpha_1, \ldots, \alpha_p)$, where $\alpha_i\ge 0$.
\end{theorem}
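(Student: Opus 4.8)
The plan is to adapt \citeauthor{silverman1996smoothed}'s ``change of inner product'' idea to the multivariate setting, turning \eqref{e0} and \eqref{e0.5} into the successive-maximum (deflation) problem for a single compact, self-adjoint, nonnegative operator. First I would introduce the Hilbert space $\mathbb{H}_{\pmb\alpha}$, obtained from $\mathbb{W}^{2}$ by replacing $W^{2}_{j}$ with $H_{j}$ in every coordinate $j$ with $\alpha_{j}=0$ and endowing it with $\langle\cdot,\cdot\rangle_{\pmb\alpha}$. On each bounded $\mathcal{T}_{j}$ the norm $\Vert\cdot\Vert_{\alpha_{j}}$ is equivalent to the usual second-order Sobolev norm, so $\mathbb{H}_{\pmb\alpha}$ is complete, and because $\Vert\pmb x\Vert_{\pmb\alpha}^{2}=\Vert\pmb x\Vert_{\mathbb{H}}^{2}+\sum_{j}\alpha_{j}[x_{j},x_{j}]_{H_{j}}\ge\Vert\pmb x\Vert_{\mathbb{H}}^{2}$, the canonical inclusion $J\colon\mathbb{H}_{\pmb\alpha}\hookrightarrow\mathbb{H}$ is bounded with $\Vert J\Vert\le1$.

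Next I would represent the objective. By A1 the covariance operator $\mathbfcal{C}$ is bounded, self-adjoint and positive on $\mathbb{H}$, so $\pmb\varphi\mapsto\langle\mathbfcal{C}\pmb\varphi,\pmb\varphi\rangle_{\mathbb{H}}$ is a bounded nonnegative quadratic form on $\mathbb{H}$ and, composed with $J$, also on $\mathbb{H}_{\pmb\alpha}$. The Riesz representation theorem applied inside $\mathbb{H}_{\pmb\alpha}$ then yields a unique bounded, self-adjoint, nonnegative operator $\mathbfcal{S}_{\pmb\alpha}:=J^{*}\mathbfcal{C}J$ on $\mathbb{H}_{\pmb\alpha}$ with $\langle\mathbfcal{C}\pmb\varphi,\pmb\varphi\rangle_{\mathbb{H}}=\langle\mathbfcal{S}_{\pmb\alpha}\pmb\varphi,\pmb\varphi\rangle_{\pmb\alpha}$ for all $\pmb\varphi\in\mathbb{H}_{\pmb\alpha}$. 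With this identification, \eqref{e0} reads $\max\{\langle\mathbfcal{S}_{\pmb\alpha}\pmb\varphi,\pmb\varphi\rangle_{\pmb\alpha}:\Vert\pmb\varphi\Vert_{\pmb\alpha}=1\}$ and \eqref{e0.5} is the same maximisation over the $\langle\cdot,\cdot\rangle_{\pmb\alpha}$-orthogonal complement of $\{\pmb\psi_{i}^{[\pmb\alpha]}\}_{i<k}$ — exactly the Rayleigh--Ritz characterisation of the eigen-decomposition of $\mathbfcal{S}_{\pmb\alpha}$ in the Hilbert space $\mathbb{H}_{\pmb\alpha}$.

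The crux is compactness of $\mathbfcal{S}_{\pmb\alpha}$. For the population operator, A1 (via Proposition~2 of \citet{happ2018multivariate}) gives that $\mathbfcal{C}$ is compact on $\mathbb{H}$, and since $J,J^{*}$ are bounded the composite $\mathbfcal{S}_{\pmb\alpha}=J^{*}\mathbfcal{C}J$ is compact on $\mathbb{H}_{\pmb\alpha}$. For the sample version one substitutes $\widehat{\mathbfcal{C}}=\tfrac{1}{n-1}\sum_{i}\pmb x_{i}\pmb{\otimes}\pmb x_{i}$, whose representer $\widehat{\mathbfcal{S}}_{\pmb\alpha}$ has rank at most $n$ and is therefore compact for every realisation of the data; this is the only point at which randomness enters, and it accounts for the ``almost sure'' qualifier (the construction is legitimate on the probability-one event on which the sample paths, and hence $\widehat{\mathbfcal{C}}$, are well-defined elements of the underlying space). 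Granting compactness, the spectral theorem for compact, self-adjoint, nonnegative operators furnishes an at most countable family of eigenpairs $\{(\lambda_{i}^{[\pmb\alpha]},\pmb\psi_{i}^{[\pmb\alpha]})\}$ with $\lambda_{1}^{[\pmb\alpha]}\ge\lambda_{2}^{[\pmb\alpha]}\ge\cdots\ge0$, $\lambda_{i}^{[\pmb\alpha]}\to0$, and $\{\pmb\psi_{i}^{[\pmb\alpha]}\}$ orthonormal in $\langle\cdot,\cdot\rangle_{\pmb\alpha}$ (assumptions A2 and A3 guarantee, respectively, finite roughness of the eigenfunctions involved and the strictly decreasing ordering used to index them).

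It then remains to invoke the Courant--Fischer min-max principle in $\mathbb{H}_{\pmb\alpha}$: $\lambda_{1}^{[\pmb\alpha]}$ is the value of \eqref{e0} and is attained at $\pmb\psi_{1}^{[\pmb\alpha]}$, and inductively $\lambda_{k}^{[\pmb\alpha]}$ is the value of \eqref{e0.5} and is attained at $\pmb\psi_{k}^{[\pmb\alpha]}$, which is precisely the claimed family $\{(\lambda_{i}^{[\pmb\alpha]},\pmb\psi_{i}^{[\pmb\alpha]}):i\in\mathbb{N}\}$ (each maximiser unique up to sign whenever the spectrum of $\mathbfcal{S}_{\pmb\alpha}$ is simple). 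The coordinates with $\alpha_{j}=0$ are absorbed into the mixed product space above and bring nothing new; when $\pmb\alpha=\pmb 0$ one recovers ordinary MFPCA. I expect the genuine obstacle to be the compactness step together with verifying that $\mathbb{H}_{\pmb\alpha}$ is truly complete — both are required for the Riesz representation and the spectral theorem to be applicable — after which the remainder is the standard eigen-decomposition machinery run in a re-normed Hilbert space.
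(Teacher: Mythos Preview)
Your proposal is correct and follows essentially the same route as the paper: re-interpret the Rayleigh quotient inside the Hilbert space $(\mathbb{W}^{2},\langle\cdot,\cdot\rangle_{\pmb\alpha})$ via a Riesz-type representer of the bilinear form $(\pmb x,\pmb y)\mapsto\langle\mathbfcal{C}\pmb x,\pmb y\rangle_{\mathbb{H}}$, show this representer is compact, self-adjoint and nonnegative, and then invoke the spectral theorem to produce the successive eigenpairs. The only cosmetic differences are that you write the representer explicitly as $J^{*}\mathbfcal{C}J$ and obtain compactness from the two-sided ideal property of compact operators, whereas the paper appeals to Weidmann's bounded-bilinear-form theorem for the representer and verifies compactness by a direct Cauchy-subsequence argument.
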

We provide the proofs of all the theorems and lemmas in the appendix for the sake of brevity and to maintain the flow of the main text. To facilitate the understanding of the subsequent discussion, we introduce the second-derivative operator $\mathcal{R}$ and the fourth-derivative operator $\mathcal{Q}:=\mathcal{R}^*\mathcal{R}$. In particular, we define the operator $S_{\alpha_j}$ as follows:
		\begin{equation*}
			\mathcal{S}_{\alpha_j} = \left( \mathcal{I}+\alpha_j \mathcal{Q}\right)^{-1/2},\quad \alpha_j>0,\ j=1,\ldots, p.
		\end{equation*}
The operator $\mathcal{S}_{\alpha_j}$ is a bounded linear self-adjoint operator mapping on ${H}_j$, with its range contained in $W_j^2$. Notably, the linear norm of $\mathcal{S}_{\alpha_j}$ does not exceed 1. This operator is commonly referred to as the ``half-smoothing'' operator and plays a crucial role in the context of this study. For more comprehensive insights, please refer to \cite{silverman1996smoothed} and \cite{qi2011some}. Moreover, it is worth mentioning that the inverse of $\mathcal{S}_{\alpha_j}$ exists and is self-adjoint.
	
Note that $\mathcal{S}_{\alpha_j}^2 f$ solves the differential equation
	\begin{equation*}
		g(x)+\alpha_j g^{''''}(x) = f(x) ,\quad \alpha_j>0,
	\end{equation*}
and $\mathcal{S}_{\alpha_j}^2$ is a bounded and positive self-adjoint operator. To introduce the multivariate counterpart of the half-smoothing operator, we define the operator $\mathbfcal{S}_{\pmb{\alpha}}: \mathbb{H}\rightarrow\mathbb{H}$ as
	\begin{equation*}\label{s_operator}
		\mathbfcal{S}_{\pmb{\alpha}} :=
		\begin{bmatrix}
			\mathcal{S}_{\alpha_1}&0&\ldots&0\\
			0&\mathcal{S}_{\alpha_2}&\ddots&\vdots\\
			\vdots&\ddots&\ddots&0\\
			0&\ldots&0&\mathcal{S}_{\alpha_p}
		\end{bmatrix}.
	\end{equation*}
It is worth noting that $\mathbfcal{S}_{\pmb{\alpha}}^k=\diag\left(\mathcal{S}_{\alpha_1}^k, \ldots, \mathcal{S}_{\alpha_p}^k\right)$ for $k=-2, -1, 2$. The following lemma provides justification for referring to $\mathbfcal{S}_{\pmb{\alpha}}$ as half-smoothing operators.

\begin{lemma}\label{half-smoothing}
	The operator $\mathbfcal{S}_{\pmb{\alpha}}$ is a bounded linear self-adjoint operator with a linear norm of at most 1. Its range, or the domain of $\mathbfcal{S}_{\pmb{\alpha}}^{-1}$, is $\mathbb{W}^2$. Additionally, for any ${\pmb{x, y}}\in\mathbb{W}^2$, we have:
\begin{equation*}
	\langle\mathbfcal{S}_{\pmb{\alpha}}^{-1}\pmb{x}, \mathbfcal{S}_{\pmb{\alpha}}^{-1}\pmb{y}\rangle_{\mathbb{H}}^2=\langle{\pmb{x ,y}}\rangle_{\pmb{\alpha}}^2.
\end{equation*}
\end{lemma}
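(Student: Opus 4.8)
The plan is to establish each of the three claims by reducing everything to the scalar operators $\mathcal{S}_{\alpha_j}$, whose properties are quoted from \cite{silverman1996smoothed} and \cite{qi2011some} in the text. First I would verify that $\mathbfcal{S}_{\pmb\alpha}$ is linear, bounded, and self-adjoint. Linearity is immediate from the block-diagonal definition acting coordinatewise as $\mathbfcal{S}_{\pmb\alpha}\pmb f=(\mathcal{S}_{\alpha_1}f_1,\ldots,\mathcal{S}_{\alpha_p}f_p)$. For boundedness with norm at most $1$, I would compute $\Vert\mathbfcal{S}_{\pmb\alpha}\pmb f\Vert_{\mathbb H}^2=\sum_{j=1}^p\Vert\mathcal{S}_{\alpha_j}f_j\Vert_{H_j}^2\le\sum_{j=1}^p\Vert f_j\Vert_{H_j}^2=\Vert\pmb f\Vert_{\mathbb H}^2$, using the stated fact that each $\mathcal{S}_{\alpha_j}$ has linear norm at most $1$; that $1$ is attained follows from the fact that at least one $\mathcal{S}_{\alpha_j}$ has norm exactly $1$ (its spectrum reaches $1$ as the eigenvalue associated with constants, since $\mathcal{Q}$ annihilates linear functions). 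Self-adjointness follows because $\langle\mathbfcal{S}_{\pmb\alpha}\pmb x,\pmb y\rangle_{\mathbb H}=\sum_j\langle\mathcal{S}_{\alpha_j}x_j,y_j\rangle_{H_j}=\sum_j\langle x_j,\mathcal{S}_{\alpha_j}y_j\rangle_{H_j}=\langle\pmb x,\mathbfcal{S}_{\pmb\alpha}\pmb y\rangle_{\mathbb H}$, each scalar step being the self-adjointness of $\mathcal{S}_{\alpha_j}$.

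Next I would identify the range. Since $\mathbfcal{S}_{\pmb\alpha}$ is block diagonal, its range is the Cartesian product of the ranges of the $\mathcal{S}_{\alpha_j}$, and the text already records that $\mathrm{range}(\mathcal{S}_{\alpha_j})\subseteq W_j^2$; the reverse inclusion, i.e.\ that $\mathcal{S}_{\alpha_j}$ maps $H_j$ \emph{onto} $W_j^2$, is the statement that $\mathcal{S}_{\alpha_j}^{-1}=(\mathcal I+\alpha_j\mathcal Q)^{1/2}$ has domain exactly $W_j^2$, which is the scalar result quoted from \cite{silverman1996smoothed}. Hence $\mathrm{range}(\mathbfcal{S}_{\pmb\alpha})=W_1^2\times\cdots\times W_p^2=\mathbb W^2$, and by definition this is the domain of $\mathbfcal{S}_{\pmb\alpha}^{-1}$; using $\mathbfcal{S}_{\pmb\alpha}^{-1}=\diag(\mathcal{S}_{\alpha_1}^{-1},\ldots,\mathcal{S}_{\alpha_p}^{-1})$ (the $k=-1$ case of the displayed block-power identity) gives the inverse explicitly and shows it is self-adjoint.

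For the final identity, I would take $\pmb x,\pmb y\in\mathbb W^2$ and compute directly: writing $\pmb u=\mathbfcal{S}_{\pmb\alpha}^{-1}\pmb x$, $\pmb v=\mathbfcal{S}_{\pmb\alpha}^{-1}\pmb y$, so $u_j=\mathcal{S}_{\alpha_j}^{-1}x_j=(\mathcal I+\alpha_j\mathcal Q)^{1/2}x_j$, we get $\langle u_j,v_j\rangle_{H_j}=\langle(\mathcal I+\alpha_j\mathcal Q)^{1/2}x_j,(\mathcal I+\alpha_j\mathcal Q)^{1/2}y_j\rangle_{H_j}=\langle x_j,(\mathcal I+\alpha_j\mathcal Q)y_j\rangle_{H_j}=\langle x_j,y_j\rangle_{H_j}+\alpha_j\langle x_j,\mathcal Q y_j\rangle_{H_j}$, using self-adjointness of the fractional power. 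Since $\mathcal Q=\mathcal R^*\mathcal R$ with $\mathcal R$ the second-derivative operator, $\langle x_j,\mathcal Q y_j\rangle_{H_j}=\langle\mathcal R x_j,\mathcal R y_j\rangle_{H_j}=\langle x_j'',y_j''\rangle_{H_j}=[x_j,y_j]_{H_j}$ (this is where assumption A2 / membership in $W_j^2$ is used to justify the integration by parts implicit in $\mathcal Q=\mathcal R^*\mathcal R$). Summing over $j$ yields $\langle\mathbfcal{S}_{\pmb\alpha}^{-1}\pmb x,\mathbfcal{S}_{\pmb\alpha}^{-1}\pmb y\rangle_{\mathbb H}=\sum_j\big(\langle x_j,y_j\rangle_{H_j}+\alpha_j[x_j,y_j]_{H_j}\big)=\langle\pmb x,\pmb y\rangle_{\pmb\alpha}$, and squaring both sides gives the stated equation.

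The main obstacle is not any single computation but keeping the domain bookkeeping honest: the scalar facts about $\mathcal{S}_{\alpha_j}$, $\mathcal{S}_{\alpha_j}^{-1}$, and $\mathcal Q=\mathcal R^*\mathcal R$ are only valid on the appropriate Sobolev domains, so I must be careful that the manipulation $\langle(\mathcal I+\alpha_j\mathcal Q)^{1/2}x_j,(\mathcal I+\alpha_j\mathcal Q)^{1/2}y_j\rangle=\langle x_j,(\mathcal I+\alpha_j\mathcal Q)y_j\rangle$ is applied only to $x_j,y_j\in W_j^2$ (so that $\mathcal{S}_{\alpha_j}^{-1}x_j$ is well defined and the spectral-calculus identity for the square root applies), and that the adjoint computation $\langle x_j,\mathcal{R}^*\mathcal{R}y_j\rangle=\langle x_j'',y_j''\rangle$ carries no spurious boundary terms — which is exactly what the definition $[x,y]_{H_j}=\langle x'',y''\rangle_{H_j}$ and the quoted relation $[x,y]_{H_j}=\langle x,y''''\rangle_{H_j}$ for $y''''\in H_j$ encode. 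Everything else is a routine coordinatewise reduction from the block-diagonal structure.
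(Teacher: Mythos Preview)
Your proposal is correct and follows the same coordinatewise, block-diagonal reduction that the paper uses; the paper's own proof is much terser, explicitly writing out only the final inner-product identity via $\langle\mathbfcal{S}_{\pmb\alpha}^{-1}\pmb x,\mathbfcal{S}_{\pmb\alpha}^{-1}\pmb y\rangle_{\mathbb H}=\langle\pmb x,\mathbfcal{S}_{\pmb\alpha}^{-2}\pmb y\rangle_{\mathbb H}=\langle\pmb x,\pmb y\rangle_{\mathbb H}+\sum_i\alpha_i\langle x_i,y_i''''\rangle_{H_i}=\langle\pmb x,\pmb y\rangle_{\pmb\alpha}$ and leaving boundedness, self-adjointness, and the range claim to the quoted scalar facts. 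Your version simply fills in those omitted steps and routes the last computation through $\mathcal Q=\mathcal R^*\mathcal R$ rather than $\mathcal Qy=y''''$, which is an equivalent formulation.
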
	
As an immediate consequence of Lemma \ref{half-smoothing}, we have $\Vert\mathbfcal{S}_{\pmb{\alpha}}^{-1}\pmb{x}\Vert_{\mathbb{H}}^2=\Vert{\pmb{x}}\Vert_{\pmb{\alpha}}^2$ for any ${\pmb{x}}\in\mathbb{W}^2$.

\begin{lemma}\label{rmfpca_obtain}
	Let $\lbrace ({\lambda}_i^{[\pmb\alpha]},  {\pmb\psi}_i^{[\pmb\alpha]}): i\in\mathbb{N} \rbrace$ be the solution of the successive optimization problems \eqref{e0} and \eqref{e0.5}. Then, the pairs $\lbrace ({\lambda}_i^{[\pmb\alpha]},\mathbfcal{S}_{\pmb{\alpha}}^{-1}  {\pmb\psi}_i^{[\pmb\alpha]}): i\in\mathbb{N} \rbrace$ represent the leading eigenvalues and eigenfunctions of the compact operator $\mathbfcal{S}_{\pmb{\alpha}}{{\mathbfcal{C}}}\mathbfcal{S}_{\pmb{\alpha}}$.
\end{lemma}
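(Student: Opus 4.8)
The plan is to convert the penalized problems \eqref{e0} and \eqref{e0.5} into the ordinary variational characterization of the spectrum of a compact self-adjoint operator by means of the substitution $\pmb{\varphi}=\mathbfcal{S}_{\pmb\alpha}\pmb{\phi}$. First I would note that feasibility in \eqref{e0} forces $\Vert\pmb{\varphi}\Vert_{\pmb\alpha}<\infty$, hence $\pmb{\varphi}\in\mathbb{W}^2$, and by Lemma~\ref{half-smoothing} the space $\mathbb{W}^2$ is precisely the range of $\mathbfcal{S}_{\pmb\alpha}$, on which $\mathbfcal{S}_{\pmb\alpha}^{-1}$ is well defined. Thus $\pmb{\varphi}\mapsto\pmb{\phi}:=\mathbfcal{S}_{\pmb\alpha}^{-1}\pmb{\varphi}$ is a bijection from the feasible set onto $\mathbb{H}$, with inverse $\pmb{\phi}\mapsto\mathbfcal{S}_{\pmb\alpha}\pmb{\phi}$. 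Using the identity in Lemma~\ref{half-smoothing} and its polarization, $\Vert\pmb{\varphi}\Vert_{\pmb\alpha}^{2}=\Vert\pmb{\phi}\Vert_{\mathbb{H}}^{2}$ and $\langle\pmb{\varphi},\pmb{\psi}_i^{[\pmb\alpha]}\rangle_{\pmb\alpha}=\langle\pmb{\phi},\mathbfcal{S}_{\pmb\alpha}^{-1}\pmb{\psi}_i^{[\pmb\alpha]}\rangle_{\mathbb{H}}$, so the constraint $\Vert\pmb{\varphi}\Vert_{\pmb\alpha}=1$ becomes $\Vert\pmb{\phi}\Vert_{\mathbb{H}}=1$ and the side constraints in \eqref{e0.5} become $\mathbb{H}$-orthogonality of $\pmb{\phi}$ to the vectors $\mathbfcal{S}_{\pmb\alpha}^{-1}\pmb{\psi}_i^{[\pmb\alpha]}$.

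Next, because $\mathbfcal{S}_{\pmb\alpha}$ is self-adjoint, $\langle\mathbfcal{C}\pmb{\varphi},\pmb{\varphi}\rangle_{\mathbb{H}}=\langle\mathbfcal{S}_{\pmb\alpha}\mathbfcal{C}\mathbfcal{S}_{\pmb\alpha}\pmb{\phi},\pmb{\phi}\rangle_{\mathbb{H}}$, so \eqref{e0} and \eqref{e0.5} turn into the successive maximization of the Rayleigh quotient $\langle\mathbfcal{A}\pmb{\phi},\pmb{\phi}\rangle_{\mathbb{H}}/\Vert\pmb{\phi}\Vert_{\mathbb{H}}^{2}$ of $\mathbfcal{A}:=\mathbfcal{S}_{\pmb\alpha}\mathbfcal{C}\mathbfcal{S}_{\pmb\alpha}$ over $\pmb{\phi}$ orthogonal to the previously obtained maximizers. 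I would then check that $\mathbfcal{A}$ is compact (the bounded operator $\mathbfcal{S}_{\pmb\alpha}$ composed with the compact $\mathbfcal{C}$ guaranteed by A1), self-adjoint (both factors are), and positive, and invoke the Courant--Fischer min--max characterization of the spectrum of such an operator: the $k$-th constrained maximum equals the $k$-th largest eigenvalue of $\mathbfcal{A}$ and is attained at a corresponding unit eigenfunction. Undoing the bijection then identifies that eigenfunction with $\mathbfcal{S}_{\pmb\alpha}^{-1}\pmb{\psi}_k^{[\pmb\alpha]}$ and the maximal value with $\lambda_k^{[\pmb\alpha]}$, which is the claim.

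The step that needs the most care is keeping the two recursions synchronized: I would argue by induction on $k$, with base case $k=1$ the plain Rayleigh-quotient maximization, and at step $k$ use the inductive hypothesis that $\mathbfcal{S}_{\pmb\alpha}^{-1}\pmb{\psi}_i^{[\pmb\alpha]}$, $i<k$, are exactly the first $k-1$ eigenfunctions of $\mathbfcal{A}$, so that the transformed side constraints coincide with ``orthogonal to the top $k-1$ eigenvectors of $\mathbfcal{A}$.'' One should also confirm that no feasible maximizer of the transformed problem is lost under the inverse map, which is automatic since the range of $\mathbfcal{S}_{\pmb\alpha}$ is all of $\mathbb{W}^2$; and note that the eigenvalue multiplicities of $\mathbfcal{A}$ need not match those of $\mathbfcal{C}$, though this does not affect the statement. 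The remaining items --- positivity of $\mathbfcal{A}$, well-definedness of $\langle\mathbfcal{C}\pmb{\varphi},\pmb{\varphi}\rangle_{\mathbb{H}}$, and the polarization identity --- are routine, being Silverman's half-smoothing device transcribed to the product Hilbert space $\mathbb{H}$.
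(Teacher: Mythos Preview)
Your proposal is correct and follows essentially the same route as the paper: the paper's proof consists of the single Rayleigh-quotient identity
\[
\frac{\langle\mathbfcal{C}\pmb{\varphi},\pmb{\varphi}\rangle_{\mathbb{H}}}{\Vert\pmb{\varphi}\Vert_{\pmb\alpha}^{2}}
=
\frac{\langle(\mathbfcal{S}_{\pmb\alpha}\mathbfcal{C}\mathbfcal{S}_{\pmb\alpha})\mathbfcal{S}_{\pmb\alpha}^{-1}\pmb{\varphi},\mathbfcal{S}_{\pmb\alpha}^{-1}\pmb{\varphi}\rangle_{\mathbb{H}}}{\Vert\mathbfcal{S}_{\pmb\alpha}^{-1}\pmb{\varphi}\Vert_{\mathbb{H}}^{2}},
\]
obtained via Lemma~\ref{half-smoothing}, followed by a reference to the standard successive maximization problems \eqref{mfpca1}--\eqref{mfpca2}. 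Your argument spells out the same substitution $\pmb{\phi}=\mathbfcal{S}_{\pmb\alpha}^{-1}\pmb{\varphi}$ with considerably more care (bijection onto $\mathbb{H}$, transfer of the side constraints, compactness and self-adjointness of $\mathbfcal{S}_{\pmb\alpha}\mathbfcal{C}\mathbfcal{S}_{\pmb\alpha}$, inductive synchronization), but the underlying idea is identical.
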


\subsection{Consistency and Asymptotic Results}
This subsection establishes the consistency of the proposed method in an asymptotic framework as the number of observations, denoted as $n$, increases. Let us define ${\mathbfcal P}_{\pmb\rho}$ as the projection onto the subspace generated by ${\pmb\rho}\in\mathbb{H}$:
	\begin{equation*}
		{\mathbfcal P}_{\pmb\rho}\pmb{x}=\frac{\langle\pmb\rho,\pmb{x} \rangle_{\mathbb{H}}\pmb\rho}{\Vert\pmb\rho \Vert_{\mathbb{H}}^2}=\frac{\pmb\rho\otimes\pmb\rho(\pmb{x})}{\Vert\pmb\rho \Vert_{\mathbb{H}}^2},\quad{\pmb x}\in\mathbb{H}.
	\end{equation*}

\begin{theorem}\label{theorem_cons}
Let $\lbrace (\hat{\lambda}_i^{[\pmb\alpha]}, \hat{\pmb\psi}_i^{[\pmb\alpha]}): i\in\mathbb{N} \rbrace$ be the solution of the successive optimization problems \eqref{e0} and \eqref{e0.5} where ${\mathbfcal{{C}}}$ is replaced with ${\widehat{\mathbfcal{C}}}$. We define
\begin{equation*}
	\tilde{\pmb\psi}_i^{[\pmb\alpha]}
	=
	\dfrac{\hat{\pmb\psi}_i^{[\pmb\alpha]}}{\Vert \hat{\pmb\psi}_i^{[\pmb\alpha]} \Vert_{\mathbb{H}}}.
\end{equation*} 
Under the assumptions A1-3 and $\alpha_i\to 0$, for each $i$:
\begin{equation*}\label{conv1}
	\hat{\lambda}_i^{[\pmb\alpha]} \to {\lambda}_i,
\end{equation*}	
and
\begin{equation*}\label{conv2}
	\langle \tilde{\pmb\psi}_i^{[\pmb\alpha]} , {\pmb\psi}_i \rangle_{\mathbb{H}} \to 1.
\end{equation*}
with probability 1 as $n\to\infty$.
\end{theorem}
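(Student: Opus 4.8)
The plan is to reduce the statement to a classical perturbation comparison of the eigensystems of two compact self‑adjoint operators on $\mathbb{H}$. By Lemma~\ref{rmfpca_obtain}, $\hat{\lambda}_i^{[\pmb\alpha]}$ is the $i$-th eigenvalue of $\mathbfcal{S}_{\pmb\alpha}\widehat{\mathbfcal{C}}\mathbfcal{S}_{\pmb\alpha}$, with corresponding eigenfunction $\pmb{g}_i:=\mathbfcal{S}_{\pmb\alpha}^{-1}\hat{\pmb\psi}_i^{[\pmb\alpha]}$, which satisfies $\Vert\pmb{g}_i\Vert_{\mathbb{H}}=\Vert\hat{\pmb\psi}_i^{[\pmb\alpha]}\Vert_{\pmb\alpha}=1$ by Lemma~\ref{half-smoothing}. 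It therefore suffices to show that $\mathbfcal{S}_{\pmb\alpha}\widehat{\mathbfcal{C}}\mathbfcal{S}_{\pmb\alpha}\to\mathbfcal{C}$ in operator norm almost surely, and then push the resulting eigensystem convergence back through $\mathbfcal{S}_{\pmb\alpha}$. Throughout I treat $\pmb\alpha=\pmb\alpha^{(n)}$ as a sequence with every component $\alpha_j\to0$ as $n\to\infty$ (the iterated‑limit reading goes through in the same way).

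For the operator convergence I would use the split
\[
\big\Vert\mathbfcal{S}_{\pmb\alpha}\widehat{\mathbfcal{C}}\mathbfcal{S}_{\pmb\alpha}-\mathbfcal{C}\big\Vert\le\big\Vert\mathbfcal{S}_{\pmb\alpha}(\widehat{\mathbfcal{C}}-\mathbfcal{C})\mathbfcal{S}_{\pmb\alpha}\big\Vert+\big\Vert\mathbfcal{S}_{\pmb\alpha}\mathbfcal{C}\mathbfcal{S}_{\pmb\alpha}-\mathbfcal{C}\big\Vert .
\]
The first term is at most $\Vert\widehat{\mathbfcal{C}}-\mathbfcal{C}\Vert$ since $\Vert\mathbfcal{S}_{\pmb\alpha}\Vert\le1$ (Lemma~\ref{half-smoothing}), and it tends to $0$ a.s.\ by the strong law of large numbers for the i.i.d.\ Hilbert--Schmidt operators $\pmb{x}_i\pmb{\otimes}\pmb{x}_i$ (integrable because $\pmb{X}\in L^2_{\mathbb{H}}$), with A1 ensuring that $\mathbfcal{C}$ is a well‑defined compact operator. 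For the second, deterministic, term I would first record that $\mathcal{S}_{\alpha_j}=(\mathcal{I}+\alpha_j\mathcal{Q})^{-1/2}\to\mathcal{I}$ strongly on $H_j$ as $\alpha_j\to0$: by the spectral theorem for the nonnegative self‑adjoint operator $\mathcal{Q}$ one has $\big|(1+\alpha_j\lambda)^{-1/2}-1\big|\le1$ for all $\lambda\ge0$ with pointwise limit $0$, so dominated convergence against the spectral measure gives the claim, and hence $\mathbfcal{S}_{\pmb\alpha}\to\mathbfcal{I}$ strongly on $\mathbb{H}$. Writing $\mathbfcal{S}_{\pmb\alpha}\mathbfcal{C}\mathbfcal{S}_{\pmb\alpha}-\mathbfcal{C}=\mathbfcal{S}_{\pmb\alpha}\mathbfcal{C}(\mathbfcal{S}_{\pmb\alpha}-\mathbfcal{I})+(\mathbfcal{S}_{\pmb\alpha}-\mathbfcal{I})\mathbfcal{C}$, and using the standard fact that a uniformly bounded, strongly convergent net composed with a compact operator converges in operator norm (the first summand being, up to the contraction $\mathbfcal{S}_{\pmb\alpha}$, the adjoint of such a product, since all operators here are self‑adjoint), both summands vanish in norm. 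I expect this to be the main obstacle: it is the only point where strong convergence must be upgraded to norm convergence, and it genuinely relies on the compactness of $\mathbfcal{C}$.

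The rest is standard perturbation theory for compact self‑adjoint operators. Convergence of the ordered eigenvalues follows from $\big|\hat{\lambda}_i^{[\pmb\alpha]}-\lambda_i\big|\le\big\Vert\mathbfcal{S}_{\pmb\alpha}\widehat{\mathbfcal{C}}\mathbfcal{S}_{\pmb\alpha}-\mathbfcal{C}\big\Vert$ (Courant--Fischer min--max), which gives $\hat{\lambda}_i^{[\pmb\alpha]}\to\lambda_i$ a.s. For the eigenfunctions, because $\lambda_i$ is simple and isolated in the spectrum of $\mathbfcal{C}$ (A1, A3), the associated rank‑one Riesz spectral projection depends norm‑continuously on the operator under the perturbation above --- integrate the resolvent over a fixed small circle about $\lambda_i$ that separates it from the spectra of $\mathbfcal{S}_{\pmb\alpha}\widehat{\mathbfcal{C}}\mathbfcal{S}_{\pmb\alpha}$ for large $n$ --- which forces $\big|\langle\pmb{g}_i,\pmb\psi_i\rangle_{\mathbb{H}}\big|\to1$, and fixing the sign of $\hat{\pmb\psi}_i^{[\pmb\alpha]}$ by the usual eigenfunction convention gives $\pmb{g}_i\to\pmb\psi_i$ in $\mathbb{H}$. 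Finally I transfer this to $\hat{\pmb\psi}_i^{[\pmb\alpha]}=\mathbfcal{S}_{\pmb\alpha}\pmb{g}_i$ through
\[
\big\Vert\hat{\pmb\psi}_i^{[\pmb\alpha]}-\pmb\psi_i\big\Vert_{\mathbb{H}}\le\big\Vert(\mathbfcal{S}_{\pmb\alpha}-\mathbfcal{I})\pmb\psi_i\big\Vert_{\mathbb{H}}+\Vert\mathbfcal{S}_{\pmb\alpha}\Vert\,\big\Vert\pmb{g}_i-\pmb\psi_i\big\Vert_{\mathbb{H}}\longrightarrow0,
\]
the first term by strong convergence of $\mathbfcal{S}_{\pmb\alpha}$ at the fixed element $\pmb\psi_i$; hence $\Vert\hat{\pmb\psi}_i^{[\pmb\alpha]}\Vert_{\mathbb{H}}\to1$, so $\tilde{\pmb\psi}_i^{[\pmb\alpha]}\to\pmb\psi_i$ in $\mathbb{H}$ and $\langle\tilde{\pmb\psi}_i^{[\pmb\alpha]},\pmb\psi_i\rangle_{\mathbb{H}}\to1$. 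Since the operator convergence above is simultaneous in all indices, this holds for every $i$.
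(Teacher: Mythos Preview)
Your argument is correct and takes a genuinely different route from the paper's. The paper does not pass through Lemma~\ref{rmfpca_obtain} at all; instead it adapts \cite{silverman1996smoothed} directly, running an induction on the index $j$ with the three-part hypothesis $\mathfrak{A}_j$ (eigenvalue convergence, vanishing of the roughness term $\sum_j\alpha_j\Vert\mathcal{D}^2\hat\psi_{kj}^{[\pmb\alpha]}\Vert^2$, and $\langle\tilde{\pmb\psi}_k^{[\pmb\alpha]},\pmb\psi_k\rangle\to1$). The engine is a uniform comparison lemma showing $\sup_{\Vert\pmb x\Vert_{\pmb\alpha}\le1}\big|\langle\mathbfcal{P}\pmb x,\mathbfcal{C}\mathbfcal{P}\pmb x\rangle_{\mathbb H}-\langle\mathbfcal{P}^{[\pmb\alpha]}\pmb x,\widehat{\mathbfcal{C}}\mathbfcal{P}^{[\pmb\alpha]}\pmb x\rangle_{\mathbb H}\big|\to0$, where $\mathbfcal{P}$ and $\mathbfcal{P}^{[\pmb\alpha]}$ are the projections orthogonal to the first $j-1$ true, respectively estimated regularized, eigenfunctions; the three parts of $\mathfrak{A}_j$ are then read off from sandwich inequalities on the Rayleigh quotients and an expansion of $\tilde{\pmb\psi}_j^{[\pmb\alpha]}$ in the basis $\{\pmb\psi_i\}$.

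Your approach is cleaner and more modular: the single operator-norm convergence $\mathbfcal{S}_{\pmb\alpha}\widehat{\mathbfcal{C}}\mathbfcal{S}_{\pmb\alpha}\to\mathbfcal{C}$ (where the key step---upgrading strong convergence of $\mathbfcal{S}_{\pmb\alpha}$ to norm convergence after composition with the compact $\mathbfcal{C}$---is exactly the right observation) feeds directly into Weyl/min--max and Riesz-projection continuity, yielding all indices at once without induction. The paper's route, by contrast, is more self-contained (no resolvent calculus or spectral functional calculus for the unbounded $\mathcal{Q}$), hews closely to the univariate literature, and surfaces the roughness convergence as an explicit intermediate statement---though you recover the same fact from $\Vert\hat{\pmb\psi}_i^{[\pmb\alpha]}\Vert_{\mathbb H}\to1$ together with $\Vert\hat{\pmb\psi}_i^{[\pmb\alpha]}\Vert_{\pmb\alpha}=1$.
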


\begin{corollary}\label{corollary_31}
	Under the assumptions of Theorem \ref{theorem_cons}, it is possible to choose the sign of $\tilde{\pmb\psi}_i^{[\pmb\alpha]}$ such that $\Vert \tilde{\pmb\psi}_i^{[\pmb\alpha]} - {\pmb\psi}_i \Vert_{\mathbb{H}} \to 0$ with probability 1 as $n\to\infty$.
\end{corollary}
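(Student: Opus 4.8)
The plan is to derive the corollary directly from Theorem \ref{theorem_cons} by a one-line polarization argument, the only real content being a bookkeeping step that fixes the otherwise arbitrary sign of each eigenfunction estimate. Concretely, first I would record the two normalizations that drive the computation. By the very definition $\tilde{\pmb\psi}_i^{[\pmb\alpha]}=\hat{\pmb\psi}_i^{[\pmb\alpha]}/\Vert\hat{\pmb\psi}_i^{[\pmb\alpha]}\Vert_{\mathbb{H}}$, so $\Vert\tilde{\pmb\psi}_i^{[\pmb\alpha]}\Vert_{\mathbb{H}}=1$; and under assumption A1 the population covariance operator $\mathbfcal C$ is compact, positive and self-adjoint, so its eigenfunctions may be taken orthonormal in $\mathbb{H}$, giving $\Vert\pmb\psi_i\Vert_{\mathbb{H}}=1$ as well.

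Next I would make the sign choice explicit: for each $n$, replace $\tilde{\pmb\psi}_i^{[\pmb\alpha]}$ by $-\tilde{\pmb\psi}_i^{[\pmb\alpha]}$ whenever $\langle\tilde{\pmb\psi}_i^{[\pmb\alpha]},\pmb\psi_i\rangle_{\mathbb{H}}<0$, so that after this (data-dependent, deterministic) modification one always has $\langle\tilde{\pmb\psi}_i^{[\pmb\alpha]},\pmb\psi_i\rangle_{\mathbb{H}}\ge 0$. This is legitimate because $-\tilde{\pmb\psi}_i^{[\pmb\alpha]}$ is again a unit-$\pmb\alpha$-norm maximizer of \eqref{e0}--\eqref{e0.5}, and with this choice the conclusion of Theorem \ref{theorem_cons} reads $\langle\tilde{\pmb\psi}_i^{[\pmb\alpha]},\pmb\psi_i\rangle_{\mathbb{H}}\to 1$ with probability $1$.

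Then, with both vectors on the unit sphere of $\mathbb{H}$, I would expand
\begin{equation*}
	\Vert\tilde{\pmb\psi}_i^{[\pmb\alpha]}-\pmb\psi_i\Vert_{\mathbb{H}}^2
	=\Vert\tilde{\pmb\psi}_i^{[\pmb\alpha]}\Vert_{\mathbb{H}}^2
	-2\langle\tilde{\pmb\psi}_i^{[\pmb\alpha]},\pmb\psi_i\rangle_{\mathbb{H}}
	+\Vert\pmb\psi_i\Vert_{\mathbb{H}}^2
	=2-2\langle\tilde{\pmb\psi}_i^{[\pmb\alpha]},\pmb\psi_i\rangle_{\mathbb{H}},
\end{equation*}
and let $n\to\infty$: the right-hand side tends to $0$ almost surely by Theorem \ref{theorem_cons}, which is precisely the assertion. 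The exceptional null set is the same one appearing in Theorem \ref{theorem_cons}, since the sign flip is a deterministic function of the data for each $n$.

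Honestly there is no serious obstacle for the statement as posed — it is a routine consequence of ``unit norm plus inner product $\to 1$ implies distance $\to 0$.'' The only place where a genuine difficulty would appear is if one strengthened the conclusion to convergence in the regularized norm, $\Vert\tilde{\pmb\psi}_i^{[\pmb\alpha]}-\pmb\psi_i\Vert_{\pmb\alpha}\to 0$; that would require controlling the roughness terms $[\tilde{\pmb\psi}_i^{[\pmb\alpha]}]_{H_j}$ (equivalently the behaviour of $\mathbfcal S_{\pmb\alpha}^{-1}$ via Lemma \ref{half-smoothing}) as $\alpha_i\to 0$, which is not needed here.
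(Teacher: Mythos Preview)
Your proposal is correct and follows exactly the same route as the paper: expand $\Vert\tilde{\pmb\psi}_i^{[\pmb\alpha]}-\pmb\psi_i\Vert_{\mathbb{H}}^2=2-2\langle\tilde{\pmb\psi}_i^{[\pmb\alpha]},\pmb\psi_i\rangle_{\mathbb{H}}$ and invoke Theorem \ref{theorem_cons}. If anything, you are more explicit than the paper, which writes the one-line computation but leaves the sign-fixing step implicit.
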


\begin{corollary}\label{corollary_32}
	Under the assumptions of Theorem \ref{theorem_cons}, we have
	\begin{equation*}\label{conv3}
		\Vert {\mathbfcal P}_{\hat{\pmb\psi}_i^{[\pmb\alpha]} }- {\mathbfcal P}_{{\pmb\psi}_i} \Vert_{\mathcal{L}(\mathbb{H})} \to 0,
	\end{equation*}
	with probability 1 as $n\to\infty$.
\end{corollary}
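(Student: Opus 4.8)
The plan is to reduce the statement to a standard perturbation bound for rank-one projections and then invoke Corollary~\ref{corollary_31}. First I would note that for any nonzero $\pmb{\rho}\in\mathbb{H}$ the projection $\mathbfcal{P}_{\pmb{\rho}}$ depends only on the direction of $\pmb{\rho}$, so $\mathbfcal{P}_{\hat{\pmb\psi}_i^{[\pmb\alpha]}}=\mathbfcal{P}_{\tilde{\pmb\psi}_i^{[\pmb\alpha]}}$; since $\Vert\tilde{\pmb\psi}_i^{[\pmb\alpha]}\Vert_{\mathbb{H}}=1$ by construction we have $\mathbfcal{P}_{\tilde{\pmb\psi}_i^{[\pmb\alpha]}}=\tilde{\pmb\psi}_i^{[\pmb\alpha]}\otimes\tilde{\pmb\psi}_i^{[\pmb\alpha]}$, and because $\Vert{\pmb\psi}_i\Vert_{\mathbb{H}}=1$ (it is the $\mathbb{H}$-limit of unit vectors provided by Corollary~\ref{corollary_31}) we have $\mathbfcal{P}_{{\pmb\psi}_i}={\pmb\psi}_i\otimes{\pmb\psi}_i$. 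Hence it suffices to bound $\Vert \pmb{u}\otimes\pmb{u}-\pmb{v}\otimes\pmb{v}\Vert_{\mathcal{L}(\mathbb{H})}$ for unit vectors $\pmb{u},\pmb{v}\in\mathbb{H}$.

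Second, I would use the algebraic identity $\pmb{u}\otimes\pmb{u}-\pmb{v}\otimes\pmb{v}=\pmb{u}\otimes(\pmb{u}-\pmb{v})+(\pmb{u}-\pmb{v})\otimes\pmb{v}$; concretely, for every $\pmb{h}\in\mathbb{H}$,
\begin{equation*}
(\pmb{u}\otimes\pmb{u}-\pmb{v}\otimes\pmb{v})\pmb{h}=\langle\pmb{u},\pmb{h}\rangle_{\mathbb{H}}(\pmb{u}-\pmb{v})+\langle\pmb{u}-\pmb{v},\pmb{h}\rangle_{\mathbb{H}}\,\pmb{v}.
\end{equation*}
Applying the Cauchy--Schwarz inequality together with $\Vert\pmb{u}\Vert_{\mathbb{H}}=\Vert\pmb{v}\Vert_{\mathbb{H}}=1$ yields $\Vert(\pmb{u}\otimes\pmb{u}-\pmb{v}\otimes\pmb{v})\pmb{h}\Vert_{\mathbb{H}}\le 2\Vert\pmb{u}-\pmb{v}\Vert_{\mathbb{H}}\Vert\pmb{h}\Vert_{\mathbb{H}}$, and therefore $\Vert\pmb{u}\otimes\pmb{u}-\pmb{v}\otimes\pmb{v}\Vert_{\mathcal{L}(\mathbb{H})}\le 2\Vert\pmb{u}-\pmb{v}\Vert_{\mathbb{H}}$.

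Third, I would take $\pmb{v}={\pmb\psi}_i$ and $\pmb{u}=\tilde{\pmb\psi}_i^{[\pmb\alpha]}$ with the sign chosen as in Corollary~\ref{corollary_31}, giving
\begin{equation*}
\Vert\mathbfcal{P}_{\hat{\pmb\psi}_i^{[\pmb\alpha]}}-\mathbfcal{P}_{{\pmb\psi}_i}\Vert_{\mathcal{L}(\mathbb{H})}\le 2\,\Vert\tilde{\pmb\psi}_i^{[\pmb\alpha]}-{\pmb\psi}_i\Vert_{\mathbb{H}},
\end{equation*}
whose right-hand side tends to $0$ with probability $1$ as $n\to\infty$ by Corollary~\ref{corollary_31}, which proves the claim. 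I do not expect a genuine obstacle here; the only points needing care are that the rank-one perturbation bound requires unit vectors, so one must pass to the $\mathbb{H}$-normalized iterate $\tilde{\pmb\psi}_i^{[\pmb\alpha]}$ and use $\Vert{\pmb\psi}_i\Vert_{\mathbb{H}}=1$, and that the sign ambiguity of $\tilde{\pmb\psi}_i^{[\pmb\alpha]}$ is immaterial since $\mathbfcal{P}_{\pmb{\rho}}=\mathbfcal{P}_{-\pmb{\rho}}$, so the sign-corrected sequence supplied by Corollary~\ref{corollary_31} may be used verbatim.
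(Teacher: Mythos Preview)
Your proof is correct and follows essentially the same idea as the paper's: reduce the projection difference to a quantity controlled by the closeness of the normalized eigenfunction to $\pmb\psi_i$. The paper expands $\Vert\mathbfcal{P}_{\hat{\pmb\psi}_i^{[\pmb\alpha]}}\pmb{x}-\mathbfcal{P}_{{\pmb\psi}_i}\pmb{x}\Vert_{\mathbb{H}}^2$ directly and bounds it by $2-2\langle\tilde{\pmb\psi}_i^{[\pmb\alpha]},\pmb\psi_i\rangle_{\mathbb{H}}$, invoking Theorem~\ref{theorem_cons} itself; you instead use the tensor identity $\pmb{u}\otimes\pmb{u}-\pmb{v}\otimes\pmb{v}=\pmb{u}\otimes(\pmb{u}-\pmb{v})+(\pmb{u}-\pmb{v})\otimes\pmb{v}$ to obtain the bound $2\Vert\tilde{\pmb\psi}_i^{[\pmb\alpha]}-\pmb\psi_i\Vert_{\mathbb{H}}$ and invoke Corollary~\ref{corollary_31}. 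Since $\Vert\tilde{\pmb\psi}_i^{[\pmb\alpha]}-\pmb\psi_i\Vert_{\mathbb{H}}^2=2-2\langle\tilde{\pmb\psi}_i^{[\pmb\alpha]},\pmb\psi_i\rangle_{\mathbb{H}}$, the two bounds are equivalent up to a harmless constant, so there is no substantive difference.
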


\section{Implementation Strategy}\label{implement}
In the context of functional data analysis, we assume that the underlying sample functions, denoted as $x_{i,j}(\cdot)$, are smooth for each sample $i$ and variable $j$, where $i=1, \ldots, n$ and $j=1,\ldots, p$. However, in practical scenarios, observations are often obtained discretely on a set of grid points and contaminated by random noise. This can be represented as:
\begin{equation*}
	Y_{i,j,k} = x_{i,j}(t_k) + \varepsilon_{i,j,k}, \quad k=1,\ldots, K_j.
\end{equation*}
Here, $t_k\in\mathcal{T}_j$ and $K_j$ represents the number of discrete grid points for all samples from variable $j$. The $\varepsilon_{i,j,k}$ terms are iid random noises. To preprocess the raw data, smoothing techniques are commonly employed to convert the discrete observations into a continuous form, typically done separately for each variable and sample. One commonly used approach is the finite basis function expansion \citep[see][]{ramsay2005}.

We consider a basis system $\{\nu_j^k\}_{k\in\mathbb{N}}$ for the space $H_j$, where $j=1,\ldots,p$. By projecting each element $x_{i,j}(t)$ onto a finite-dimensional subspace $H_j^{d_j}=sp\{\nu_j^k\}_{k=1}^{d_j}\subseteq H_j$, we can represent $x_{i,j}(t)$ uniquely as:
\begin{equation*}
	x_{i,j}(t) = \pmb{\nu}_j(t)^\top\pmb{c}_{i,j}.
\end{equation*}
Here, $\pmb{c}_{i,j}=(c_{i,j,1},\ldots, c_{i,j,d_j})^\top\in\mathbb{R}^{d_j}$, $\pmb{\nu}_j=(\nu_j^1,\ldots, \nu_j^{d_j})^\top$, $t\in\mathcal{T}_j$, and $i=1,\ldots,n$. Similarly, we define $\pmb{\psi}^{(\ell)}=(\psi_1^{(\ell)},\ldots, \psi_p^{(\ell)})^\top$ with components represented as $\psi_j^{(\ell)}=\pmb{\nu}_{j}(t)^\top\pmb{b}^{(\ell)}_{j}$, where $\pmb{b}^{(\ell)}_{j}\in\mathbb{R}^{d_j}$. Using these notations, we derive the following results. First, we have:
\begin{equation}\label{e1}
	\langle x_{i,j} , \psi_j^{(\ell)} \rangle_{H_j} = \pmb{c}_{i,j}^\top \pmb{G}_j \pmb{b}^{(\ell)}_{j},
\end{equation}
where $\pmb{G}_j$ represents the \textit{Gram matrix} of the space $H_j^{d_j}$. Based on this, we obtain:
\begin{equation*}
	\langle \pmb{x}_{i} , \pmb{\psi}^{(\ell)} \rangle_{\mathbb{H}} =\sum_{j=1}^p \pmb{c}_{i,j}^\top \pmb{G}_j \pmb{b}^{(\ell)}_{j}=\pmb{c}_{i}^\top \pmb{G}\pmb{b}^{(\ell)}.
\end{equation*}
Here, $\pmb{c}_{i}^\top=(\pmb{c}_{i,1}^\top, \ldots, \pmb{c}_{i,p}^\top),$ $\pmb{b}^{(\ell)\top}=({\pmb{b}^{(\ell)\top}_{1}}, \ldots, {\pmb{b}^{(\ell)\top}_{p}})$, and $\pmb{G}=\diag\{\pmb{G}_1, \ldots , \pmb{G}_p \}.$
Furthermore, we have:
\begin{equation*}
	\langle \hat{\pmb{\mu}} , \pmb{\psi}^{(\ell)} \rangle_{\mathbb{H}} =\dfrac{1}{n}{ \bf 1}_n^\top \pmb{B}\pmb{G}\pmb{b}^{(\ell)},
\end{equation*}
where $\pmb{B}=\left[\pmb{c}_1,\ldots,\pmb{c}_n\right]^\top.$ This leads to:
\begin{equation*}
	\langle  \pmb{x}_{i}-\hat{\pmb{\mu}} , \pmb{\psi}^{(\ell)} \rangle_{\mathbb{H}} =\pmb{c}_{i}^\top \pmb{G}\pmb{b}^{(\ell)}-\dfrac{1}{n}{ \bf 1}_n^\top \pmb{B}\pmb{G}\pmb{b}^{(\ell)},
\end{equation*}
and subsequently:
\begin{align*}
	\langle {\widehat{\mathbfcal{C}}}\pmb{\psi}^{(\ell)} , \pmb{\psi}^{(\ell)} \rangle_{\mathbb{H}}&=\dfrac{1}{n-1}\sum_{i=1}^n \langle  \pmb{x}_{i}-\hat{\pmb{\mu}} , \pmb{\psi}^{(\ell)} \rangle_{\mathbb{H}}^2\\
	&=\dfrac{1}{n-1}\sum_{i=1}^n \left(\pmb{c}_{i}^\top \pmb{G}\pmb{b}^{(\ell)}-\dfrac{1}{n}{ \bf 1}_n^\top \pmb{B}\pmb{G}\pmb{
		b}^{(\ell)}\right)^2 \\
	&= {\pmb{b}^{(\ell)}}^\top\pmb{G}^\top\pmb{V}\pmb{G}\pmb{b}^{(\ell)},
\end{align*}
where $\pmb{V}=\dfrac{1}{n-1}\pmb{B}^\top \left({\bf I}_n -\dfrac{1}{n}{\bf J}_n\right)\pmb{B}$. In addition, we have:
\begin{equation}\label{e2}
	\Vert \pmb{\psi}^{(\ell)} \Vert_{\pmb \alpha}^2 = \sum_{j=1}^p \Vert \psi_j^{(\ell)} \Vert_{\alpha_j}^2 = \sum_{j=1}^p \Vert {\psi_j^{(\ell)} } \Vert_{H_j}^2 + \sum_{j=1}^p \alpha_j\Vert { \mathcal{D}^2 \psi_{j}^{(\ell)}} \Vert_{H_j}^2.
\end{equation}
Using \eqref{e1}, the first term of \eqref{e2} can be expressed as:
\begin{equation}\label{e3}
	\sum_{j=1}^p \Vert {\psi_j^{(\ell)} } \Vert_{H_j}^2 = \Vert \pmb{\psi}^{(\ell)}\Vert_{\mathbb{H}}^2 = {\pmb{b}^{(\ell)\top}}\pmb{G}\pmb{b}^{(\ell)}.
\end{equation}
Furthermore, we have:
\begin{align*}
	\mathcal{D}^2 \psi_{j}^{(\ell)}(t) &= \mathcal{D}^2 \sum_{k=1}^{d_j} b^{(\ell)}_{j,k}\nu_{j}^k(t) = \sum_{k=1}^{d_j} b^{(\ell)}_{j,k}\mathcal{D}^2\nu_{j}^k(t).
\end{align*}
From this, we obtain:
\begin{align*}
	\Vert \mathcal{D}^2\psi^{(\ell)}_{j} \Vert_{H_j}^2 &= {\pmb{b}^{(\ell)\top}_j}\pmb{D}_j\pmb{b}^{(\ell)}_j,
\end{align*}
where $\pmb{D}_j$ is the penalty matrix of the $j$-th term. Using this, the second term of \eqref{e2} becomes:
\begin{equation}\label{e4}
	\sum_{j=1}^p \alpha_j\Vert { \mathcal{D}^2 \psi_{j}^{(\ell)}} \Vert_{H_j}^2 = {\pmb{b}^{(\ell)\top}}\pmb{D}_{\pmb\alpha}\pmb{b}^{(\ell)},
\end{equation}
where $\pmb{D}_{\pmb\alpha}=\diag\{\alpha_1\pmb{D}_1, \ldots, \alpha_p\pmb{D}_p \}$. By substituting \eqref{e3} and \eqref{e4} into \eqref{e2}, we obtain:
\begin{equation*}\label{e5}
	\Vert \pmb{\psi}^{(\ell)} \Vert_{\pmb \alpha}^2 = {\pmb{b}^{(\ell)\top}}\left(\pmb{G}+\pmb{D}_{\pmb\alpha}\right)\pmb{b}^{(\ell)}.
\end{equation*}
Finally, equation \eqref{e0} can be rewritten as:
\begin{equation}\label{eq_mfpc}
	\dfrac{\langle {\widehat{\mathbfcal{C}}}{\pmb{\psi}^{(\ell)}} , \pmb{\psi}^{(\ell)} \rangle_{\mathbb{H}}}{\Vert \pmb{\psi}^{(\ell)} \Vert_{\pmb \alpha}^2} = \dfrac{{\pmb{b}^{(\ell)\top}}\pmb{G}^\top\pmb{V}\pmb{G}\pmb{b}^{(\ell)}}{{\pmb{b}^{(\ell)\top}}\left(\pmb{G}+\pmb{D}_{\pmb\alpha}\right)\pmb{b}^{(\ell)}}.
\end{equation}
The eigenequation corresponding to \eqref{eq_mfpc} can be expressed as:
\begin{align*}\label{eq_mfpc2}
	\pmb{G}^\top\pmb{V}\pmb{G}\pmb{b}^{(\ell)} = \lambda \left(\pmb{G} + \pmb{D}_{\pmb\alpha}\right)\pmb{b}^{(\ell)}.
\end{align*}
By considering a factorization $\pmb{LL}^\top=\pmb{G}+\pmb{D}_{\pmb\alpha}$ and defining $\pmb{S} = \pmb{L}^{-1}$, we obtain:
\begin{equation}\label{e6}
	\left(\pmb{S} \pmb{G}^\top\pmb{V}\pmb{G}\pmb{S}^\top\right)\left(\pmb{L}^\top\pmb{b}^{(\ell)}\right) = \lambda\pmb{L}^\top \pmb{b}^{(\ell)}.
\end{equation}
To solve \eqref{e6}, we utilize the eigenvectors $\pmb{u}^{(\ell)}$ and their corresponding eigenvalues $\lambda^{(\ell)}$ of $\pmb{S} \pmb{G}^\top\pmb{V}\pmb{G}\pmb{S}^\top$. Note that any scalar multiple of $\pmb{S}^\top\pmb{u}^{(\ell)}$ is a valid solution to \eqref{e6} with the same eigenvalue $\lambda^{(\ell)}$. However, in order to satisfy the constraint $\Vert \pmb{\psi}^{(\ell)}\Vert_{\pmb{\alpha}}^2={\pmb{b}^{(\ell)\top}}\pmb{LL}^\top\pmb{b}^{(\ell)}=1$, we normalize $\pmb{b}^{(\ell)}$ by setting $\pmb{b}^{(\ell)}=\pmb{S}^\top\pmb{u}^{(\ell)}.$

We note that the proposed method is implemented in an \proglang{R} package \pkg{ReMFPCA} (\url{https://github.com/haghbinh/ReMFPCA}).

\section{Simulation Study}\label{simul}
In this simulation study, we evaluate the performance of our ReMFPCA approach through two simulation setups. The first setup (Section \ref{fpc estimation}) focuses on assessing the estimation accuracy of functional PCs, while the second setup (Section \ref{clustering section}) evaluates the effectiveness of the proposed approach in clustering tasks.

We introduce a bivariate functional object $\mathbf{X(\pmb{t})}$, represented as $({X_1(t_1)}, $ ${X_2(t_2)})^{\top}$, and an orthonormal basis system $\pmb\psi_m \mathbf{(\pmb{t})} = \left( \psi_m^{(1)}(t_1), \psi_m^{(2)}(t_2) \right)^\top$, where $\psi_m^{(1)}(t)=\sin \left(m \pi t\right)$ and $\psi_m^{(2)}({t})=\sin \left(\frac{(2m-1)\pi}{2}t\right)$. For both simulation setups, we adopt the following functional data generating model:
\begin{equation} \label{generate}
	{\pmb{X}_i(\pmb{t})} = \sum_{m=1}^{M} {\rho}_{i,m} \pmb\psi_m \mathbf{(\pmb{t})},\qquad i=1,\ldots,n.
\end{equation}
Here, ${\rho}_{i,m}$ are iid normal random variables with mean $0$ and variance $\lambda_m$. The value of $\lambda_m$ is determined as $\lambda_m = \left(\frac{2\theta}{(2m-1)\pi}\right)^{2}$. It is worth noting that equation \eqref{generate} corresponds to the truncated multivariate Karhunen-Lo\`eve expansion of ${\pmb{X}_i(\pmb{t})}$, with the associated multivariate functional PCs being $\pmb\psi_m$ as described in \cite{happ2018multivariate}.

To simulate our observations, we consider 
\begin{equation} \label{generate_meanErr}
{\pmb{Y}_i(\pmb{t})} = \pmb{\mu}\mathbf{(\pmb{t})}+ {\pmb{X}_i(\pmb{t})} + \pmb{\epsilon}_i(\pmb{t}),
\end{equation}
where $\pmb{Y}_i(\pmb{t})$ represents the observed data at point $\pmb{t}$. The measurement errors, denoted as $\pmb{\epsilon}_i(\pmb{t})$, follow an iid normal distribution with mean $\pmb{0}$ and covariance matrix $\pmb\Sigma$:
$$\begin{bmatrix}
\sigma_1^2 & \rho\sigma_1\sigma_2  \\
\rho\sigma_1\sigma_2 & \sigma_2^2
\end{bmatrix},$$ 
where $\sigma_1^2$ and $\sigma_2^2$ represent the variances of the measurement errors for the first and second variables, respectively, and $\rho$ represents the correlation between them. By incorporating correlation in the measurement errors, we are able to capture the desired characteristics in our simulation.

\subsection{Estimation Performance} \label{fpc estimation}
In the first simulation setup, we consider $\theta = 1$ and set $\pmb{\mu}\mathbf{(\pmb{t})}$ in \eqref{generate_meanErr} to be \pmb{0}. This choice allows the first and second variables in this setup to resemble Brownian motion and Brownian bridge processes, respectively. These processes are widely recognized and extensively utilized in various fields, making them suitable and practical options for our simulation study. To generate $\pmb{\epsilon}_i$s, we set $\sigma_1 = \sigma_2 = 0.5$ and incorporate an error correlation of $\rho = 0.4$.

To assess the performance of our ReMFPCA approach, we compare it with three other methods: MFPCA, marginal FPCA and marginal ReFPCA. In the marginal approaches we separately implement the associated techniques (FPCA and ReFPCA), and then combine the results to obtain the multivariate outcome. In the conducted experiments, we generate $100$ replications, each consisting of $n=100$ observations, based on the specified setups. The accuracy of the estimated eigenvalue and eigenfunction pairs, denoted as $\hat{\lambda}_m$ and $\hat{\pmb\psi}_m$ respectively, was evaluated by comparing them to their original counterparts in each replication. This evaluation has been done using two measures: $Err(\hat{\lambda}_m) = |\hat{\lambda}_m-\lambda_m|/|\lambda_m|$ and $Err(\hat{\pmb\psi}_m) = \Vert\hat{\pmb\psi}_m-\pmb\psi\Vert_{\mathbb{H}}$.
The performance of the estimation for each method is presented in Figure \ref{sim1:plot1} and \ref{sim1:plot2} for the first eight PC ($m = 1 \dots 8$). Furthermore, the accuracy of the estimated Karhunen-Lo\'eve representation is assessed in each replication using the mean relative absolute error (MRAE), defined as $\frac{1}{n}\sum_{i=1}^{n}(\Vert\hat{\pmb{x}}_i-\pmb{x}_i\Vert_{\mathbb{H}})/\Vert\pmb{x}_i\Vert_{\mathbb{H}}$, where $\hat{\pmb{x}}_i = \sum_{m=1}^J \langle \pmb{x}_{i} , \hat{\pmb\psi}_m \rangle_{\mathbb{H}}\hat{\pmb\psi}_m.$ The resulting MRAE values are presented in the Figure \ref{sim1:plot3}, illustrating the performance of the estimation as the total number of PCs used, $J$, varies from 1 to 8. 

\begin{figure}[!t] 
  \centering
  \begin{subfigure}{0.5\textwidth}
    \centering
    \includegraphics[width=\textwidth]{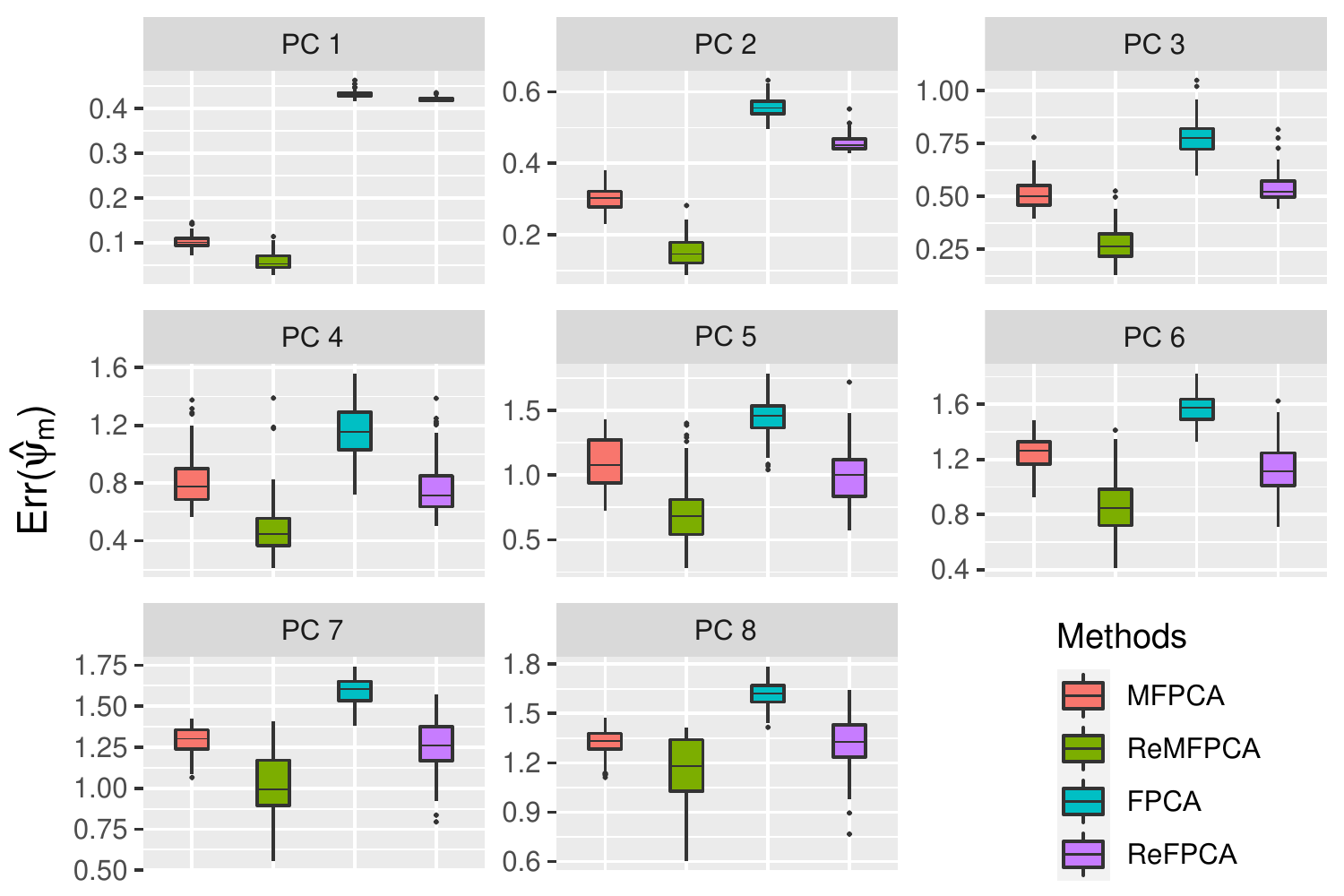}
    \caption{$Err(\hat{\pmb\psi}_m)$}
    \label{sim1:plot1}
  \end{subfigure}%
  \begin{subfigure}{0.5\textwidth}
    \centering
    \includegraphics[width=\textwidth]{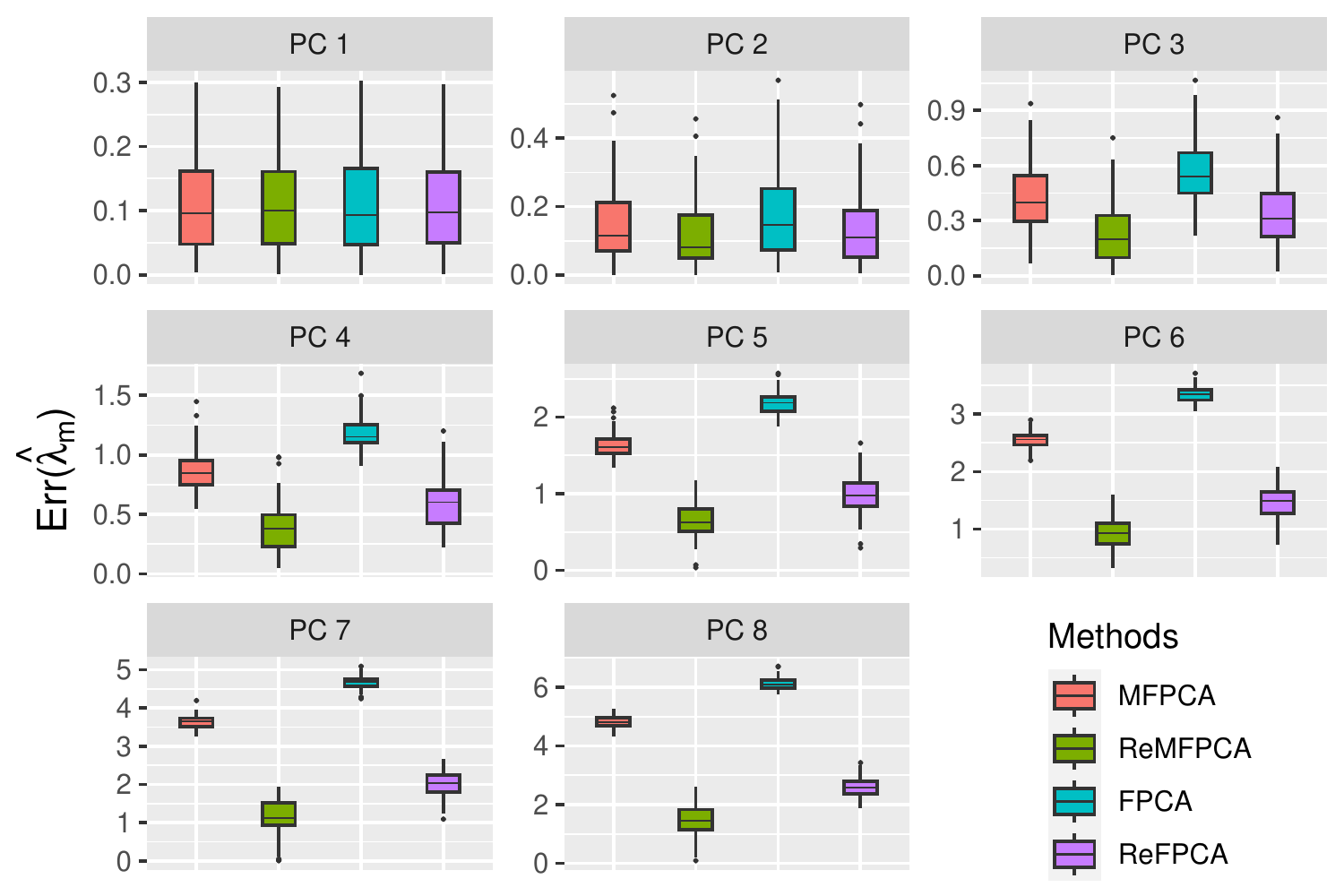}
    \caption{$Err(\hat{\lambda}_m)$}
    \label{sim1:plot2}
  \end{subfigure}
  \begin{subfigure}{0.65\textwidth}
    \centering
    \includegraphics[width=\textwidth]{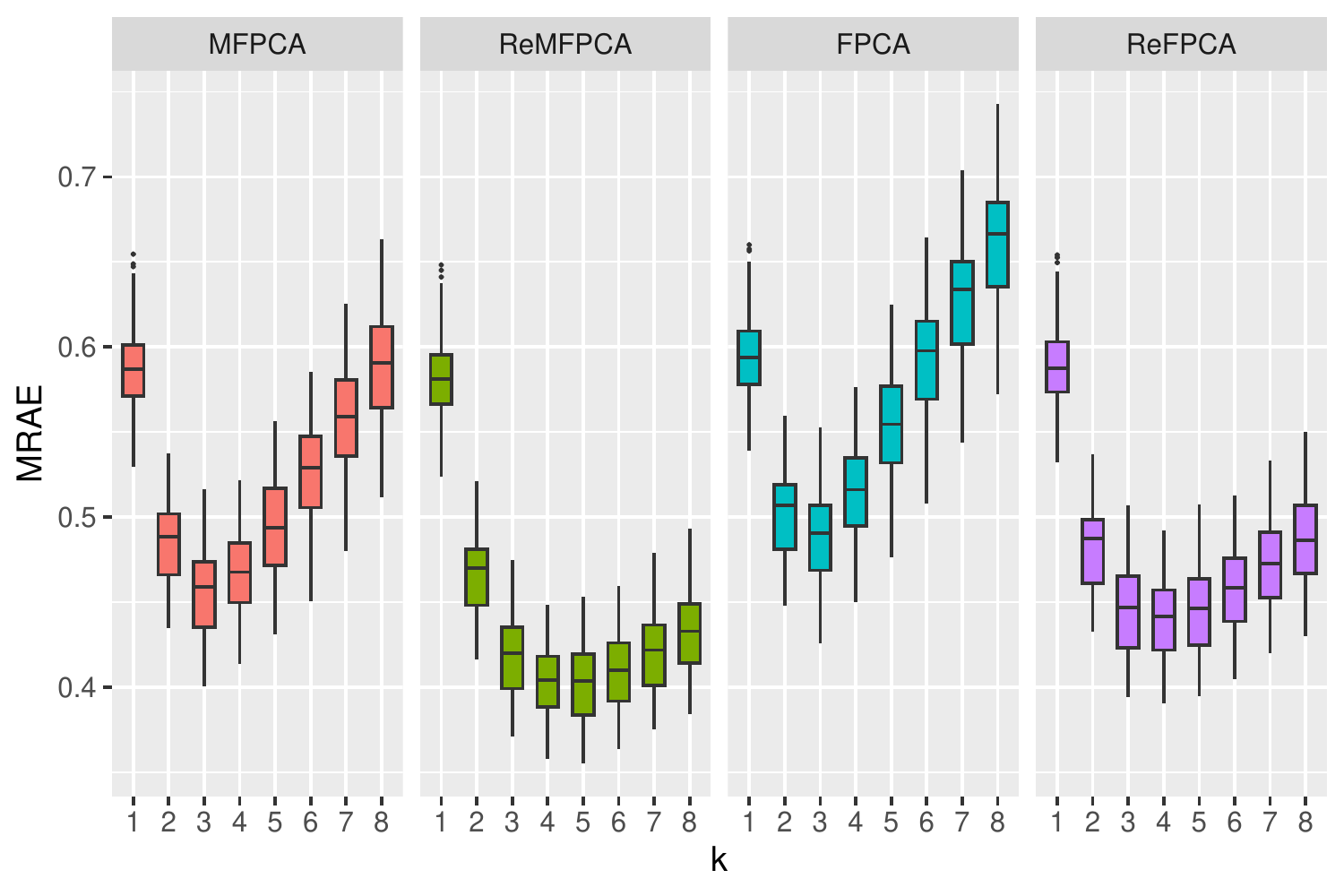}
    \caption{MRAE}
    \label{sim1:plot3}
  \end{subfigure}
  \caption{Relative errors for (a) $\hat{\pmb\psi}_m$s; (b) $\hat{\lambda}_m$s; and (c) MRAE results.}
  \label{sim1}
\end{figure}

The results depicted in Figure \ref{sim1} demonstrate the superior performance of the multivariate approaches, namely MFPCA and ReMFPCA, compared to their marginal counterparts, FPCA and ReFPCA. Furthermore, the utilization of regularization techniques, as observed in ReFPCA and ReMFPCA, leads to enhanced efficiency when compared to the non-regularized alternatives (FPCA and MFPCA). Consequently, our proposed method, ReMFPCA, exhibits a distinct advantage over the other methods.

\subsection{Clustering Performance} \label{clustering section}
In the second simulation study, we aimed to assess the suitability of the proposed approach (ReMFPCA) for clustering purposes, specifically its ability to accurately group functions with similar structures. To achieve this, we generate observations with diverse functional structures and examined how well the multivariate approaches (MFPCA and ReMFPCA) could cluster them. In the remainder of this paper, for clustering tasks, we employ k-medoids clustering on the respective PC scores, and determine the number of clusters using the Silhouette measure (see Figure \ref{silhouette}).

\begin{table}[!b] 
\centering
\resizebox{.915\textwidth}{!}{\begin{tabular}{rlllllll}
  \hline
&  &  & \multicolumn{2}{c}{ARI} & \multicolumn{2}{c}{NMI} \\ 
  \hline
M & $\theta$ & $\sigma$ & MFPCA & ReMFPCA & MFPCA & ReMFPCA \\ 
\hline
 &  & 5 & 0.9991 (0.0054) & \pmb{1} (0) & 0.9987 (0.0071) & \pmb{1} (0) \\ 
 & 1.25 & 10 & 0.9261 (0.048) & \pmb{0.9567} (0.0343) & 0.9066 (0.0575) & \pmb{0.943} (0.0439) \\ 
 &  & 15 & 0.5048 (0.1327) & \pmb{0.7413} (0.0853) & 0.4755 (0.1186) & \pmb{0.7006} (0.0883) \\ 
 \cline{2-7}
 &  & 5 & 0.9975 (0.0094) & \pmb{0.9997} (0.0032) & 0.9967 (0.0128) & \pmb{0.9996} (0.0042) \\ 
100 & 1.5 & 10 & 0.9109 (0.0526) & \pmb{0.9473} (0.037) & 0.8897 (0.0608) & \pmb{0.9317} (0.0461) \\ 
 &  & 15 & 0.4965 (0.1399) & \pmb{0.7292} (0.0855) & 0.4681 (0.1236) & \pmb{0.6892} (0.088) \\ 
  \cline{2-7}
 &  & 5 & 0.9933 (0.0141) & \pmb{0.9973} (0.0086) & 0.991 (0.0188) & \pmb{0.9963} (0.0119) \\ 
 & 2 & 10 & 0.8685 (0.0681) & \pmb{0.9198} (0.043) & 0.8424 (0.0765) & \pmb{0.8986} (0.0511) \\ 
 &  & 15 & 0.4588 (0.1373) & \pmb{0.671} (0.1095) & 0.4353 (0.1186) & \pmb{0.6333} (0.1048) \\ 
\hline
 &  & 5 & 0.9991 (0.005) & \pmb{0.9997} (0.0029) & 0.9988 (0.0071) & \pmb{0.9996} (0.0042) \\ 
 & 1.25 & 10 & 0.9336 (0.0435) & \pmb{0.962} (0.033) & 0.913 (0.0548) & \pmb{0.9494} (0.0434) \\ 
 &  & 15 & 0.5196 (0.15) & \pmb{0.7414} (0.0917) & 0.4888 (0.1353) & \pmb{0.6982} (0.0939) \\ 
  \cline{2-7}
 &  & 5 & 0.9985 (0.0064) & \pmb{0.9997} (0.0029) & 0.9979 (0.0091) & \pmb{0.9996} (0.0042) \\ 
200 & 1.5 & 10 & 0.919 (0.0478) & \pmb{0.9497} (0.039) & 0.8957 (0.0583) & \pmb{0.934} (0.0498) \\ 
 &  & 15 & 0.4998 (0.1532) & \pmb{0.726} (0.0971) & 0.4716 (0.1369) & \pmb{0.6847} (0.0984) \\ 
  \cline{2-7}
 &  & 5 & 0.9961 (0.0109) & \pmb{0.9988} (0.0059) & 0.9946 (0.0152) & \pmb{0.9983} (0.0082) \\ 
 & 2 & 10 & 0.8776 (0.0666) & \pmb{0.9213} (0.0503) & 0.8489 (0.0751) & \pmb{0.9001} (0.0608) \\ 
 &  & 15 & 0.4813 (0.1443) & \pmb{0.6881} (0.1015) & 0.4533 (0.1283) & \pmb{0.6471} (0.1006) \\ 
\hline
 &  & 5 & 0.9994 (0.0044) & \pmb{1} (0) & 0.9992 (0.0059) & \pmb{1} (0) \\ 
 & 1.25 & 10 & 0.9298 (0.0519) & \pmb{0.9568} (0.0372) & 0.9108 (0.0631) & \pmb{0.9442} (0.0466) \\ 
 &  & 15 & 0.5287 (0.1372) & \pmb{0.7439} (0.1019) & 0.4975 (0.1235) & \pmb{0.7031} (0.1062) \\ 
  \cline{2-7}
 &  & 5 & 0.9991 (0.0052) & \pmb{0.9997} (0.0029) & 0.9987 (0.0071) & \pmb{0.9996} (0.0041) \\ 
300 & 1.5 & 10 & 0.9165 (0.0527) & \pmb{0.9436} (0.0446) & 0.8953 (0.0623) & \pmb{0.928} (0.0541) \\ 
 &  & 15 & 0.5188 (0.1418) & \pmb{0.7255} (0.1065) & 0.4884 (0.1279) & \pmb{0.6846} (0.1096) \\ 
  \cline{2-7}
 &  & 5 & 0.9964 (0.0098) & \pmb{0.9988} (0.0058) & 0.995 (0.0136) & \pmb{0.9983} (0.0082) \\ 
 & 2 & 10 & 0.8694 (0.0658) & \pmb{0.9155} (0.0479) & 0.8416 (0.0732) & \pmb{0.8947} (0.0557) \\ 
 &  & 15 & 0.4873 (0.1423) & \pmb{0.6834} (0.1129) & 0.4589 (0.1277) & \pmb{0.6448} (0.1131) \\ 
   \hline
\end{tabular}}
\caption{Comparison of ReMFPCA and MFPCA performance in clustering tasks under varying values of M, $\theta$, and $\sigma$. Mean values are reported, with standard errors shown in parentheses.}
\label{cluster_table}
\end{table}

To create varied functional structures, we manipulated the parameter $\pmb{\mu}\mathbf{(\pmb{t})}$ in equation \eqref{generate_meanErr}. In our simulation, we introduced a matrix $\pmb{A}$ defined as follows:
\begin{equation*}
\pmb{A} = \begin{bmatrix}
4 & 0 & 4 \\
4 & 4 & 0 \\
0 & 4 & 4 \\
\end{bmatrix}.
\end{equation*}
By multiplying $[\pmb{\psi}_m\mathbf{(\pmb{t})}]_{m=1}^3$ with $\pmb{A}$, we generated three distinct mean functions denoted as $[\pmb{\mu}_i\mathbf{(\pmb{t})}]_{i=1}^3$. Each class ($\pmb{\mu}_i\mathbf{(\pmb{t})}$) had the same size ($n/3$) in each replication. For the simulations, we assumed a correlation of $\rho=0.4$ and set $\sigma_1=\sigma_2=\sigma$ (referred to as $\sigma$). By adjusting the parameters $\theta$, $\sigma$, and $M$, we were able to generate observations with varying degrees of smoothness, where higher values of these parameters resulted in rougher observations. 

To comprehensively compare the performance of ReMFPCA and MFPCA, we evaluate their outcomes using two widely accepted external measures: the adjusted rand index (ARI) \citep{hubert1985comparing} and the normalized mutual information (NMI) \citep{kvalseth1987entropy}. The summary statistics of ReMFPCA and MFPCA performance, based on these measures, are presented in Table \ref{cluster_table}. Our analysis involves 100 replications, each comprising 99 observations, across various scenarios characterized by different values of M, $\theta$, and $\sigma$. Across all experiments, ReMFPCA consistently outperforms MFPCA, yielding higher ARI and NMI values. Notably, when confronted with observations exhibiting high roughness, ReMFPCA demonstrates superior performance compared to the MFPCA approach.

\begin{figure}[!b] 
\centering
  \includegraphics[width=.59\textwidth]{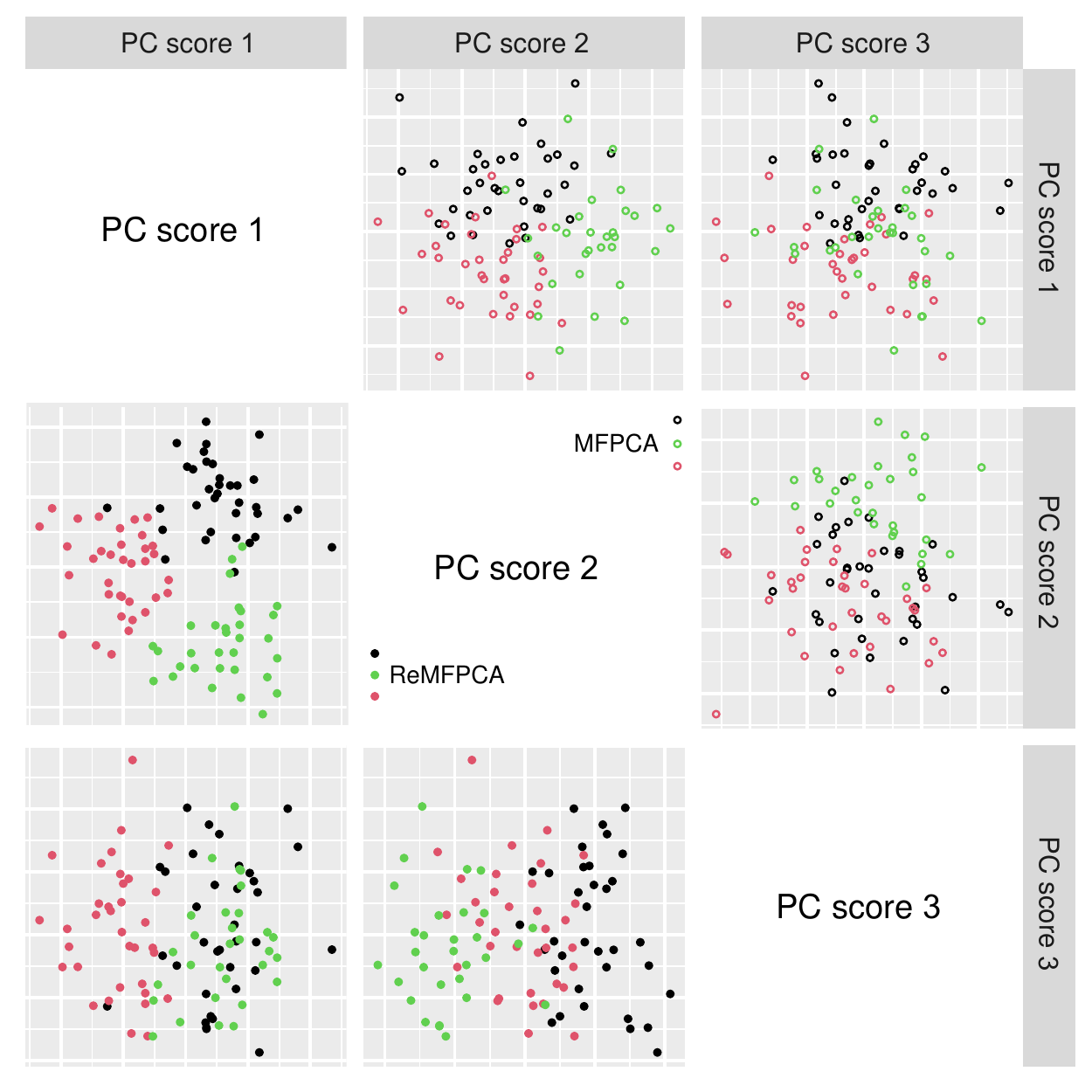}
  \caption{Pairwise scatterplot of the first three PC scores for a randomly selected repetition. Lower triangular portion: results from ReMFPCA approach; upper triangular portion: results from MFPCA approach.}
  \label{clustering_compare}
\end{figure}

To provide visual insights into the results, Figure \ref{clustering_compare} displays the first three principal component (PC) scores obtained from ReMFPCA and MFPCA. The illustration corresponds to a randomly selected replication with $M = 300$, $\theta = 1$, and $\sigma = 15$. It is evident that the ReMFPCA method exhibits distinct segregation into three groups, indicating its potential effectiveness for clustering purposes.

\section{Application to power consumption dataset}\label{application}
In this section, we consider a dataset contains measurements of electric power consumption in one household located in Sceaux (7km of Paris, France) with a 5-minute sampling rate between December 2006 and November 2010 \citep{HouseholdElectricPower}. This dataset includes two variables, active power and voltage, with a total of $n = 1440$ observations. To ensure both variables contribute equal amounts of variation to the analysis, we follow \cite{happ2018multivariate} to rescale variables via weight $w_j$:
$$w_j = \left(\int_{\mathcal{T}_i} \widehat{Var}(X_j(t))dt\right)^{-1},\qquad j = 1,2.$$
Using these weights, the integrated variance equals 1 for the rescaled variable $\tilde{X}_j(t) = w_j^{1/2}X_j(t).$

\begin{figure}[!bh]
    \centering
    \includegraphics[width=.98\textwidth]{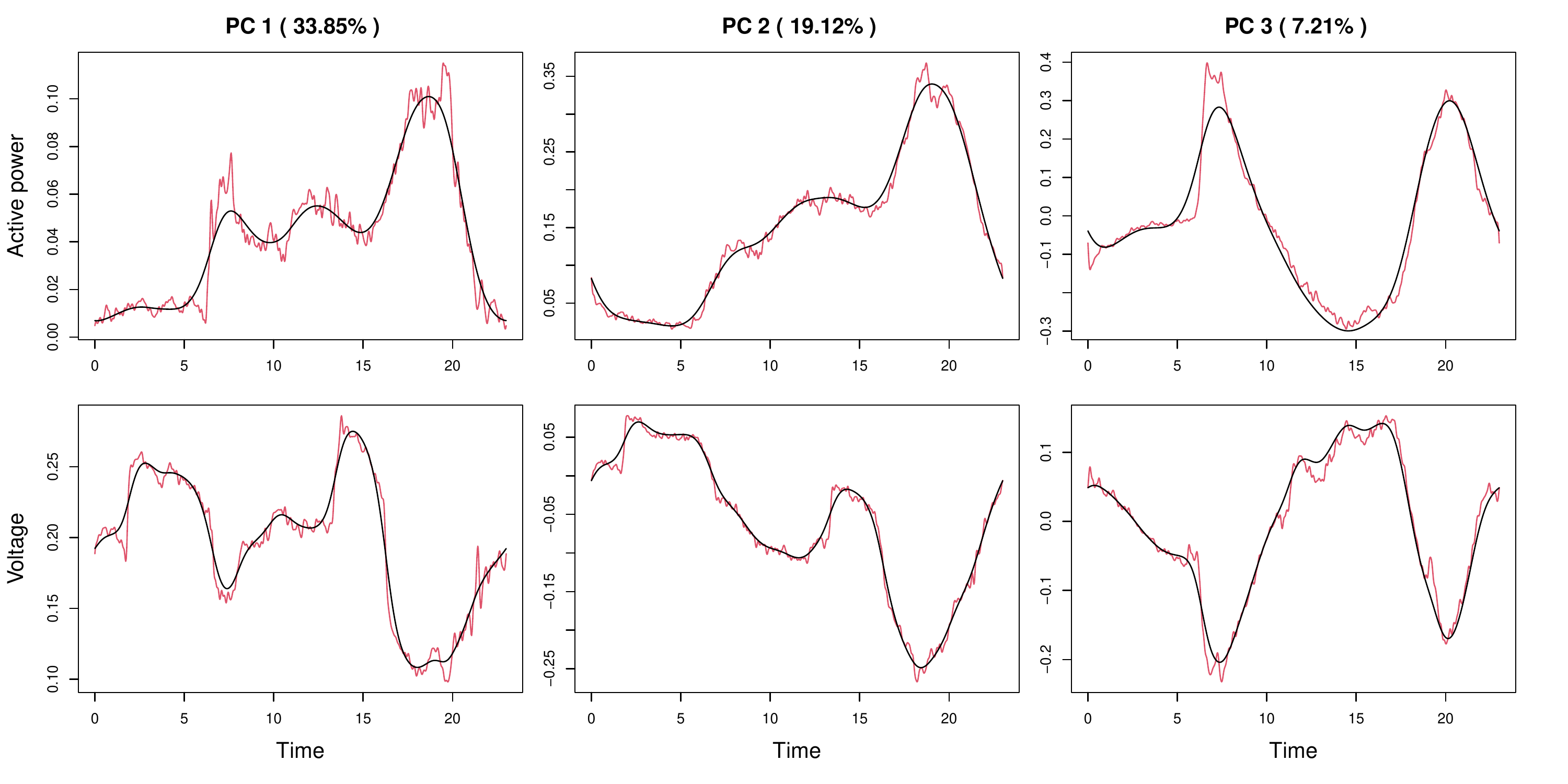}
    \caption{First three functional PCs of MFPCA (red) and ReMFPCA (black).}
    \label{FPC plot:plot1}
  \end{figure}

\begin{figure}[!t]
  \begin{subfigure}[b]{0.5\linewidth}
    \centering
    \includegraphics[width=\textwidth]{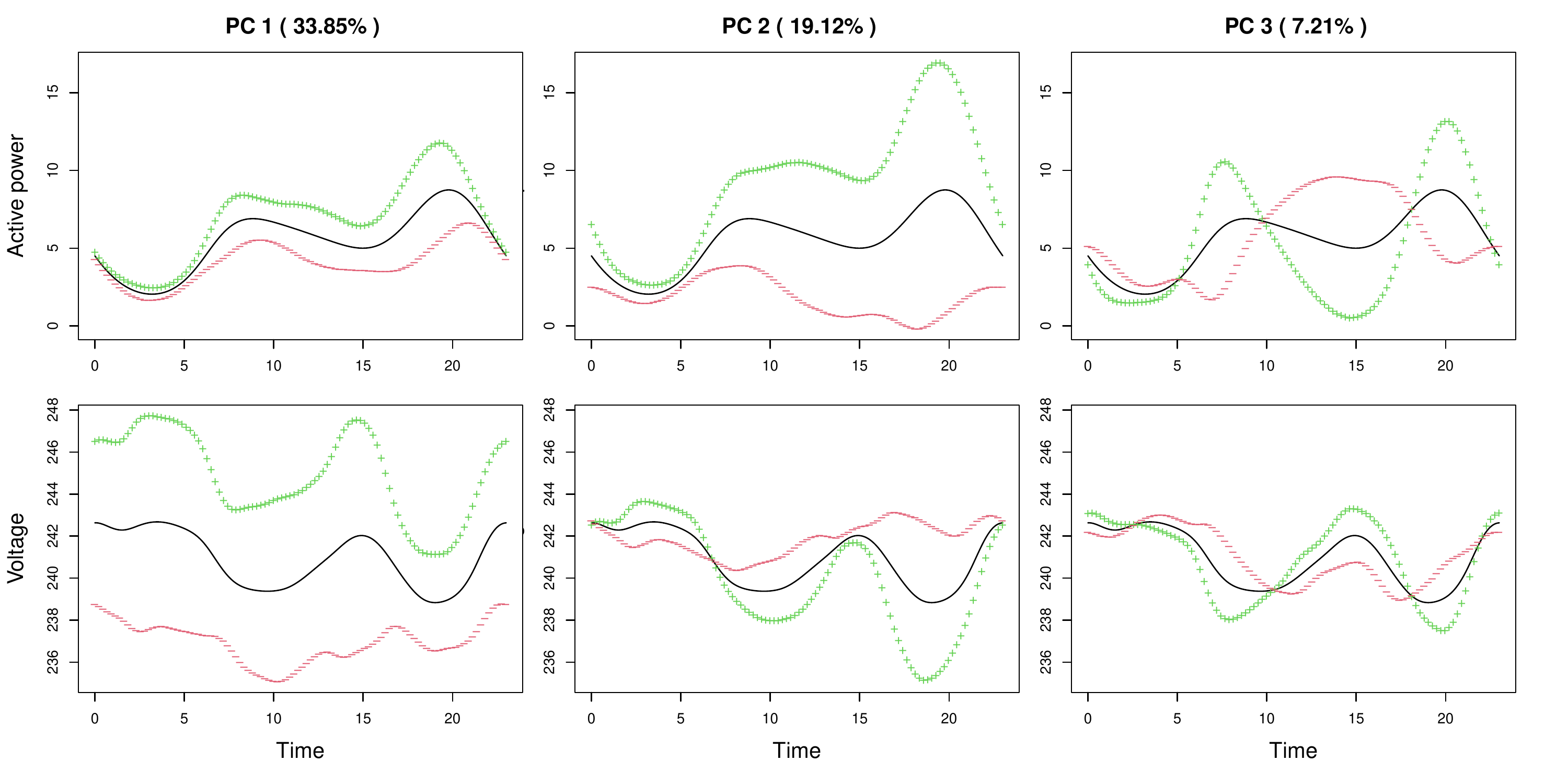}
    \caption{ReMPCA}
    \label{FPC plot:plot2}
  \end{subfigure}
    \begin{subfigure}[b]{0.5\linewidth}
    \centering
    \includegraphics[width=\textwidth]{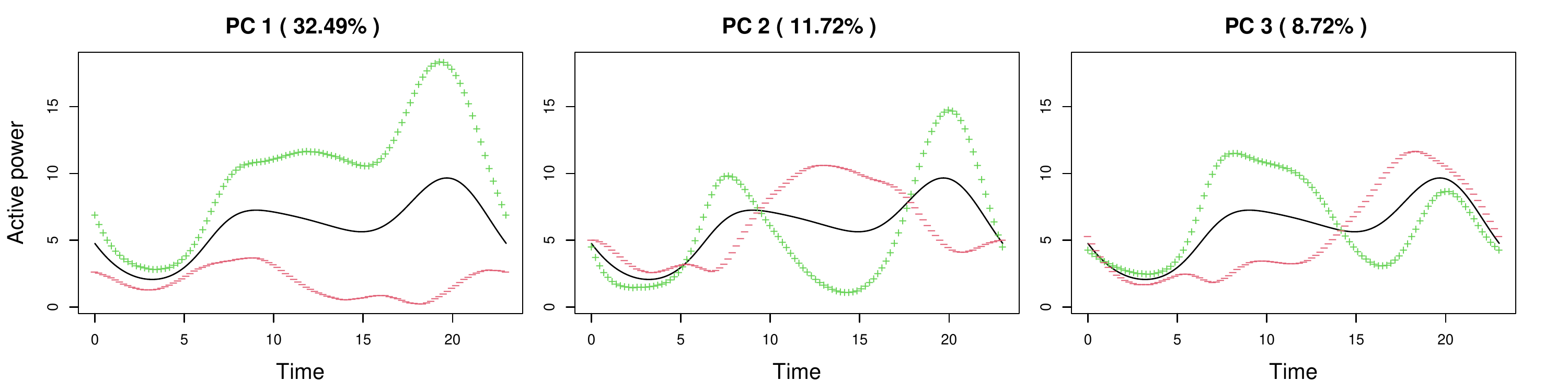}
    \includegraphics[width=\textwidth]{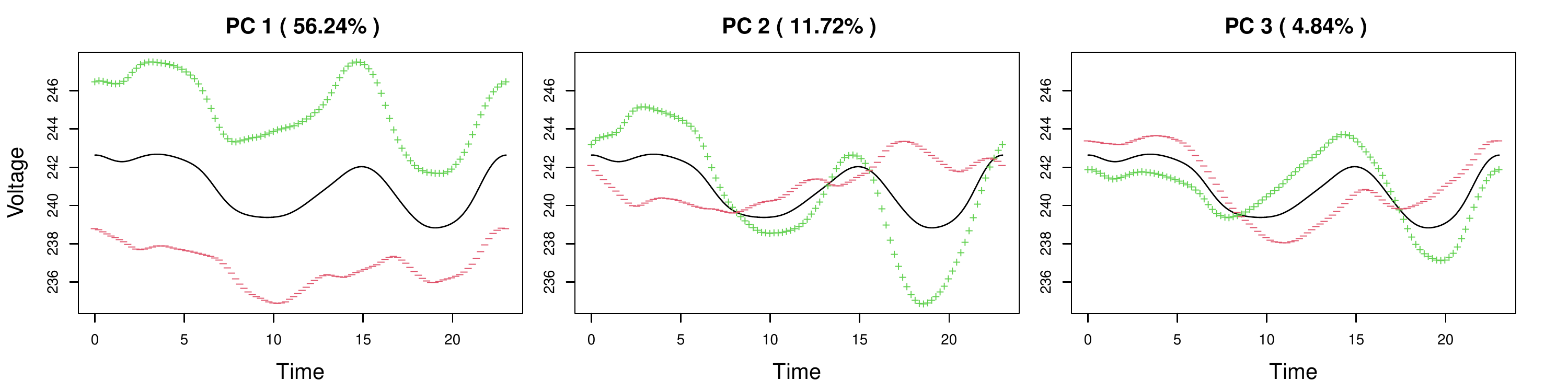}
    \caption{ReFPCA}
    \label{FPC plot:plot3}
  \end{subfigure}
  \caption{Effect on the overall mean curve of adding and subtracting a suitable multiple of each of the first three PCs from (a) ReMFPCA, and (b) ReFPCA.}
  \label{FPC plot}
\end{figure}
 
\begin{figure}[!b] 
  \centering
  \begin{subfigure}{0.5\textwidth}
    \centering
    \includegraphics[width=\textwidth]{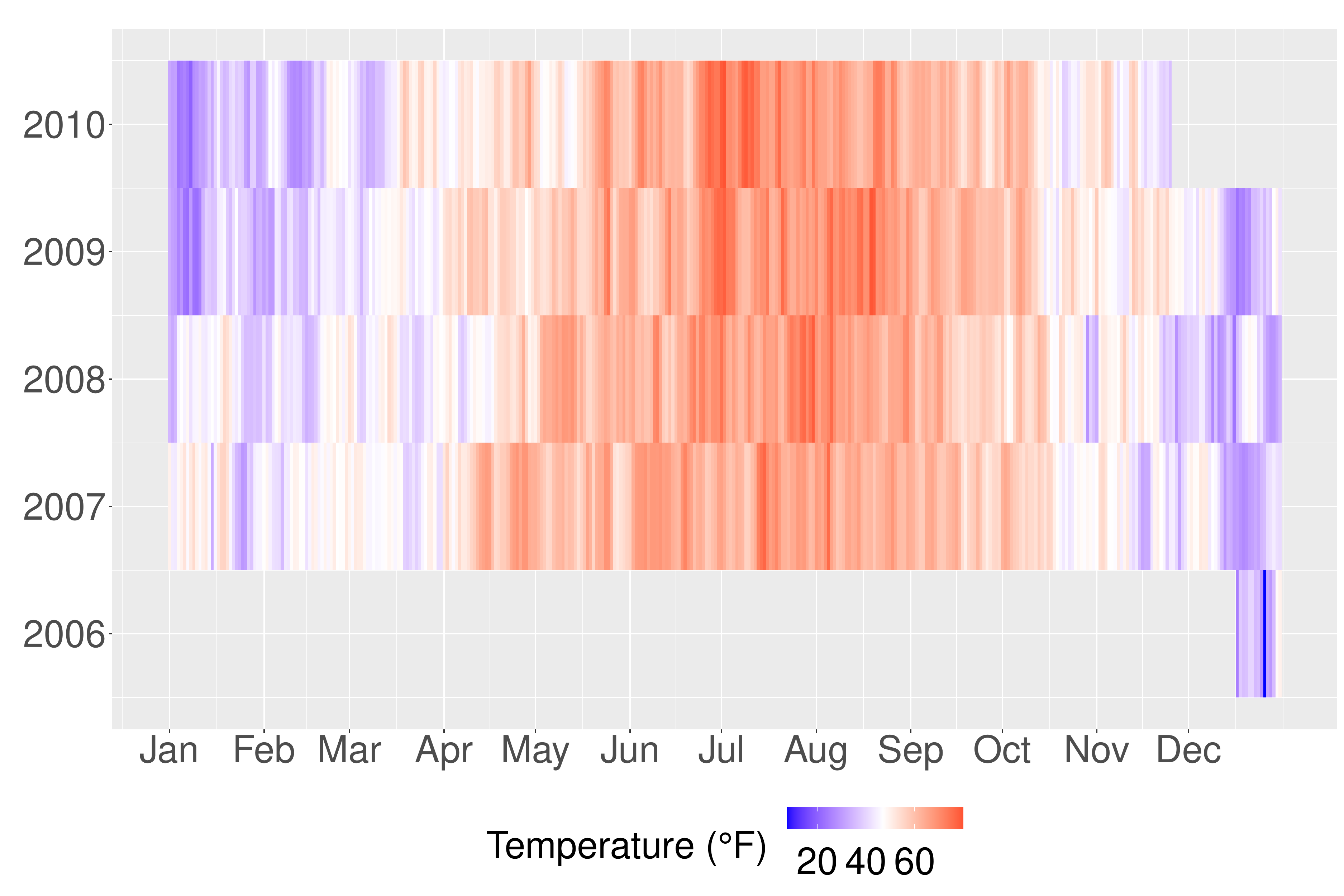}
    \caption{Temperature}
    \label{FPC1 heatmap:plot1}
  \end{subfigure}%
  \begin{subfigure}{0.5\textwidth}
    \centering
    \includegraphics[width=\textwidth]{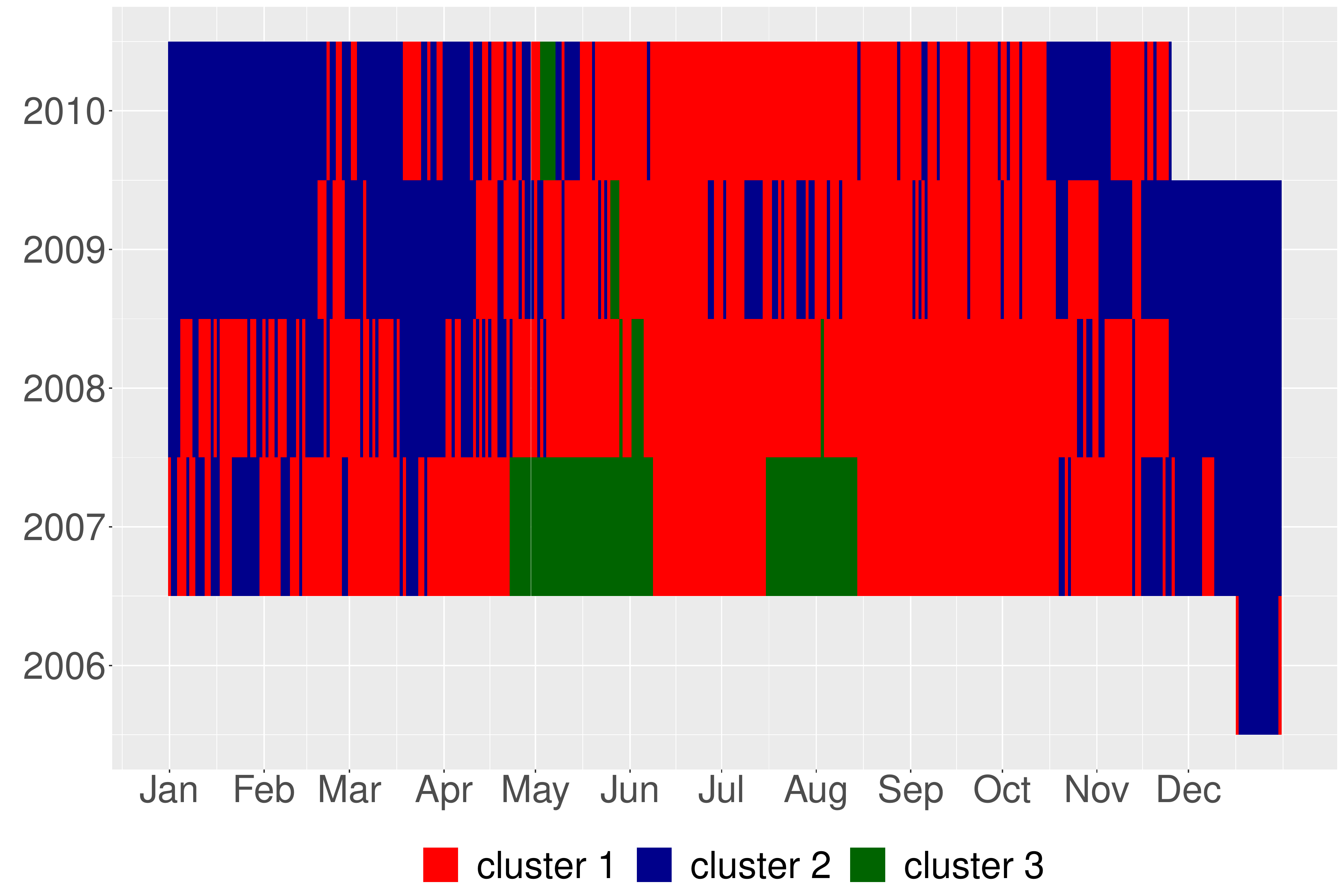}
    \caption{ReMFPCA}
    \label{FPC1 heatmap:plot2}
  \end{subfigure}
    \begin{subfigure}{0.5\textwidth}
    \centering
    \includegraphics[width=\textwidth]{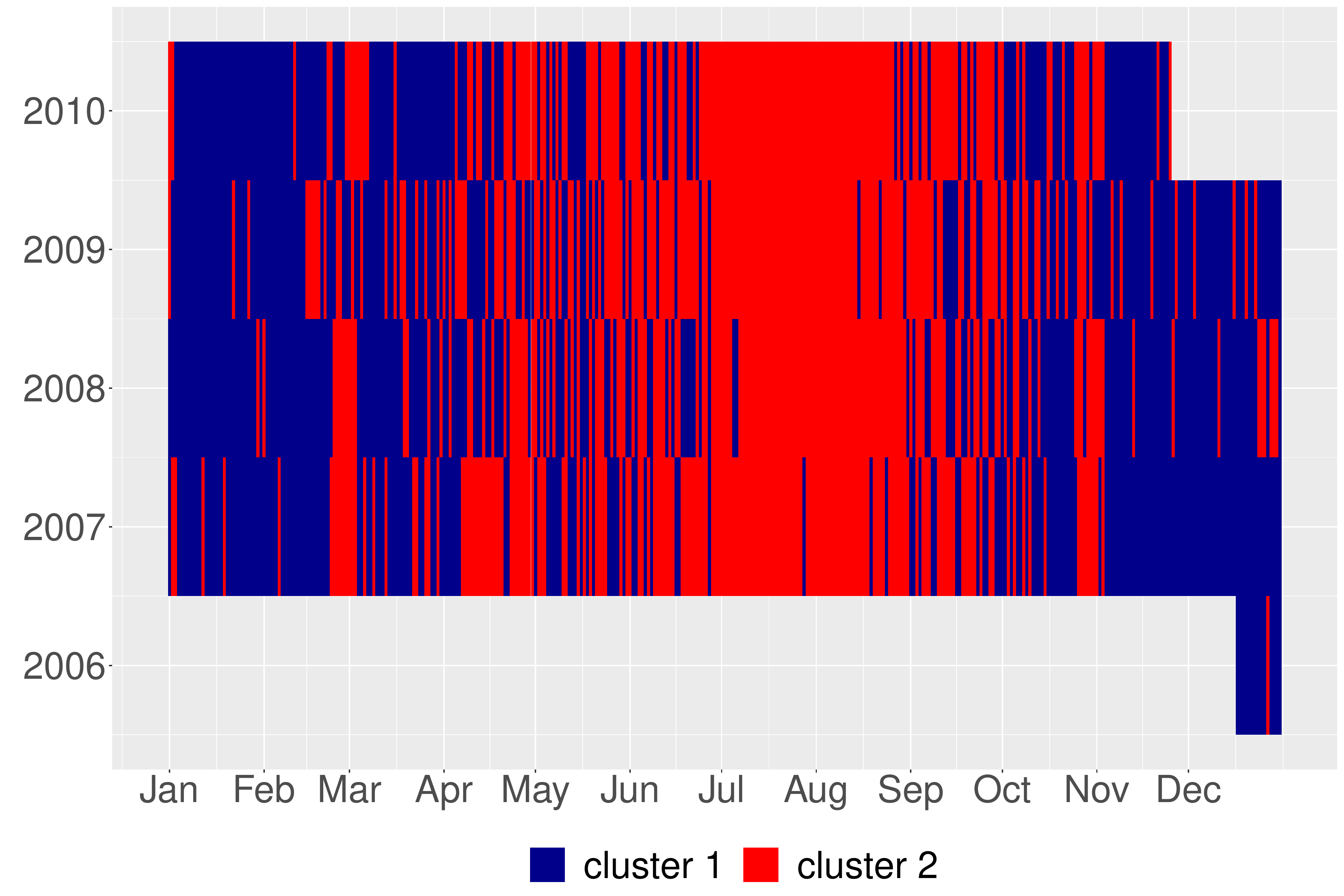}
    \caption{ReFPCA of Active Power}
    \label{FPC1 heatmap:plot3}
  \end{subfigure}%
  \begin{subfigure}{0.5\textwidth}
    \centering
    \includegraphics[width=\textwidth]{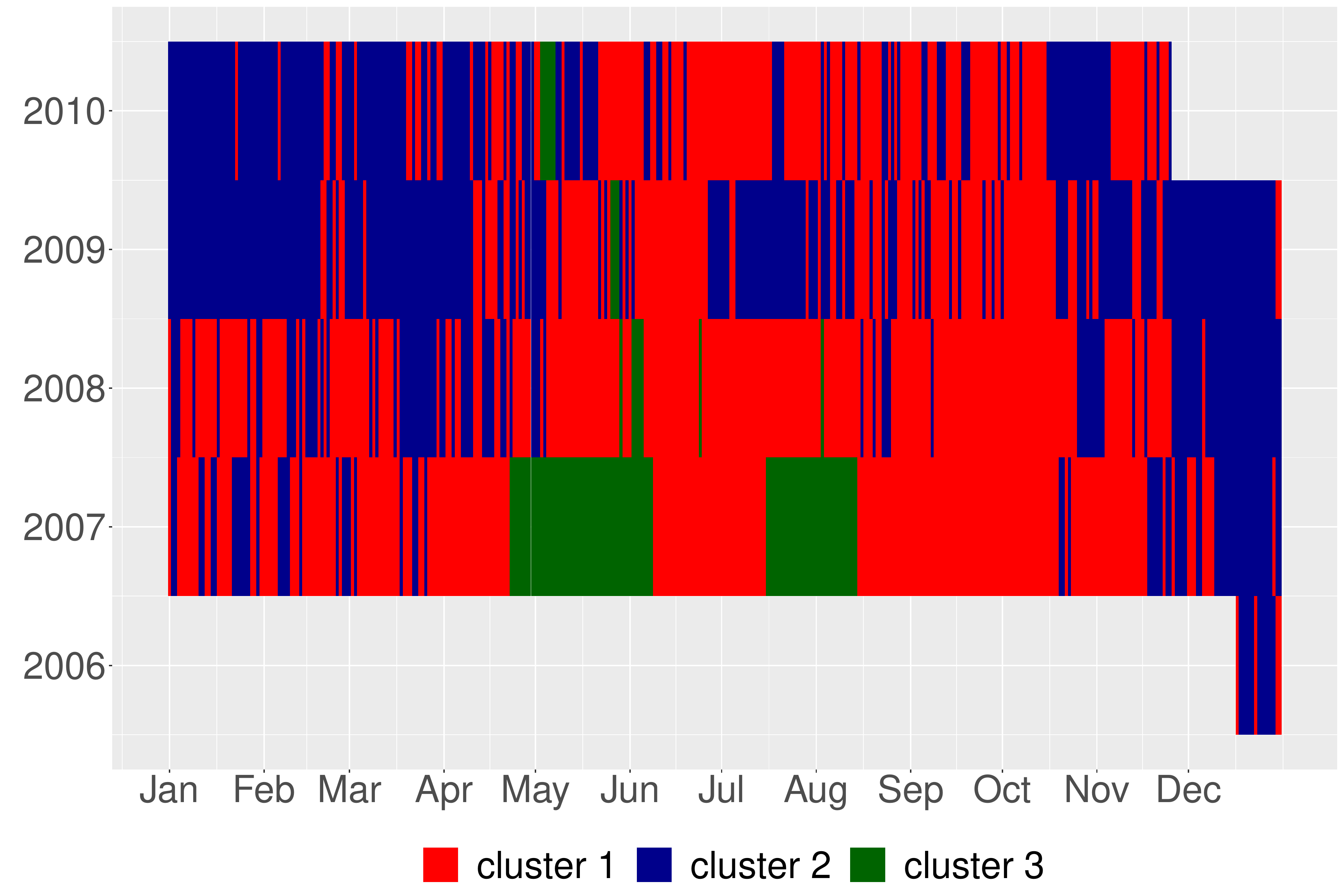}
    \caption{ReFPCA of Voltage}
    \label{FPC1 heatmap:plot4}
  \end{subfigure}
  \caption{(a) Average temperature heatmap; Clustering results based on first PC for (b) ReMFPCA; (c) ReFPCA of Active Power; and (d) ReFPCA of Voltage.}
  \label{FPC1 heatmap}
\end{figure}

The comparison of the first three principal components (PCs) obtained from ReMFPCA and MFPCA is shown in Figure \ref{FPC plot:plot1}. Notably, the ReMFPCA method effectively addresses the issue of significant roughness in the extracted PCs. We further compare the results of ReMFPCA with marginal ReFPCA in Figure \ref{FPC plot}, which illustrates the impact of adding and subtracting a multiple of each smoothed PC on the mean curve for both methods. It can be observed that for the first PCs of ReMFPCA and ReFPCA, there is an approximate constant effect, indicating that the shape of the mean functional curve remains largely unchanged. The overall pattern of the PCs associated with the voltage variable exhibits strong similarity between ReMFPCA and ReFPCA (second row of Figure \ref{FPC plot}). Regarding the active power variable, the 2nd and 3rd PCs of ReMFPCA closely resemble the 1st and 2nd PCs of ReFPCA, respectively. However, the 1st PC associated with the active power in ReMFPCA appears to differ from those in ReFPCA. Consequently, it may be worthwhile to explore the impact of this difference in a clustering task.

Figure \ref{FPC1 heatmap} displays the heatmaps depicting the average temperature of Sceaux throughout the study period \citep{wunderground}, along with the clustering results obtained from the first PC scores of three methods: ReMFCA, ReFPCA of Active Power, and ReFPCA of Voltage. The heatmaps reveal that the first PC of ReMFPCA effectively captures the annual temperature pattern, as evident from the presence of distinct blue and red clusters. Additionally, it successfully identifies a specific scenario characterized by low voltage, represented by the green cluster (for more details, see Figure \ref{FPC1: cluster}). In contrast, the first PC of ReFPCA fails to capture certain relevant information, likely due to its inability to incorporate the dependence between variables. For instance, the first PC of ReFPCA for active power fails to detect the low voltage scenario, while the first PC of ReFPCA for voltage exhibits limited performance in identifying the summer season of 2009.

\begin{figure}[!b] 
\centering
  \includegraphics[width=\textwidth]{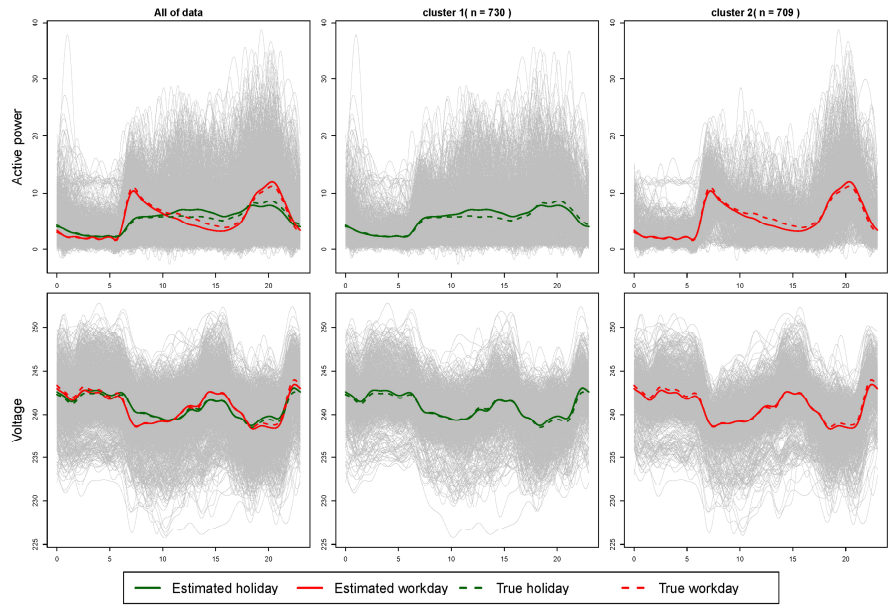}
  \caption{Clustering results based on 3rd PC of ReMFPCA with estimated cluster mean curve (solid line) and true cluster mean curve (dashed line).}
  \label{FPC3: cluster}
\end{figure}

The clustering results based on the 2nd PC score for ReMFPCA exhibit a high degree of similarity to the results obtained using the first PC score (see Figure \ref{FPC2 cluster}). In the case of the 3rd PC of ReMFPCA, a notable distinction emerges between working hours (10am - 6pm) and the remaining time periods. This pattern, particularly prominent during working hours, allows for the identification of workday and holiday behavior. The interpretability of this finding is supported by Figure \ref{FPC3: cluster}, which showcases the clustering results corresponding to the 3rd PC scores of ReMFPCA. The plot provides compelling evidence, revealing a remarkable resemblance between the estimated cluster mean curves and the true mean curve associated with holidays (including school holidays) or workdays. Furthermore, similar patterns are observed in the 2nd PC of ReFPCA for active power and the 3rd PC of ReFPCA for voltage. However, Figure \ref{FPC3 boxplot} provides additional evidence that ReMFPCA surpasses ReFPCA in effectively distinguishing between workdays and holidays.

\begin{figure}[!t] 
  \centering
    \includegraphics[width=.31\textwidth]{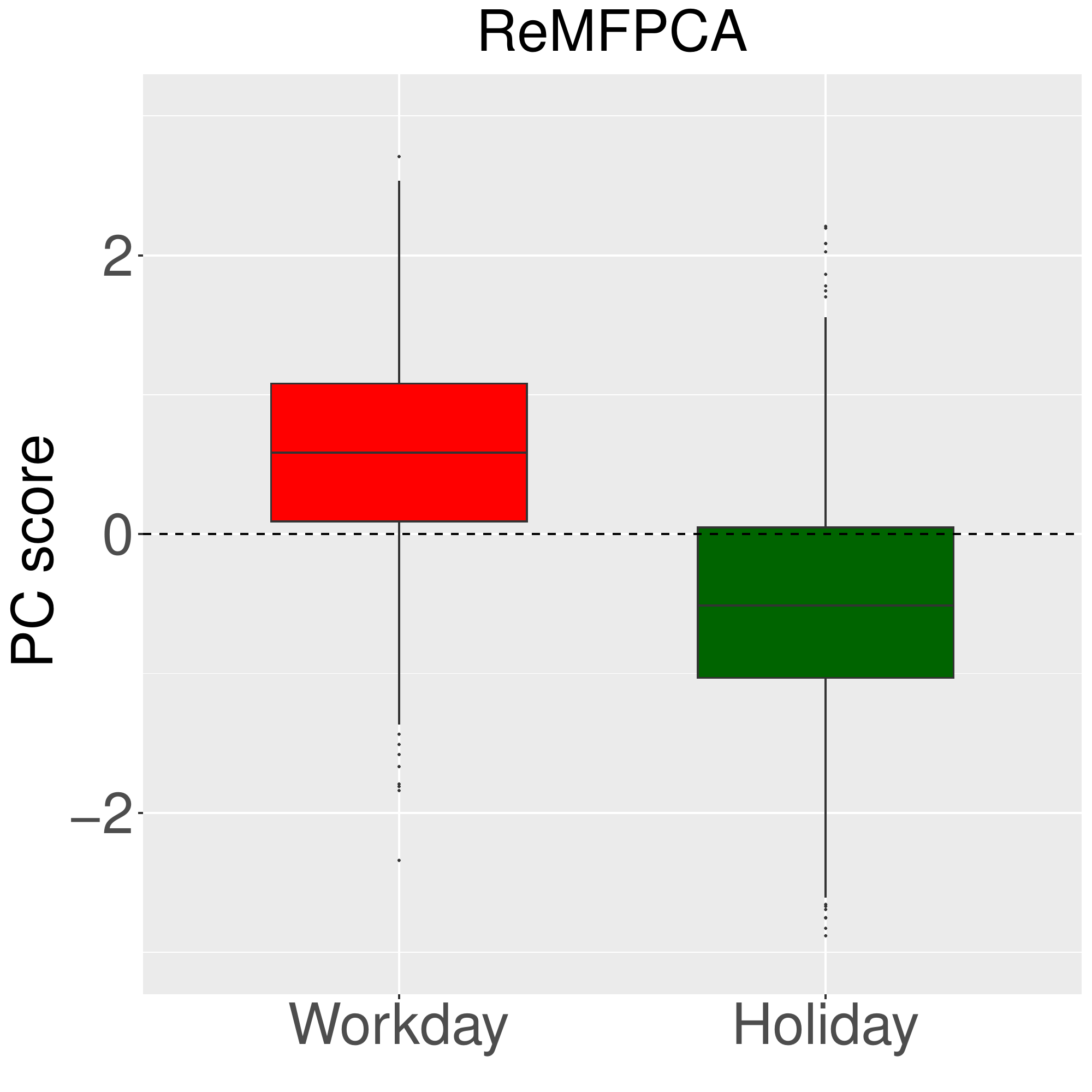}
    \includegraphics[width=.31\textwidth]{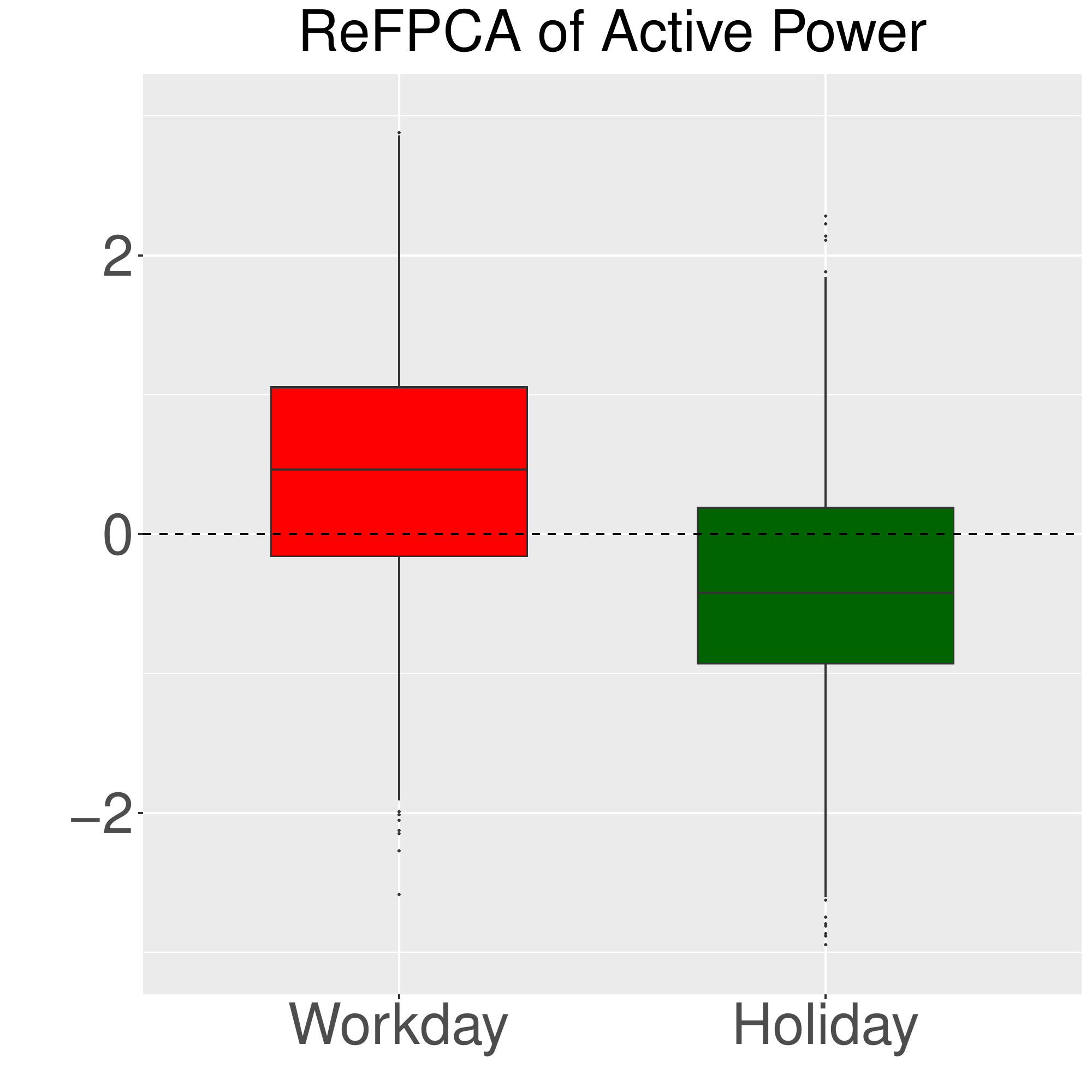}
    \includegraphics[width=.31\textwidth]{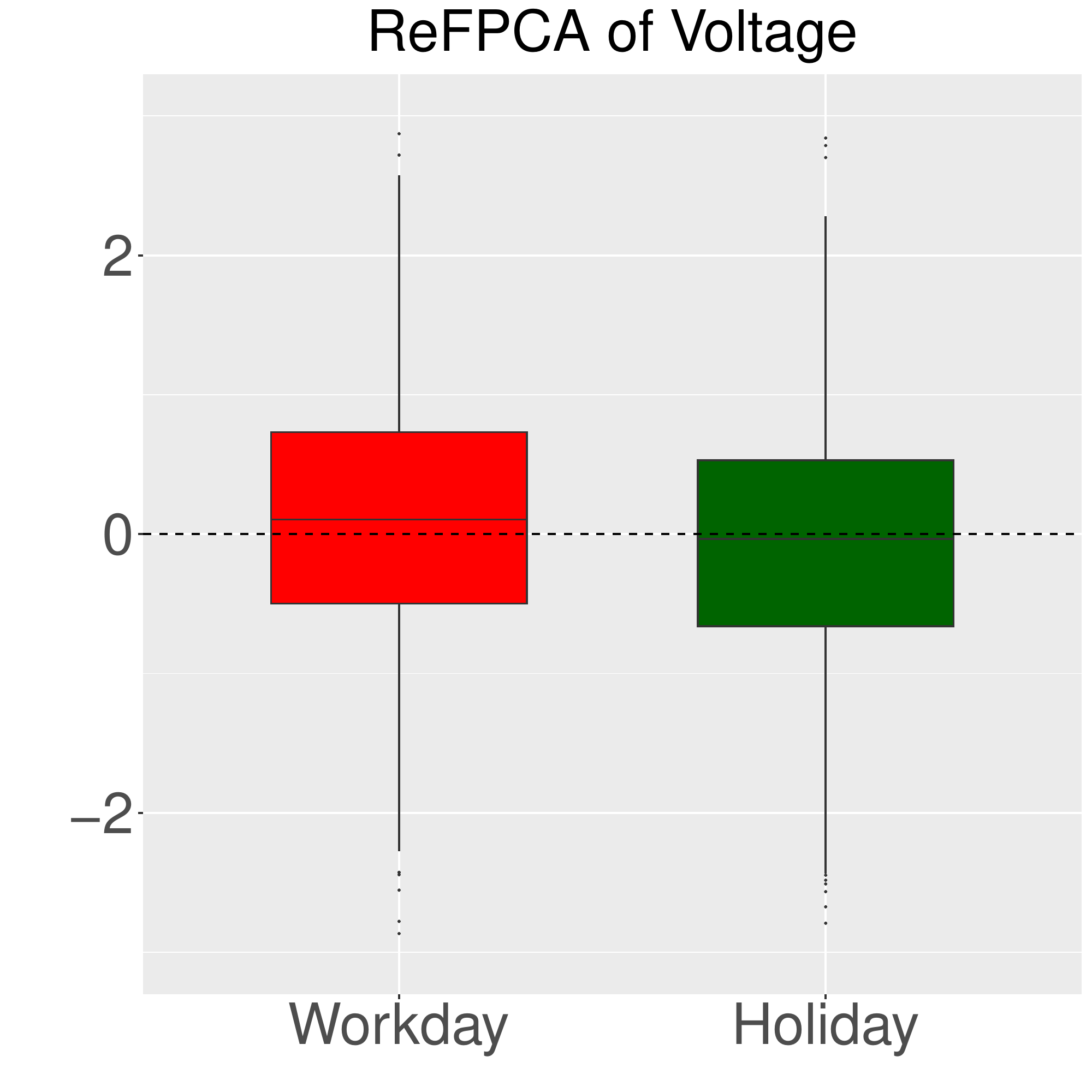}
  \caption{Boxplots of the corresponding PC scores for ReMFPCA and ReFPCA.}
  \label{FPC3 boxplot}
\end{figure}

\section{Conclusion}\label{conclusion}
In summary, this paper introduced ReMFPCA to address the challenge of defining an appropriate smoothness level for multivariate functional data. By incorporating smoothness constraints as penalty terms within a regularized optimization framework, ReMFPCA generates smoothed multivariate functional PCs that offer improved interpretability and stability. Through simulations and real data examples, we demonstrated the effectiveness of ReMFPCA compared to other methods such as MFPCA, marginal FPCA, and regularized FPCA (ReFPCA). ReMFPCA showcased superior performance in terms of interpretability, stability, and accuracy of results. Its ability to capture shared patterns of variation provides valuable insights into complex multivariate functional datasets across various domains. 

A limitation of the proposed technique lies in the computational cost associated with determining the tuning parameter vector ${\pmb \alpha}$, which has a length of $p$. In this study, cross-validation was employed to select the tuning parameter, which proved to be computationally expensive within the current framework. Future research efforts may focus on devising more efficient approaches for obtaining the smoothing parameters in order to address this issue.

Overall, ReMFPCA contributes to the growing field of functional data analysis, empowering researchers to gain deeper insights from complex multivariate functional datasets in domains such as biomedical imaging, environmental science, and finance.

{\center \section*{Appendix}}
\appendix 
\renewcommand{\thesection}{\Alph{section}}
\setcounter{equation}{0}
\renewcommand\theequation{\Alph{section}.\arabic{equation}}
\setcounter{figure}{0}
\renewcommand\thefigure{\Alph{section}.\arabic{figure}}
\section{Proofs}
	\begin{proof}[Proof of Theorem \ref{theorem-existence}]
	The sample covariance operator is a bounded operator due to its finite rank. Let $\mathbb{W}^2$ be the Hilbert space equipped with the inner product $\langle . ,.\rangle_{\pmb{\alpha}}$. For any $\pmb{x, y}\in\mathbb{W}^2$, the inner product $\langle {\mathbfcal C}\pmb{x} , \pmb{y}\rangle_{\mathbb{H}}$ defines a bilinear form in $\mathbb{W}^2$ and can be bounded as follows:
\begin{equation*}
\left| \langle {\mathbfcal C}\pmb{x} , \pmb{y}\rangle_{\mathbb{H}}\right|
\leq 
\Vert {\mathbfcal C} \Vert_{\mathcal{L}(\mathbb{H})} 
\Vert \pmb{x} \Vert_{\mathbb{H}}
\Vert \pmb{y} \Vert_{\mathbb{H}}
\leq
\Vert {\mathbfcal C} \Vert_{\mathcal{L}(\mathbb{H})} 
\Vert \pmb{x} \Vert_{\pmb{\alpha}}
\Vert \pmb{y} \Vert_{\pmb{\alpha}}.
\end{equation*}
Based on Theorem 5.35 of \cite{weidmann1980linear}, there exists a unique bounded operator ${\mathbfcal C}_{\pmb{\alpha}}\in\mathcal{L}({\mathbb{W}^2})$ such that:
\begin{equation*}
\langle {\mathbfcal C}\pmb{x} , \pmb{y}\rangle_{\mathbb{H}}
=
\langle {\mathbfcal C}_{\pmb{\alpha}}\pmb{x} , \pmb{y}\rangle_{\pmb{\alpha}}.
\end{equation*}
Moreover, ${\mathbfcal C}_{\pmb{\alpha}}$ is symmetric since:
\begin{equation*}
\langle {\mathbfcal C}_{\pmb{\alpha}}\pmb{x} , \pmb{y}\rangle_{\pmb{\alpha}}
=
\langle {\mathbfcal C}\pmb{x} , \pmb{y}\rangle_{\mathbb{H}}
=
\langle \pmb{x} ,{\mathbfcal C} \pmb{y}\rangle_{\mathbb{H}}
=
\langle {\mathbfcal C} \pmb{y},\pmb{x}\rangle_{\mathbb{H}}
=
\langle {\mathbfcal C}_{\pmb{\alpha}}\pmb{y} , \pmb{x}\rangle_{\pmb{\alpha}}
=
\langle  \pmb{x},{\mathbfcal C}_{\pmb{\alpha}}\pmb{y}\rangle_{\pmb{\alpha}}.
\end{equation*}
Furthermore, ${\mathbfcal C}_{\pmb{\alpha}}$ is positive definite. To demonstrate that ${\mathbfcal C}_{\pmb{\alpha}}$ is a compact operator, we need to show that for any bounded sequence $\lbrace \pmb x_m\in{\mathbb{W}^2,\ m\in\mathbb{N}} \rbrace$, there exists a subsequence $\lbrace \pmb x_{mk} \rbrace$ such that:
\begin{equation}\label{compactness}
\langle {\mathbfcal C}_{\pmb{\alpha}}\pmb x_{mk}-\pmb x_{ml} , \pmb x_{mk}-\pmb x_{ml}\rangle_{\pmb{\alpha}}
=
\langle {\mathbfcal C}\pmb x_{mk}-\pmb x_{ml}, \pmb x_{mk}-\pmb x_{ml}\rangle_{\mathbb{H}} \to 0,
\end{equation}
as $k,\ l\to\infty$ \citep{riesz1990functional}. Since ${\mathbfcal C}$ is a compact operator and $\lbrace \pmb x_{m} \rbrace$ is a bounded sequence in $L^2_\mathbb{H}$, we can select a subsequence $\lbrace \pmb x_{mk} \rbrace$ such that it converges and \eqref{compactness} holds. Thus, ${\mathbfcal C}_{\pmb{\alpha}}$ is a compact operator, and according to the Hilbert-Schmidt Theorem, it possesses eigenvalues and eigenfunctions $\lbrace ({\lambda}_i^{[\pmb\alpha]},  {\pmb\psi}_i^{[\pmb\alpha]}): i\in\mathbb{N} \rbrace$, which are solutions of the successive optimization problems \eqref{e0} and \eqref{e0.5}.
\end{proof}

\begin{proof}[Proof of Lemma \ref{half-smoothing}]
Since we have,
\begin{equation*}
	\mathbfcal{S}_{\pmb{\alpha}}^{-2}=
	\begin{bmatrix}
		\mathcal{I}+{\alpha_1}\mathcal{Q}&0&\ldots&0\\
		0&\mathcal{I}+{\alpha_2}\mathcal{Q}&\ddots&\vdots\\
		\vdots&\ddots&\ddots&0\\
		0&\ldots&0&\mathcal{I}+{\alpha_p}\mathcal{Q}
	\end{bmatrix}
	=\mathbfcal{I}+\diag\left({\alpha_1}\mathcal{Q},\ldots, {\alpha_p}\mathcal{Q}\right),
\end{equation*}
then
\begin{align*}
	\langle\mathbfcal{S}_{\pmb{\alpha}}^{-1}\pmb{x}, \mathbfcal{S}_{\pmb{\alpha}}^{-1}\pmb{y}\rangle_{\mathbb{H}}^2&=
	\langle \pmb{x},\mathbfcal{S}_{\pmb{\alpha}}^{-2}\pmb{y} \rangle_{\mathbb{H}}
	=\langle \pmb{x},\pmb{y} \rangle_{\mathbb{H}}+\sum_{i=1}^p \alpha_i\langle x_i,y_i''''\rangle_{H_i} =
	\langle{\pmb{x ,y}}\rangle_{\pmb{\alpha}}^2.
\end{align*}	
\end{proof}

\begin{proof}[Proof of Lemma \ref{rmfpca_obtain}]
Using the result of the Lemma \ref{half-smoothing} we get
\begin{equation*}
	\dfrac{\langle {{\mathbfcal{C}}}\pmb{\varphi} , \pmb{\varphi} \rangle_{\mathbb{H}}}{\Vert \pmb{\varphi} \Vert_{\pmb \alpha}^2}=
	\dfrac{\langle {{\mathbfcal{C}}}\mathbfcal{S}_{\pmb{\alpha}}\mathbfcal{S}_{\pmb{\alpha}}^{-1}\pmb{\varphi} , \mathbfcal{S}_{\pmb{\alpha}}\mathbfcal{S}_{\pmb{\alpha}}^{-1}\pmb{\varphi} \rangle_{\mathbb{H}}}{\Vert \mathbfcal{S}_{\pmb{\alpha}}^{-1}\pmb{\varphi} \Vert_{\mathbb{H}}^2}
	=
	\dfrac{\langle\left(\mathbfcal{S}_{\pmb{\alpha}} {{\mathbfcal{C}}}\mathbfcal{S}_{\pmb{\alpha}}\right)\mathbfcal{S}_{\pmb{\alpha}}^{-1}\pmb{\varphi} , \mathbfcal{S}_{\pmb{\alpha}}^{-1}\pmb{\varphi} \rangle_{\mathbb{H}}}{\Vert \mathbfcal{S}_{\pmb{\alpha}}^{-1}\pmb{\varphi} \Vert_{\mathbb{H}}^2}.
\end{equation*}

To complete the proof, we utilize the results of \eqref{mfpca1} and \eqref{mfpca2}. 
\end{proof}

\begin{proof}[Proof of Theorem \ref{theorem_cons}.]
The proof of Theorem \ref{theorem_cons} follows an induction approach, similar to \cite{silverman1996smoothed}. We define the statement $\mathfrak{A}_k$ for any positive integer $k$, which comprises the following three convergences:
\begin{align}
	\hat{\lambda}_k^{[\pmb\alpha]} &\to {\lambda}_k, \label{conv4}\\
	\sum_{j=1}^{p}\alpha_j\Vert\mathcal{D}^2\hat{\psi}_{kj}^{[\pmb\alpha]}\Vert_{H_j}^2&\to 0, \ \text{and} \label{conv5}\\
	\langle \tilde{\pmb\psi}_k^{[\pmb\alpha]} , {\pmb\psi}_k \rangle_{\mathbb{H}} &\to 1. \label{conv6}
\end{align}
Here, $\hat{\pmb\psi}_{k}^{[\pmb\alpha]}=(\hat{\psi}_{k1}^{[\pmb\alpha]}, \ldots, \hat{\psi}_{kp}^{[\pmb\alpha]})$. It is worth noting that since $\Vert  \hat{\pmb\psi}_{k}^{[\pmb\alpha]}\Vert_{\pmb{\alpha}}^2=1$, the limit \eqref{conv5} is equivalent to:

\begin{equation}\label{conv7}
	\Vert  \hat{\pmb\psi}_{k}^{[\pmb\alpha]}\Vert_{\mathbb{H}}^2\to 1.
\end{equation}
Furthermore, the limit \eqref{conv6} can be equivalently expressed as:
\begin{equation}\label{conv8}
	\Vert {\mathbfcal P}_{{\pmb\psi}_k} -{\mathbfcal P}_{\hat{\pmb\psi}_k^{[\pmb\alpha]} } \Vert_{\mathcal{L}(\mathbb{H})} \to 0.
\end{equation}
Let us consider a fixed $j\ge1$ and assume, by induction, that $\mathfrak{A}_k$ holds for all $k<j$ (For $j=1$, there is no assumption to be made). Now, we introduce the projection ${\mathbfcal P}$, which is perpendicular to ${\pmb\psi}_1, \ldots, {\pmb\psi}_{j-1}$:
\begin{equation*}
	{\mathbfcal P}\pmb{x}= \pmb{x}-\sum_{i=1}^{j-1}\langle{\pmb\psi}_{i},\pmb{x} \rangle_{\mathbb{H}}{\pmb\psi}_{i},
\end{equation*}
and the projection ${\mathbfcal P}^{[\pmb\alpha]}$, which is perpendicular to $\hat{\pmb\psi}^{[\pmb\alpha]}_1, \ldots, \hat{\pmb\psi}_{j-1}^{[\pmb\alpha]}$ with respect to the inner product $\langle.,. \rangle_{\pmb\alpha}$:
\begin{equation*}
	{\mathbfcal P}^{[\pmb\alpha]}\pmb{x}= \pmb{x}-\sum_{i=1}^{j-1}\langle\hat{\pmb\psi}_{i}^{[\pmb\alpha]},\pmb{x} \rangle_{\pmb\alpha}\hat{\pmb\psi}_{i}^{[\pmb\alpha]}.
\end{equation*}
The following lemma is essential to finalize the proof of Theorem \ref{theorem_cons}.

\begin{lemma}\label{lem_sup}
For any $j\geq 1$, assuming $\mathfrak{A}k$ holds for all $k<j$, let $\mathbfcal P$ and $\mathbfcal P^{[\pmb\alpha]}$ be defined as above. Then, we have
\begin{equation}\label{conv9}
\sup_{\pmb x: \Vert \pmb x\Vert_{\pmb\alpha}\leq 1}\left|\langle {\mathbfcal P}\pmb{x},\mathbfcal{C}{\mathbfcal P}\pmb{x}\rangle_{\mathbb{H}}-\langle {\mathbfcal P}^{[\pmb\alpha]}\pmb{x},\widehat{\mathbfcal{C}}{\mathbfcal P}^{[\pmb\alpha]}\pmb{x}\rangle_{\mathbb{H}}\right|\to 0.
\end{equation}
\end{lemma}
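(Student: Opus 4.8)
The plan is to bound the quantity in \eqref{conv9} by splitting it into two pieces: one controlling the difference between the sample and population covariance operators, and one controlling the difference between the two projections $\mathbfcal P$ and $\mathbfcal P^{[\pmb\alpha]}$. Writing $\langle {\mathbfcal P}\pmb x, \mathbfcal C {\mathbfcal P}\pmb x\rangle_{\mathbb H} - \langle {\mathbfcal P}^{[\pmb\alpha]}\pmb x, \widehat{\mathbfcal C} {\mathbfcal P}^{[\pmb\alpha]}\pmb x\rangle_{\mathbb H}$ as the telescoping sum
\begin{equation*}
\left(\langle {\mathbfcal P}\pmb x, \mathbfcal C {\mathbfcal P}\pmb x\rangle_{\mathbb H} - \langle {\mathbfcal P}^{[\pmb\alpha]}\pmb x, \mathbfcal C {\mathbfcal P}^{[\pmb\alpha]}\pmb x\rangle_{\mathbb H}\right) + \left(\langle {\mathbfcal P}^{[\pmb\alpha]}\pmb x, (\mathbfcal C - \widehat{\mathbfcal C}) {\mathbfcal P}^{[\pmb\alpha]}\pmb x\rangle_{\mathbb H}\right),
\end{equation*}
the second term is handled by the operator-norm consistency $\Vert \widehat{\mathbfcal C} - \mathbfcal C\Vert_{\mathcal L(\mathbb H)}\to 0$ a.s.\ (which follows from the strong law of large numbers in the Hilbert--Schmidt sense under A1, together with assumption A2 guaranteeing that $\Vert{\mathbfcal P}^{[\pmb\alpha]}\pmb x\Vert_{\mathbb H}$ stays bounded when $\Vert\pmb x\Vert_{\pmb\alpha}\le 1$, since $\Vert\cdot\Vert_{\mathbb H}\le\Vert\cdot\Vert_{\pmb\alpha}$ and each $\hat{\pmb\psi}_i^{[\pmb\alpha]}$ has $\pmb\alpha$-norm one). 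The first term is a continuity-of-the-quadratic-form estimate: it is bounded by $\Vert\mathbfcal C\Vert_{\mathcal L(\mathbb H)}\bigl(\Vert{\mathbfcal P}\pmb x\Vert_{\mathbb H} + \Vert{\mathbfcal P}^{[\pmb\alpha]}\pmb x\Vert_{\mathbb H}\bigr)\Vert({\mathbfcal P} - {\mathbfcal P}^{[\pmb\alpha]})\pmb x\Vert_{\mathbb H}$, so everything reduces to showing $\sup_{\Vert\pmb x\Vert_{\pmb\alpha}\le 1}\Vert({\mathbfcal P} - {\mathbfcal P}^{[\pmb\alpha]})\pmb x\Vert_{\mathbb H}\to 0$.

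To control $\Vert({\mathbfcal P} - {\mathbfcal P}^{[\pmb\alpha]})\pmb x\Vert_{\mathbb H}$ uniformly, I would expand both projections explicitly. We have
\begin{equation*}
({\mathbfcal P} - {\mathbfcal P}^{[\pmb\alpha]})\pmb x = \sum_{i=1}^{j-1}\left(\langle\hat{\pmb\psi}_i^{[\pmb\alpha]},\pmb x\rangle_{\pmb\alpha}\hat{\pmb\psi}_i^{[\pmb\alpha]} - \langle{\pmb\psi}_i,\pmb x\rangle_{\mathbb H}{\pmb\psi}_i\right),
\end{equation*}
and each summand is again telescoped into a part measuring $\hat{\pmb\psi}_i^{[\pmb\alpha]} - {\pmb\psi}_i$ (in $\mathbb H$-norm) and a part measuring the discrepancy between the $\pmb\alpha$-inner product and the $\mathbb H$-inner product applied to $\hat{\pmb\psi}_i^{[\pmb\alpha]}$. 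For the first part, the induction hypothesis $\mathfrak A_k$ for $k<j$ gives $\langle\tilde{\pmb\psi}_i^{[\pmb\alpha]},{\pmb\psi}_i\rangle_{\mathbb H}\to 1$ and $\Vert\hat{\pmb\psi}_i^{[\pmb\alpha]}\Vert_{\mathbb H}\to 1$ (the equivalent form \eqref{conv7}), hence $\Vert\hat{\pmb\psi}_i^{[\pmb\alpha]} - {\pmb\psi}_i\Vert_{\mathbb H}\to 0$ after the appropriate sign choice, exactly as in Corollary \ref{corollary_31}. For the second part, note $\langle\hat{\pmb\psi}_i^{[\pmb\alpha]},\pmb x\rangle_{\pmb\alpha} - \langle\hat{\pmb\psi}_i^{[\pmb\alpha]},\pmb x\rangle_{\mathbb H} = \sum_{\ell}\alpha_\ell[\hat\psi_{i\ell}^{[\pmb\alpha]},x_\ell]_{H_\ell}$, and by Cauchy--Schwarz in the $\pmb\alpha$-inner product this is bounded by $\bigl(\sum_\ell\alpha_\ell\Vert\mathcal D^2\hat\psi_{i\ell}^{[\pmb\alpha]}\Vert_{H_\ell}^2\bigr)^{1/2}\bigl(\sum_\ell\alpha_\ell\Vert\mathcal D^2 x_\ell\Vert_{H_\ell}^2\bigr)^{1/2}\le \bigl(\sum_\ell\alpha_\ell\Vert\mathcal D^2\hat\psi_{i\ell}^{[\pmb\alpha]}\Vert_{H_\ell}^2\bigr)^{1/2}\Vert\pmb x\Vert_{\pmb\alpha}$, which tends to $0$ uniformly in $\Vert\pmb x\Vert_{\pmb\alpha}\le 1$ by \eqref{conv5} in the induction hypothesis. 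Since $j-1$ is a fixed finite number, summing these finitely many vanishing bounds gives the desired uniform convergence.

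The main obstacle is the bookkeeping of the sign ambiguity and the simultaneous use of two norms: one must be careful that $\mathfrak A_k$ as stated controls $\hat{\pmb\psi}_k^{[\pmb\alpha]}$ only up to sign and only in $\mathbb H$-norm, while the projection ${\mathbfcal P}^{[\pmb\alpha]}$ is built from the $\pmb\alpha$-inner product; the reconciliation is precisely the identity $\langle\cdot,\cdot\rangle_{\pmb\alpha} = \langle\cdot,\cdot\rangle_{\mathbb H} + \sum_\ell\alpha_\ell[\cdot,\cdot]_{H_\ell}$ together with the roughness bound \eqref{conv5}, and the projection operator $\mathbfcal P^{[\pmb\alpha]}\pmb x$ itself is sign-independent since each term is quadratic in $\hat{\pmb\psi}_i^{[\pmb\alpha]}$, so the sign choice only matters in the intermediate estimate $\Vert\hat{\pmb\psi}_i^{[\pmb\alpha]} - {\pmb\psi}_i\Vert_{\mathbb H}\to 0$ and drops out of the final bound. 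A secondary technical point is justifying $\Vert\widehat{\mathbfcal C} - \mathbfcal C\Vert_{\mathcal L(\mathbb H)}\to 0$ almost surely; this is standard for i.i.d.\ samples in $L^2_{\mathbb H}$ under the second-moment condition implicit in A1 and I would simply cite it.
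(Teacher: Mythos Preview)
Your proposal is correct and follows essentially the same route as the paper's proof: both reduce \eqref{conv9} to (i) the almost-sure operator-norm convergence $\Vert\widehat{\mathbfcal C}-\mathbfcal C\Vert_{\mathcal L(\mathbb H)}\to 0$ of \cite{dauxois1982asymptotic} and (ii) the uniform bound $\sup_{\Vert\pmb x\Vert_{\pmb\alpha}\le 1}\Vert({\mathbfcal P}-{\mathbfcal P}^{[\pmb\alpha]})\pmb x\Vert_{\mathbb H}\to 0$, and both establish (ii) by writing the projection difference as a finite sum over $k<j$ and controlling each summand via the induction hypothesis \eqref{conv5}--\eqref{conv8} together with the splitting $\langle\cdot,\cdot\rangle_{\pmb\alpha}=\langle\cdot,\cdot\rangle_{\mathbb H}+\sum_\ell\alpha_\ell[\cdot,\cdot]_{H_\ell}$. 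The only cosmetic difference is that the paper inserts the normalized $\tilde{\pmb\psi}_k^{[\pmb\alpha]}$ as an intermediate term and phrases the first piece as $\Vert{\mathbfcal P}_{\pmb\psi_k}-{\mathbfcal P}_{\hat{\pmb\psi}_k^{[\pmb\alpha]}}\Vert_{\mathcal L(\mathbb H)}$, whereas you go through $\Vert\hat{\pmb\psi}_k^{[\pmb\alpha]}-{\pmb\psi}_k\Vert_{\mathbb H}$ directly; these are equivalent under the induction hypothesis.
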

\begin{proof}[Proof of Lemma \ref{lem_sup}.]
We consider $j\geq 2$. Throughout the proof, we focus on $\pmb{x} \in \mathbb{H}$. For each $k<j$, applying the triangle and Cauchy-Schwarz inequalities, and utilizing the fact that $\Vert \pmb x\Vert_{\mathbb{H}} \leq \Vert \pmb x\Vert_{\pmb\alpha}$, we obtain\\	
\resizebox{\textwidth}{!}{$\begin{aligned}
		\sup&_{\pmb x: \Vert \pmb x\Vert_{\pmb\alpha}\leq 1}\Vert\langle{\pmb\psi}_{k},\pmb{x} \rangle_{\mathbb{H}}{\pmb\psi}_{k}-\langle\hat{\pmb\psi}_{k}^{[\pmb\alpha]},\pmb{x} \rangle_{\pmb\alpha}\hat{\pmb\psi}_{k}^{[\pmb\alpha]}\Vert_{\mathbb{H}}\\
		&\leq
		\sup_{\pmb x: \Vert \pmb x\Vert_{\pmb\alpha}\leq 1}\Vert\langle{\pmb\psi}_{k},\pmb{x} \rangle_{\mathbb{H}}{\pmb\psi}_{k}-\langle\hat{\pmb\psi}_{k}^{[\pmb\alpha]},\pmb{x} \rangle_{\mathbb{H}}\hat{\pmb\psi}_{k}^{[\pmb\alpha]}\Vert_{\mathbb{H}}
		+\Vert  \hat{\pmb\psi}_{k}^{[\pmb\alpha]}\Vert_{\mathbb{H}}\sup_{\pmb x: \Vert \pmb x\Vert_{\pmb\alpha}\leq 1}
		\sum_{j=1}^{p}\alpha_j
		\langle\mathcal{D}^2\hat{\psi}_{kj}^{[\pmb\alpha]},\mathcal{D}^2{x_j} \rangle_{H_j}\\
		&\leq
		\sup_{\pmb x: \Vert \pmb x\Vert_{\mathbb{H}}\leq 1}\Vert\langle{\pmb\psi}_{k},\pmb{x} \rangle_{\mathbb{H}}{\pmb\psi}_{k}-\langle\tilde{\pmb\psi}_{k}^{[\pmb\alpha]},\pmb{x} \rangle_{\mathbb{H}}\tilde{\pmb\psi}_{k}^{[\pmb\alpha]}\Vert_{\mathbb{H}}+
		\sup_{\pmb x: \Vert \pmb x\Vert_{\mathbb{H}}\leq 1}\Vert\langle\tilde{\pmb\psi}_{k}^{[\pmb\alpha]},\pmb{x} \rangle_{\mathbb{H}}\tilde{\pmb\psi}_{k}^{[\pmb\alpha]}-\langle\hat{\pmb\psi}_{k}^{[\pmb\alpha]},\pmb{x} \rangle_{\mathbb{H}}\hat{\pmb\psi}_{k}^{[\pmb\alpha]}\Vert _{\mathbb{H}}\\		
		&+\Vert  \hat{\pmb\psi}_{k}^{[\pmb\alpha]}\Vert_{\mathbb{H}}\sup_{\pmb x: \Vert \pmb x\Vert_{\pmb\alpha}\leq 1}
		\sum_{j=1}^{p}\alpha_j
		\Vert\mathcal{D}^2\hat{\psi}_{kj}^{[\pmb\alpha]}\Vert_{H_j}\Vert\mathcal{D}^2{x_j} \Vert_{H_j}\\
		&=
		\sup_{\pmb x: \Vert \pmb x\Vert_{\mathbb{H}}\leq 1}\Vert {\mathbfcal P}_{{\pmb\psi}_k}\pmb x -{\mathbfcal P}_{\hat{\pmb\psi}_k^{[\pmb\alpha]} }\pmb x \Vert_{\mathbb{H}}+
		\sup_{\pmb x: \Vert \pmb x\Vert_{\mathbb{H}}\leq 1}\Vert\langle\tilde{\pmb\psi}_{k}^{[\pmb\alpha]},\pmb{x} \rangle_{\mathbb{H}}\tilde{\pmb\psi}_{k}^{[\pmb\alpha]}-\Vert  \hat{\pmb\psi}_{k}^{[\pmb\alpha]}\Vert_{\mathbb{H}}^2\langle\tilde{\pmb\psi}_{k}^{[\pmb\alpha]},\pmb{x} \rangle_{\mathbb{H}}\tilde{\pmb\psi}_{k}^{[\pmb\alpha]}\Vert _{\mathbb{H}}\\		
		&+\Vert  \hat{\pmb\psi}_{k}^{[\pmb\alpha]}\Vert_{\mathbb{H}}\sup_{\pmb x: \Vert \pmb x\Vert_{\pmb\alpha}\leq 1}
		\sum_{j=1}^{p}\alpha_j
		\Vert\mathcal{D}^2\hat{\psi}_{kj}^{[\pmb\alpha]}\Vert_{H_j}\Vert\mathcal{D}^2{x_j} \Vert_{H_j}\\
		&=
		\Vert {\mathbfcal P}_{{\pmb\psi}_k} -{\mathbfcal P}_{\hat{\pmb\psi}_k^{[\pmb\alpha]} } \Vert_{\mathcal{L}(\mathbb{H})}+
		\Vert {\mathbfcal P}_{\hat{\pmb\psi}_k^{[\pmb\alpha]} } \Vert_{\mathcal{L}(\mathbb{H})}
		\left(1-\Vert  \hat{\pmb\psi}_{k}^{[\pmb\alpha]}\Vert_{\mathbb{H}}^2\right)\qquad\qquad\quad\ \ \ \ \\	
		&\qquad+\Vert  \hat{\pmb\psi}_{k}^{[\pmb\alpha]}\Vert_{\mathbb{H}}\sup_{\pmb x: \Vert \pmb x\Vert_{\pmb\alpha}\leq 1}
		\sum_{j=1}^{p}\alpha_j
		\Vert\mathcal{D}^2\hat{\psi}_{kj}^{[\pmb\alpha]}\Vert_{H_j}\Vert\mathcal{D}^2{x_j} \Vert_{H_j}.
	\end{aligned}$}	
In the last term since for each $i$, $\Vert x_i\Vert_{H_i} \leq \Vert{\pmb{x}}\Vert_{\mathbb{H}} \leq \Vert \pmb{x}\Vert_{\pmb{\alpha}},$ using \eqref{conv5} we get
	\begin{equation*}
		\sup_{\pmb x: \Vert \pmb x\Vert_{\pmb\alpha}\leq 1}
		\sum_{j=1}^{p}\alpha_j
		\Vert\mathcal{D}^2\hat{\psi}_{kj}^{[\pmb\alpha]}\Vert_{H_j}\Vert\mathcal{D}^2{x_j} \Vert_{H_j} \leq \sum_{j=1}^{p}\alpha_j
		\Vert\mathcal{D}^2\hat{\psi}_{kj}^{[\pmb\alpha]}\Vert_{H_j}\to 0.
	\end{equation*}
	Using this and Equations \eqref{conv7} and \eqref{conv8} we obtain
	\begin{align} \label{conv95}
		\sup_{\pmb x: \Vert \pmb x\Vert_{\pmb\alpha}\leq 1}&\Vert\langle{\pmb\psi}_{k},\pmb{x} \rangle_{\mathbb{H}}{\pmb\psi}_{k}-\langle\hat{\pmb\psi}_{k}^{[\pmb\alpha]},\pmb{x} \rangle_{\pmb\alpha}\hat{\pmb\psi}_{k}^{[\pmb\alpha]}\Vert_{\mathbb{H}}\to 0\ \text{for all } k<j. 
	\end{align}
	By summing over $k<j$ we have
	\begin{equation}\label{conv10}
		\sup_{\pmb x}\dfrac{	\Vert ({\mathbfcal P}^{[\pmb\alpha]} - {\mathbfcal P})\pmb x\Vert_{\mathbb{H}}}{\Vert \pmb x\Vert_{\pmb\alpha}}
		\leq
		\sum_{k=1}^{j-1}\sup_{\pmb x: \Vert \pmb x\Vert_{\pmb\alpha}\leq 1}\Vert\langle{\pmb\psi}_{k},\pmb{x} \rangle_{\mathbb{H}}{\pmb\psi}_{k}-\langle\hat{\pmb\psi}_{k}^{[\pmb\alpha]},\pmb{x} \rangle_{\pmb\alpha}\hat{\pmb\psi}_{k}^{[\pmb\alpha]}\Vert_{\mathbb{H}}
		\to 0.
	\end{equation}
In a seminal work, \cite{dauxois1982asymptotic} demonstrated the convergence of $\Vert \hat{\mathbfcal{C}} -\mathbfcal{C}\Vert_{\mathcal{L}(\mathbb{H})}$ by utilizing the strong law of large numbers in Hilbert space. Building upon this finding and leveraging the uniform convergence result given by \eqref{conv10}, we establish \eqref{conv9} and successfully conclude the proof of the lemma.	
\end{proof}
\noindent We can now prove the three parts of $\mathfrak{A}_j$ successively.\\ \vspace{-.03in}

\noindent\textit{Proof of \eqref{conv4} for $k=j$}. Please note that the maximum value of $\langle {\mathbfcal P}\pmb{x},\mathbfcal{C}{\mathbfcal P}\pmb{x}\rangle_{\mathbb{H}}$ over the set $\Vert \pmb x\Vert_{\mathbb{H}}\leq 1$ is $\lambda_j$, and it is achieved at ${\pmb\psi}{j}$. Similarly, the maximum value of $\langle {\mathbfcal P}^{[\pmb\alpha]}\pmb{x},\widehat{\mathbfcal{C}}{\mathbfcal P}^{[\pmb\alpha]}\pmb{x}\rangle_{\mathbb{H}}$ over the set $\Vert \pmb x\Vert_{\pmb\alpha}\leq 1$ is $\lambda_j^{[\pmb\alpha]}$, attained at $\hat{\pmb\psi}_{j}^{[\pmb\alpha]}$. Using property \eqref{conv9} we have
\begin{align} \label{conv105}
	\lambda_j&=\langle {\mathbfcal P}{\pmb\psi}_{j},\mathbfcal{C}{\mathbfcal P}{\pmb\psi}_{j}\rangle_{\mathbb{H}}
	\ge
	\langle {\mathbfcal P}\tilde{\pmb\psi}_j^{[\pmb\alpha]} ,\mathbfcal{C}{\mathbfcal P}\tilde{\pmb\psi}_j^{[\pmb\alpha]} \rangle_{\mathbb{H}}
	=
	\frac{\langle {\mathbfcal P}\hat{\pmb\psi}_j^{[\pmb\alpha]} ,\mathbfcal{C}{\mathbfcal P}\hat{\pmb\psi}_j^{[\pmb\alpha]} \rangle_{\mathbb{H}}}{\Vert  \hat{\pmb\psi}_{k}^{[\pmb\alpha]}\Vert_{\mathbb{H}}^2}\\ \notag
	&\ge
	\langle {\mathbfcal P}\hat{\pmb\psi}_j^{[\pmb\alpha]} ,\mathbfcal{C}{\mathbfcal P}\hat{\pmb\psi}_j^{[\pmb\alpha]} \rangle_{\mathbb{H}}=
	\langle {\mathbfcal P}^{[\pmb\alpha]}\hat{\pmb\psi}_j^{[\pmb\alpha]} ,\hat{\mathbfcal{C}}{\mathbfcal P}^{[\pmb\alpha]}\hat{\pmb\psi}_j^{[\pmb\alpha]} \rangle_{\mathbb{H}}+o(1)=\lambda_j^{[\pmb\alpha]}+o(1).
\end{align}
Also, we have
\begin{align*}
	\lambda_j^{[\pmb\alpha]}&=\langle {\mathbfcal P}^{[\pmb\alpha]}\hat{\pmb\psi}_j^{[\pmb\alpha]} ,\hat{\mathbfcal{C}}{\mathbfcal P}^{[\pmb\alpha]}\hat{\pmb\psi}_j^{[\pmb\alpha]} \rangle_{\mathbb{H}}
	\ge
	\langle {\mathbfcal P}^{[\pmb\alpha]}{{\pmb\psi}_j}/{\Vert  {\pmb\psi}_{j}\Vert_{\pmb\alpha}} ,\hat{\mathbfcal{C}}{\mathbfcal P}^{[\pmb\alpha]}{{\pmb\psi}_j }/{\Vert  {\pmb\psi}_{j}\Vert_{\pmb\alpha}}\rangle_{\mathbb{H}}\\
	&=
	\langle {\mathbfcal P}{{\pmb\psi}_j}/{\Vert  {\pmb\psi}_{j}\Vert_{\pmb\alpha}} ,{\mathbfcal{C}}{\mathbfcal P}{{\pmb\psi}_j }/{\Vert  {\pmb\psi}_{j}\Vert_{\pmb\alpha}}\rangle_{\mathbb{H}}+o(1)={\lambda_j}/{\Vert  {\pmb\psi}_{j}\Vert_{\pmb\alpha}^2}+o(1)=\lambda_j+o(1),
\end{align*}
and the fact that $\Vert  {\pmb\psi}_{j}\Vert_{\pmb\alpha}\to 1$, since $\alpha_i\to 0.$ The proof of \eqref{conv4} is completed for $k=j.$\\

\noindent \textit{Proof of \eqref{conv5} for $k=j$}.
It immediately follows from
\begin{equation*}
	\sum_{j=1}^{p}\alpha_j\Vert\mathcal{D}^2\hat{\psi}_{kj}^{[\pmb\alpha]}\Vert_{H_j}^2=1- \Vert\hat{\pmb\psi}_{j}^{[\pmb\alpha]}\Vert_{\mathbb{H}}^2=
	1-\dfrac{\langle {\mathbfcal P}\hat{\pmb\psi}_j^{[\pmb\alpha]} ,\mathbfcal{C}{\mathbfcal P}\hat{\pmb\psi}_j^{[\pmb\alpha]} \rangle_{\mathbb{H}}}{\langle {\mathbfcal P}\tilde{\pmb\psi}_j^{[\pmb\alpha]} ,\mathbfcal{C}{\mathbfcal P}\tilde{\pmb\psi}_j^{[\pmb\alpha]} \rangle_{\mathbb{H}}}\to 0.
\end{equation*}\\

\noindent \textit{Proof of \eqref{conv6} for $k=j$}.
We begin by considering the case when $j\geq 2$ and substitute $x=\hat{\pmb\psi}_{j}^{[\pmb\alpha]}$ into equation \eqref{conv95}. Given that $\Vert  \hat{\pmb\psi}_{k}^{[\pmb\alpha]}\Vert_{\pmb{\alpha}}^2=1$, and for each $k<j$,
\begin{align*}
	\Vert\langle{\pmb\psi}_{k},\hat{\pmb\psi}_{j}^{[\pmb\alpha]} \rangle_{\mathbb{H}}{\pmb\psi}_{k}-\langle\hat{\pmb\psi}_{k}^{[\pmb\alpha]},\hat{\pmb\psi}_{j}^{[\pmb\alpha]} \rangle_{\pmb\alpha}\hat{\pmb\psi}_{k}^{[\pmb\alpha]}\Vert_{\mathbb{H}}
	=
	\Vert\langle{\pmb\psi}_{k},\hat{\pmb\psi}_{j}^{[\pmb\alpha]} \rangle_{\mathbb{H}}{\pmb\psi}_{k}\Vert_{\mathbb{H}}
	=
	|\langle{\pmb\psi}_{k},\hat{\pmb\psi}_{j}^{[\pmb\alpha]} \rangle_{\mathbb{H}}|,
\end{align*}
we can deduce from this and equation \eqref{conv7} for $k=j$ that, for each $k<j$,
\begin{align*}
	\lim_{n\to\infty}\langle{\pmb\psi}_{k},\tilde{\pmb\psi}_{j}^{[\pmb\alpha]} \rangle_{\mathbb{H}}
	=
	\lim_{n\to\infty}\langle{\pmb\psi}_{k},\hat{\pmb\psi}_{j}^{[\pmb\alpha]} \rangle_{\mathbb{H}}
	=0,
\end{align*}
implying that
\begin{align}\label{conv107}
	\sum_{k<j}\langle{\pmb\psi}_{k},\tilde{\pmb\psi}_{j}^{[\pmb\alpha]} \rangle_{\mathbb{H}}\to 0.
\end{align}
Next, we expand $\tilde{\pmb\psi}_{j}^{[\pmb\alpha]}$ in terms of the complete orthonormal sequence ${\pmb\psi}_{j}$. We have
\begin{align}\label{conv11}
	\mathbfcal {P}\tilde{\pmb\psi}_{j}^{[\pmb\alpha]}=	\mathbfcal {P}\sum_{i=1}^\infty\langle{\tilde{\pmb\psi}_{j}^{[\pmb\alpha]}},\pmb\psi_{i} \rangle_{\mathbb{H}}\pmb\psi_{i}
	&=
	\sum_{i=1}^\infty\langle{\tilde{\pmb\psi}_{j}^{[\pmb\alpha]}},\pmb\psi_{i} \rangle_{\mathbb{H}}\mathbfcal {P}\pmb\psi_{i}
	=
	\sum_{i=j}^\infty\langle{\tilde{\pmb\psi}_{i}^{[\pmb\alpha]}},\pmb\psi_{i} \rangle_{\mathbb{H}}\pmb\psi_{i},
\end{align}
where we utilize the fact that $\mathbfcal {P}\pmb\psi_{i}=0$ for $i<j$ and $\mathbfcal {P}\pmb\psi_{i}=1$ for $i\geq j$. By employing the property $\mathbfcal C\pmb\psi_{i}=\lambda_i\pmb\psi_{i}$, we have
\begin{align}\label{conv12}
	\mathbfcal C\mathbfcal {P}\tilde{\pmb\psi}_{j}^{[\pmb\alpha]}=	\sum_{i=j}^\infty\langle{\tilde{\pmb\psi}_{i}^{[\pmb\alpha]}},\pmb\psi_{i} \rangle_{\mathbb{H}}\mathbfcal C\pmb\psi_{i}
	=
	\sum_{i=j}^\infty\lambda_i\langle{\tilde{\pmb\psi}_{i}^{[\pmb\alpha]}},\pmb\psi_{i} \rangle_{\mathbb{H}}\pmb\psi_{i}.
\end{align}
Combining equations \eqref{conv11} and \eqref{conv12} and utilizing the orthonormality of $\pmb\psi_{i}$, we obtain the following expression:
\begin{align}\label{conv13}
\langle\mathbfcal {P}\tilde{\pmb\psi}{j}^{[\pmb\alpha]},\mathbfcal C\mathbfcal {P}\tilde{\pmb\psi}{j}^{[\pmb\alpha]}\rangle_\mathbb{H}=\sum_{i=j}^\infty\lambda_i\langle{\tilde{\pmb\psi}{j}^{[\pmb\alpha]}},\pmb\psi{i} \rangle_{\mathbb{H}}^2.
\end{align}
Considering that $\mathbfcal {P}$ is a projection, we have $\Vert\mathbfcal {P}\tilde{\pmb\psi}{j}^{[\pmb\alpha]}\Vert_{\mathbb{H}}^2\leq\Vert\tilde{\pmb\psi}{j}^{[\pmb\alpha]}\Vert_{\mathbb{H}}^2=1$. Substituting this into equation \eqref{conv13}, we find:
\begin{align}\label{conv14}
	\lambda_j&-\langle\mathbfcal {P}\tilde{\pmb\psi}_{j}^{[\pmb\alpha]},\mathbfcal C\mathbfcal {P}\tilde{\pmb\psi}_{j}^{[\pmb\alpha]}\rangle_\mathbb{H}
	\ge
	\lambda_j\Vert\mathbfcal {P}\tilde{\pmb\psi}_{j}^{[\pmb\alpha]}\Vert_{\mathbb{H}}^2
	-
	\sum_{i=j}^\infty\lambda_i\langle{\tilde{\pmb\psi}_{j}^{[\pmb\alpha]}},\pmb\psi_{i} \rangle_{\mathbb{H}}^2
	\notag\\&=
	\lambda_j\sum_{i=j}^\infty\langle{\tilde{\pmb\psi}_{j}^{[\pmb\alpha]}},\pmb\psi_{i} \rangle_{\mathbb{H}}^2
	-
	\sum_{i=j}^\infty\lambda_i\langle{\tilde{\pmb\psi}_{j}^{[\pmb\alpha]}},\pmb\psi_{i} \rangle_{\mathbb{H}}^2
	=
	\sum_{i=j}^\infty(\lambda_j-\lambda_i)\langle{\tilde{\pmb\psi}_{j}^{[\pmb\alpha]}},\pmb\psi_{i} \rangle_{\mathbb{H}}^2
	\\&\ge
	\sum_{i=j}^\infty(\lambda_j-\lambda_{j+1})\langle{\tilde{\pmb\psi}_{j}^{[\pmb\alpha]}},\pmb\psi_{i} \rangle_{\mathbb{H}}^2\ge 0,\notag
\end{align}
since $\lambda_j$ is a decreasing sequence. As the inequalities in \eqref{conv105} tend to equalities, we have $\lambda_j-\langle\mathbfcal {P}\tilde{\pmb\psi}{j}^{[\pmb\alpha]},\mathbfcal C\mathbfcal {P}\tilde{\pmb\psi}{j}^{[\pmb\alpha]}\rangle_\mathbb{H}\to 0$. Consequently, all the inequalities in \eqref{conv14} tend to equalities. Since $\lambda_j\neq\lambda_{j+1}$, it follows that:
\begin{equation}\label{conv15}
\sum_{i>j}^\infty\langle{\tilde{\pmb\psi}{j}^{[\pmb\alpha]}},\pmb\psi{i} \rangle_{\mathbb{H}}^2 \to 0.
\end{equation}
Combining equations \eqref{conv107} and \eqref{conv15} with the property $\sum_{i=1}^\infty\langle{\tilde{\pmb\psi}{j}^{[\pmb\alpha]}},\pmb\psi{i} \rangle_{\mathbb{H}}^2=1$, we can conclude that $\langle{\tilde{\pmb\psi}{j}^{[\pmb\alpha]}},\pmb\psi{i} \rangle_{\mathbb{H}}^2\to 1$. This completes the proof of equation \eqref{conv6} for $k=j$. By following the inductive argument outlined above, we have now completed the proof of Theorem \ref{theorem_cons}.
\end{proof}

\begin{proof}[Proof of Corollary \ref{corollary_31}]
	Using the Theorem \ref{theorem_cons} we have
\begin{align*}
	\Vert \tilde{\pmb\psi}_i^{[\pmb\alpha]} - {\pmb\psi}_i \Vert_{\mathbb{H}}^2&=\Vert \tilde{\pmb\psi}_i^{[\pmb\alpha]}\Vert_{\mathbb{H}}^2+\Vert  {\pmb\psi}_i \Vert_{\mathbb{H}}^2-2\langle \tilde{\pmb\psi}_i^{[\pmb\alpha]} , {\pmb\psi}_i \rangle_{\mathbb{H}}=2-2\langle \tilde{\pmb\psi}_i^{[\pmb\alpha]} , {\pmb\psi}_i \rangle_{\mathbb{H}}\to 0,
\end{align*}
\end{proof}

\begin{proof}[Proof of Corollary \ref{corollary_32}]
	Using the Theorem \ref{theorem_cons} we have
	\begin{align*}
		\Vert {\mathbfcal P}_{\hat{\pmb\psi}_i^{[\pmb\alpha]} }- {\mathbfcal P}_{{\pmb\psi}_i} \Vert_{\mathcal{L}(\mathbb{H})}^2&=\sup_{\pmb{x}:\Vert\pmb{x}\Vert_\mathbb{H}=1}
		\Vert {\mathbfcal P}_{\hat{\pmb\psi}_i^{[\pmb\alpha]} }\pmb{x}- {\mathbfcal P}_{{\pmb\psi}_i}\pmb{x} \Vert_{\mathbb{H}}^2\\
		&=\sup_{\pmb{x}:\Vert\pmb{x}\Vert_\mathbb{H}=1}\left(
		\Vert {\mathbfcal P}_{\hat{\pmb\psi}_i^{[\pmb\alpha]} }\pmb{x}\Vert_{\mathbb{H}}^2+\Vert  {\mathbfcal P}_{{\pmb\psi}_i} \pmb{x}\Vert_{\mathbb{H}}^2-2\langle {\mathbfcal P}_{\hat{\pmb\psi}_i^{[\pmb\alpha]} } \pmb{x}, {\mathbfcal P}_{{\pmb\psi}_i} \pmb{x}\rangle_{\mathbb{H}}\right)\\
		&\leq 
		2-2_{}	\dfrac{\langle \hat{\pmb\psi}_i^{[\pmb\alpha]} , {\pmb\psi}_i \rangle_{\mathbb{H}}}{\Vert \hat{\pmb\psi}_i^{[\pmb\alpha]}\Vert_{\mathbb{H}}^2\Vert  {\pmb\psi}_i \Vert_{\mathbb{H}}^2}
		=2-2
		\langle \tilde{\pmb\psi}_i^{[\pmb\alpha]} , {\pmb\psi}_i \rangle_{\mathbb{H}}\to 0.
	\end{align*}	
\end{proof}
\newpage

\section{Supplementary Figures}

\begin{figure}[!th] 
    \centering
    \includegraphics[width=.9\textwidth]{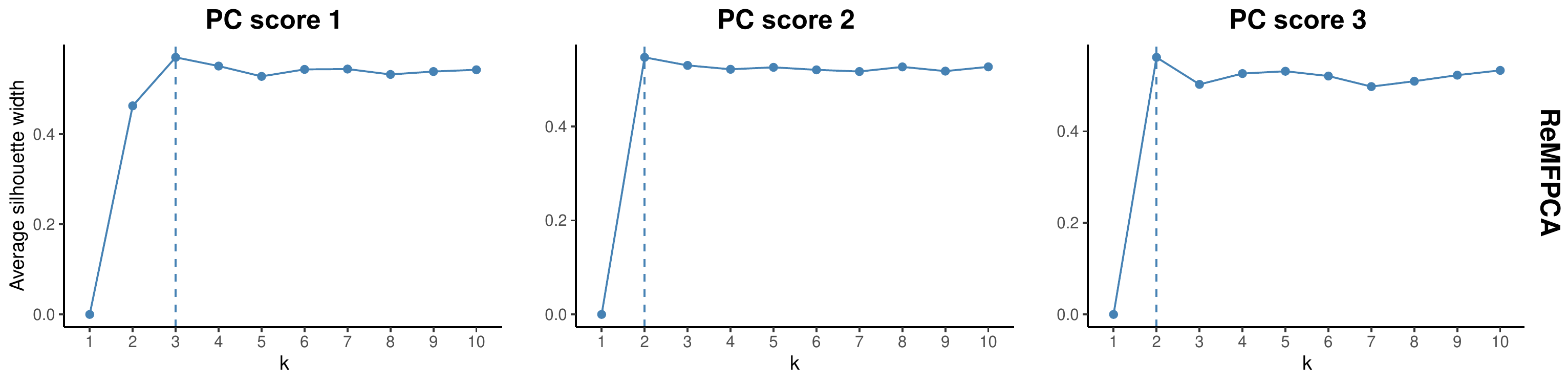}
    \includegraphics[width=.9\textwidth]{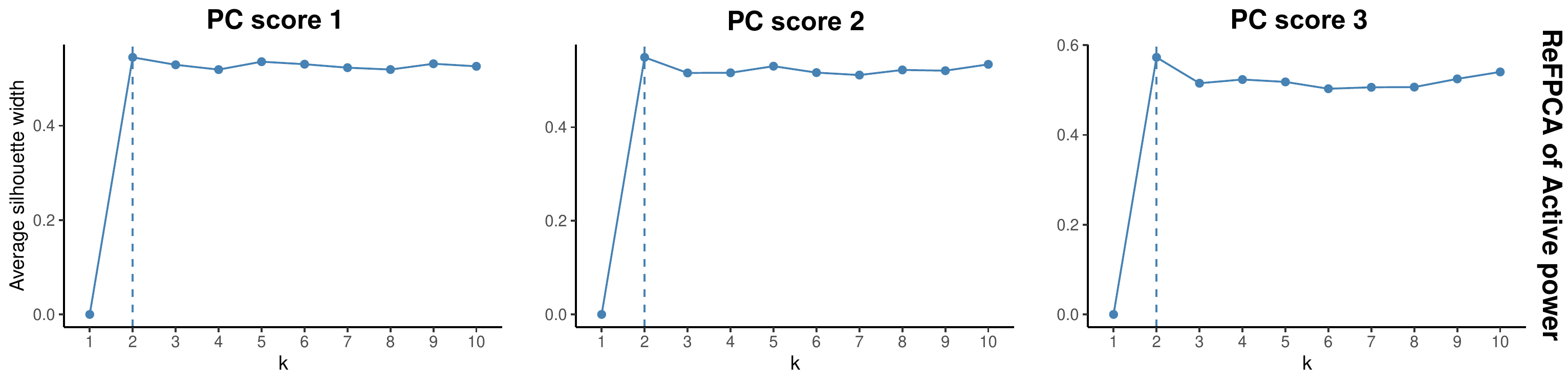}
    \includegraphics[width=.9\textwidth]{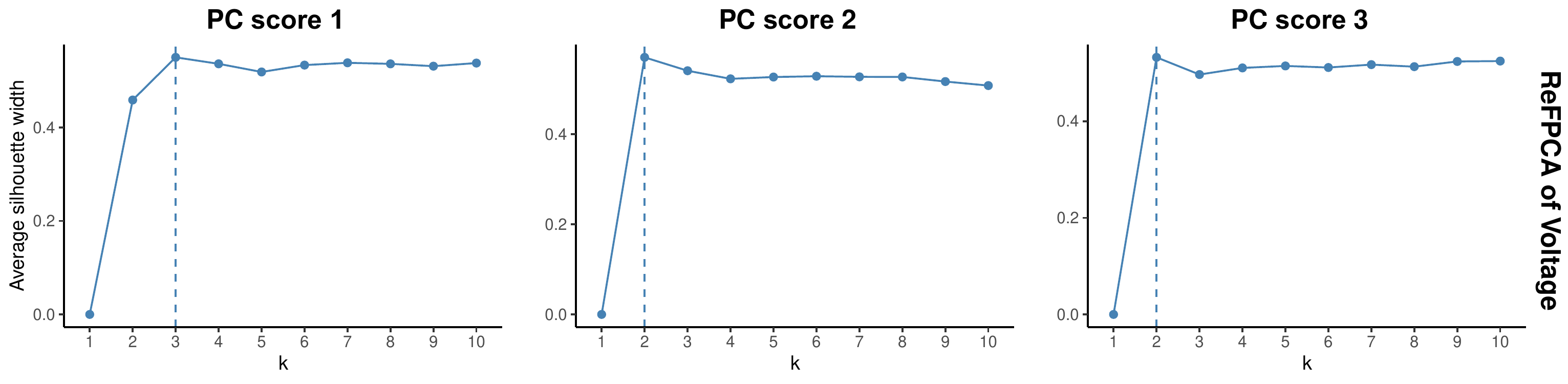}
  \caption{Silhouette score plots.}
  \label{silhouette}
\end{figure}

\begin{figure}[!bh] 
\centering
  \includegraphics[width=.87\textwidth]{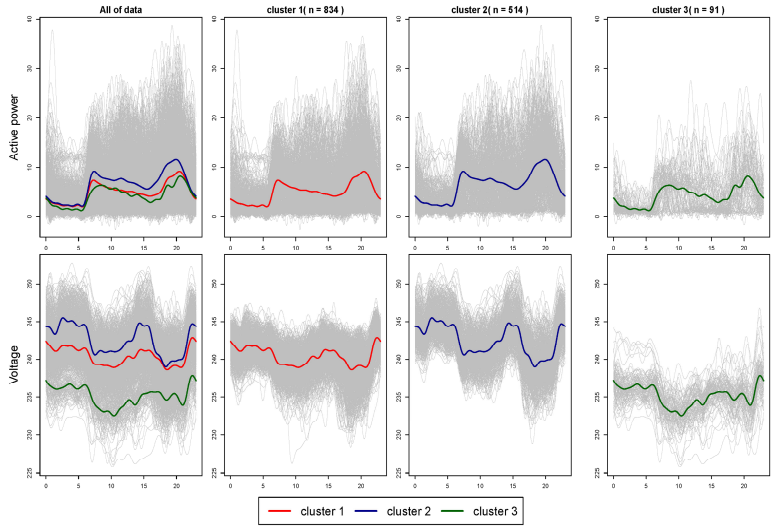}
  \caption{Clustering results based on first PC of ReMFPCA with estimated cluster mean curve.}
  \label{FPC1: cluster}
\end{figure}


\begin{figure}[H] 
  \begin{subfigure}[b]{\textwidth}
    \centering
    \includegraphics[width=.9\textwidth]{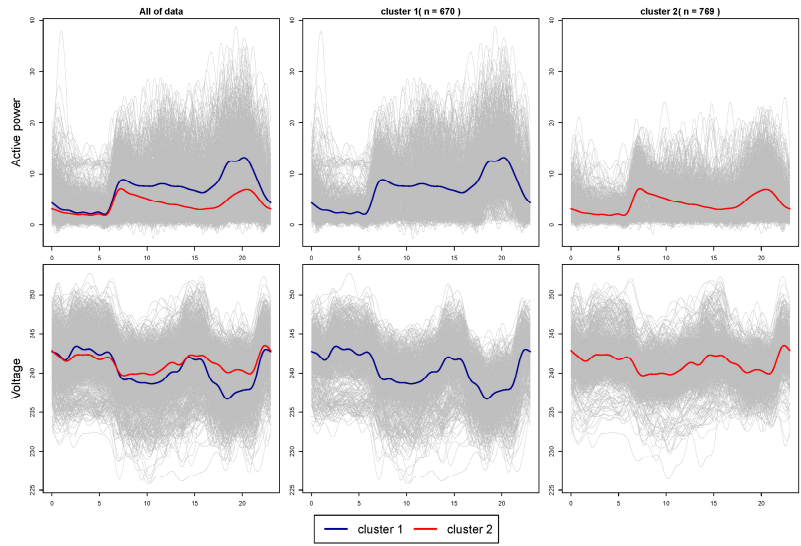}
    \caption{Estimated cluster mean curves}
    \label{FPC2 cluster:plot1}
  \end{subfigure}
  \begin{subfigure}[b]{\textwidth}
    \centering
    \includegraphics[width=.8\textwidth]{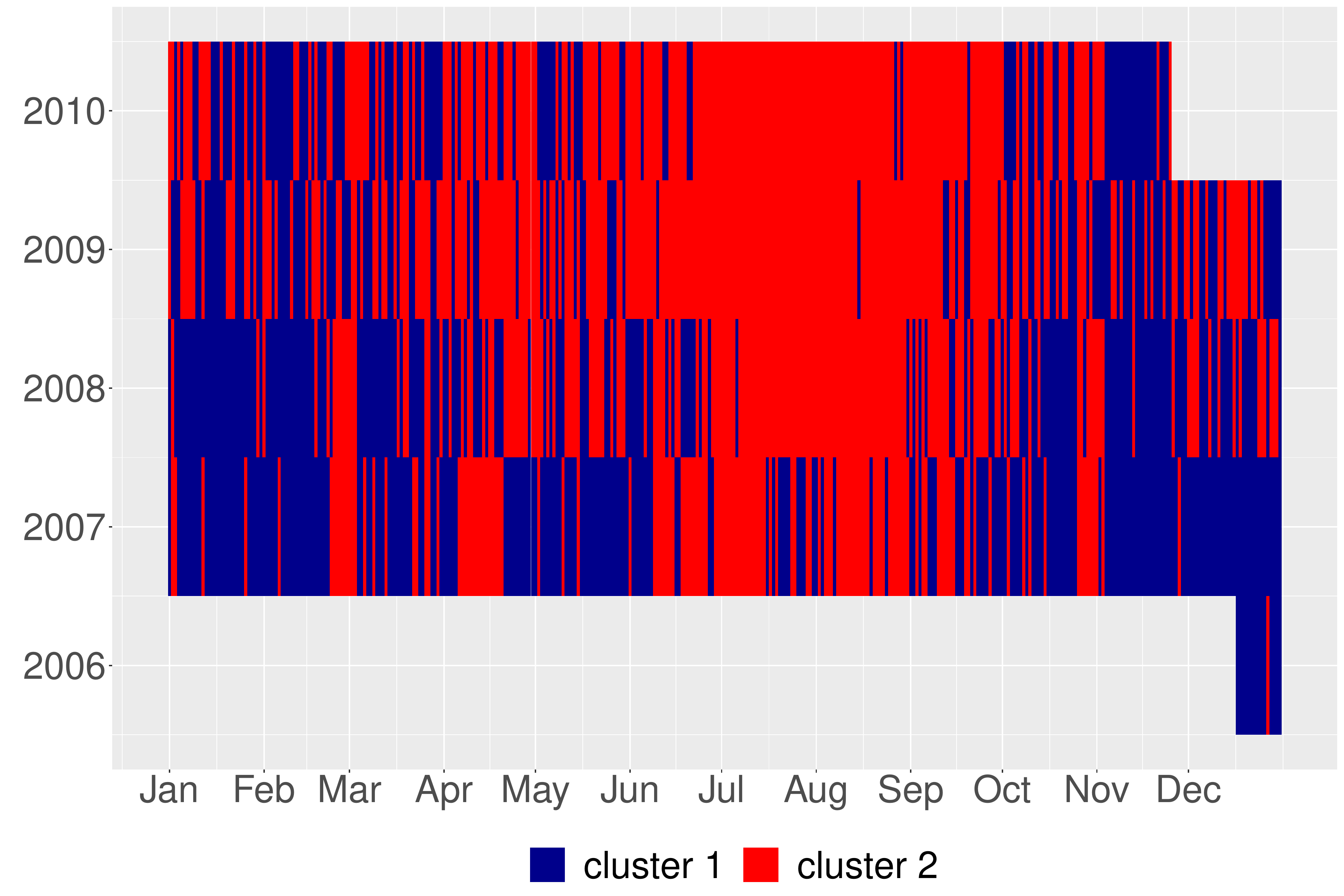}
    \caption{Clustering heatmap}
    \label{FPC2 cluster:plot2}
  \end{subfigure}
  \caption{Clustering results based on 2nd PC of ReMFPCA.}
  \label{FPC2 cluster}
\end{figure}

\bibliographystyle{apalike}	
\bibliography{Mybib}
\end{document}